\DeclareMathOperator{\Proj}{Proj}
\DeclareMathOperator{\Span}{span}
\DeclareMathOperator{\hull}{Hull}
\DeclareMathOperator{\proc}{Proc}
\begin{document}

\title{Formal Privacy Guarantees with Invariant Statistics}

\author{\name Young Hyun Cho \email cho472@purdue.edu \\
       \addr Department of Statistics\\
       Purdue University\\
       150 N University St, West Lafayette, IN 47907
       \AND
       \name Jordan Awan \email jawan@purdue.edu \\
       \addr Department of Statistics\\
       Purdue University\\
       150 N University St, West Lafayette, IN 47907}

\editor{}

\maketitle

\begin{abstract}
Motivated by the 2020 US Census products, this paper extends differential privacy (DP) to address the joint release of DP outputs and nonprivate statistics, referred to as invariant. Our framework, Semi-DP, redefines adjacency by focusing on datasets that conform to the given invariant, ensuring indistinguishability between adjacent datasets within invariant-conforming datasets. We further develop customized mechanisms that satisfy Semi-DP, including the Gaussian mechanism and the optimal \(K\)-norm mechanism for rank-deficient sensitivity spaces. Our framework is applied to contingency table analysis which is relevant to the 2020 US Census, illustrating how Semi-DP enables the release of private outputs given the one-way margins as the invariant. Additionally, we provide a privacy analysis of the 2020 US Decennial Census using the Semi-DP framework, revealing that the effective privacy guarantees are weaker than advertised. 
\end{abstract}

\begin{keywords} Differential privacy, Sensitivity space, $K$-norm mechanism, Contingency table analysis, US decennial census. \end{keywords}

\section{Introduction}\label{sec: introduction}

In the era of big data and pervasive digital tracking, safeguarding individual privacy has become increasingly critical. Differential Privacy (DP) \citep{dwork2006calibrating} has emerged as the gold standard for ensuring strong privacy protections while enabling meaningful data analysis. A key advantage of DP is its ability to provide individual-level protection, offering both data curators and users a straightforward understanding on the privacy protection. DP achieves this by ensuring that the outputs from two adjacent databases---databases that differ by a single entry---have similar distributions, with this similarity quantified by a privacy parameter. To maintain DP, data curators introduce calibrated noise based on the maximum difference in outputs that adjacent databases can produce, effectively masking the contribution of any individual entry. This privacy framework has been widely adopted by major tech companies such as Google \citep{erlingsson2014rappor}, Apple \citep{team2017learning}, and Microsoft \citep{ding2017collecting}, as well as in the public sector, where protecting sensitive information is crucial, exemplified by the practices of the US Census Bureau \citep{abowd20222020}.

While DP has proven effective in many contexts, real-world applications often demand the publication of true statistics on confidential data alongside DP outputs, particularly when such statistics are necessary or when the perceived public benefits are substantial. For instance, in the 2020 US Decennial Census, the US Census Bureau provided noisy demographic counts that satisfy DP but also disclosed certain counts without noise, such as the total resident populations of states for electoral representation purposes \citep{abowd20222020}. These accurate counts, referred to as invariants, reveal true information about the confidential data, potentially compromising privacy.

The joint release of DP outputs and invariants differ from standard DP scenarios, and privacy guarantees in these settings are still underexplored. Motivated by the 2020 Census, in \cite{gao2022subspace}, Subspace Differential Privacy (Subspace DP) was introduced and further studied in \cite{dharangutte2023integer}. It is applicable when the invariant is a linear transformation of a query, and considers DP on subspaces unaffected by the linear transformation. While Subspace DP provides some insight into what the Census mechanism protects, it is not the ideal framework for quantifying the privacy protection. Like traditional DP, it considers databases differing by a single entry, but these adjacent databases may yield different invariants, allowing an adversary to identify confidential data. This is particularly problematic when the invariants are count statistics, as seen in the 2020 US Census as databases differing by a single entry have different counts.

In response, we build upon the framework of semi-differential privacy (Semi-DP), initially introduced in \cite{awan2023canonical}, to address the challenges posed by invariant. A key observation is that the release of invariant constrains the set of possible databases that can be protected. To account for this, we explicitly consider invariant-conforming databases and redefine the adjacency relation to ensure indistinguishability within this reduced set. Our approach guarantees that the confidential data remains indistinguishable from any adjacent database within the set of invariant-conforming databases, even when both the invariants and the DP outputs are released jointly. Additionally, we introduce customized mechanism designs that data curators can use to publish specific invariants and DP outputs in accordance with our framework.

\subsection{Related Literature}
 Several studies have explored differential privacy in the context of public information, notably subspace differential privacy as introduced in \cite{gao2022subspace}, as well as the work of \cite{gong2020congenial} and \cite{seeman2022formal}. Both \cite{gao2022subspace} and \cite{gong2020congenial} assume that the invariants are linear transformations of the query which is privatized with a DP mechanism. This aligns well with settings like the 2020 Decennial Census where certain aggregate statistics are published without noise. However, this assumption does not apply in scenarios where no linear relationship exists between the DP query and the invariant. For example, consider a biomedical research setting where true demographic counts are released without noise while regression coefficients are privatized with DP. In such cases, the frameworks proposed in the prior works are not be applicable.

\cite{seeman2022formal} introduced the \((\epsilon,\{\Theta_{z}\}_{z \in \mathcal{Z}})\)-Pufferfish framework, which extends Pufferfish privacy \citep{kifer2014pufferfish} by incorporating a collection of mechanisms indexed by the realizations of a random variable \( Z \). This variable represents public information related to the confidential data that is not protected by the DP mechanism. While this approach offers nuanced privacy protection, it introduces complexity and practical challenges, making it difficult to implement and interpret. Furthermore, it does not operate strictly within the traditional DP framework.

Our work diverges from these approaches by focusing on invariant-conforming databases, where traditional DP adjacency relations may not hold. Building upon the Semi-DP example in \cite{awan2023canonical}, we draw inspiration from \cite{awan2021structure} and \cite{kim2022differentially} in designing mechanisms that inject noise based on sensitivity space. \cite{awan2021structure} introduced sensitivity space as a key tool for mechanism design, which we adapt to the invariant-conforming setting. Additionally, \cite{kim2022differentially} provided insights on reducing noise when the sensitivity space is rank-deficient, an idea we generalize to all common additive mechanisms.

Finally, to avoid confusion in terminology, it is important to note that \cite{lowy2023optimal} also adopted the term Semi-DP. However, their work fundamentally differs from ours. In \cite{lowy2023optimal}, Semi-DP refers to private model training using public data that does not require protection. In contrast, our notion of Semi-DP originated from \cite{awan2023canonical}, where it was developed as a tool to bound the power of differentially private uniformly most powerful(UMP) tests by allowing some summary statistics to be released without privacy protection. As suggested in \cite{awan2023canonical}, we demonstrate that Semi-DP is of independent interest, providing a novel framework for analyzing privacy protection in scenarios, which involve the joint release of invariant and DP mechanisms. 

\subsection{Our Contributions}

This paper addresses a critical gap in privacy research by introducing Semi-DP, a novel extension of DP tailored for scenarios where true statistics, or invariants, are released alongside DP outputs. Semi-DP redefines adjacency to operate within invariant-conforming datasets, thereby safeguarding sensitive information even when true statistics are disclosed. Our key contributions are threefold:

\begin{itemize}
    \item {\bf Extension of Semi-DP Framework:} Building upon the Semi-DP framework originally introduced in \cite{awan2023canonical}, we significantly extend its application to address the joint release of arbitrary DP outputs and invariants. Our extension provides a more comprehensive privacy analysis for scenarios where traditional DP models fall short, offering greater clarity in understanding privacy risks in complex data scenarios.

    \item {\bf Optimal Mechanism Design:} We develop customized mechanisms that satisfy the Semi-DP framework, including the optimal \(K\)-norm mechanism, which is particularly effective for rank-deficient sensitivity spaces often encountered due to invariant. Our approach retains the core structure of DP, allowing existing DP methods like Laplace, Gaussian, and \(K\)-norm mechanisms to be adapted with minimal modifications. By leveraging the sensitivity space, our mechanisms minimize noise addition while preserving privacy. Moreover, we propose a private uniformly most powerful unbiased (UMPU) test for the odds ratio in a $2 \times 2$ contingency table that satisfies Semi-DP.

    \item {\bf Application and Analysis on Census Data:} We apply our Semi-DP framework to contingency table analysis, which is particularly relevant to the 2020 US Census, demonstrating that private outputs can be released while preserving true marginal counts. Furthermore, we conduct a detailed privacy analysis of the 2020 US Decennial Census to illustrate the practical applicability of our framework in real-world scenarios.

\end{itemize}

Through these contributions, our work not only generalizes existing DP methods but also provides a versatile and practical solution for privacy analysis in scenarios with complex relationships between invariants and privatized outputs.

\subsection{Paper Organization}
In Section \ref{sec: preliminary}, we review preliminaries on DP, including adjacency relations, sensitivity space, and key mechanisms like Gaussian mechanism and $K$-norm mechanism. In Section \ref{sec: semi DP}, we introduce Semi-Differential Privacy (Semi-DP), an extension of differential privacy designed to address the joint release of DP outputs and invariant. In Section \ref{sec: mechanism design}, we develop tailored mechanisms for Semi-DP, such as Gaussian mechanism and $K$-norm mechanisms, focusing on optimizing noise through sensitivity space analysis. In Section \ref{sec: contingency table}, we apply Semi-DP mechanisms to output a noisy contingency table given the one-way margins as invariant. In Section \ref{sec: census}, we provide a privacy analysis on 2020 US Census based on our Semi-DP framework. Finally, in Section 7, we conclude this paper with some discussion. We relegate all the proofs to Appendix \ref{sec: appendix proof}.

\section{Preliminaries on Differential Privacy}\label{sec: preliminary}

In this section, we review the fundamental concepts of DP, focusing on the notions of adjacency and the role of hypothesis testing in quantifying privacy protection. We also discuss key properties such as group privacy, composition, and post-processing invariance. Additionally, we introduce the concept of sensitivity space and its significance in guiding mechanism design, particularly in the context of adding noise to queries to ensure DP.

\subsection{Adjacency in Differential Privacy}

DP relies on an adjacency relation on databases, which quantifies whether two databases are ``differing in one entry.'' Adjacency is crucial because DP requires that the outputs generated from two adjacent datasets to be similar. In many DP applications, adjacency is defined in terms of a metric on a given dataspace. A common example is when the input is tabular data, and the Hamming distance is used. If the size of the tabular data is known to be \(n\), the dataspace is \(\mathcal{D} = \mathcal{X}^{n}\), where each individual contribution is denoted by \(\mathcal{X}\). The adjacency relation is then defined using the Hamming distance as \(A(X,X’) = \mathbbm{1}_{\{H(X,X’) = 1\}}\).

Unlike traditional DP, our approach focuses on a subset of the entire dataspace, specifically the invariant-conforming databases that share the same invariant values as the confidential data. Since we define adjacency only among these databases, it is helpful to outline the distance metrics we will consider for this purpose.

\begin{definition}
    Let \(\mathcal{D}\) be a dataspace and consider a metric \(d: \mathcal{D} \times \mathcal{D} \rightarrow \mathbb{Z}^{\geq 0}\). Define an \emph{adjacency function} \(A(X,X'): \mathcal{D}\times \mathcal{D}\rightarrow \{0,1\}\) by \(A(X,X') = \mathbbm{1}_{\{d(X,X')=1\}}\). We say that a metric \(d\) is an \emph{adjacency metric} if \(d(X, X') = a\) is equivalent to the existence of a sequence of databases \(X_{0} = X, X_{1}, \dots, X_{a} = X'\) such that \(A(X_{i-1}, X_{i}) = 1\) for all \(i \in [a]\), and $a$ is the smallest value for which this is possible.
\end{definition}

\begin{example}
Most widely accepted metrics in DP literature indeed are adjacent metrics: For Hamming distance $H$ defined on dataspace $\mathcal{X}^{n}$, \(H(X, X')\) represents the number of different entries between two datasets having the same total number of entries. Then for $X,X'$ such that $H(X,X')=a$, there exists a sequence of datasets \(X_0 = X, X_1, \dots, X_a = X'\), where for each pair \(X_{i-1}\) and \(X_i\), $X_{i}$ can be obtained by switching an entry in $X_{i-1}$ to some other entry, and so satisfying \(A(X_{i-1}, X_{i}) = 1\). This sequence confirms that the Hamming distance is an adjacency metric.

On the other hand, when the total number of rows is not known, it is usual to consider the dataspace to be \(\mathcal{D} = \bigcup_{i = 0}^{\infty}\mathcal{X}^{i}\). The adjacency relation in this case is typically defined as \(A(X,X’) = \mathbbm{1}_{\{X \Delta X’ = 1\}}\), where \(X \Delta X’\) represents the symmetric difference between two datasets. The criterion here is whether one dataset can be obtained from the other by adding or removing a single entry. In this case, \(X \Delta X' = a\) indicates that \(X'\) can be obtained by adding/deleting a total of \(a\) rows from \(X\). We see that the symmetric difference distance is also an adjacency metric.
\end{example}

\begin{remark}
    In DP, the adjacency relation is typically defined between databases that differ by a distance of 1, capturing the idea that they vary by only a single entry. The focus on such adjacent databases enables the interpretation of DP as providing individual-level protection. This individual-level protection enhances the interpretability of DP, allowing both the data curator deploying the DP mechanism and the individuals whose data is included to clearly understand what aspects of their data are being protected. In this paper, we will design an adjacency function for Semi-DP with the same goal of individual-level interpretability as well.
\end{remark}

\subsection{Differential Privacy via Hypothesis Testing}

Once adjacency is understood, we can explore the privacy protection provided by DP. Over time, various DP variants have introduced different methods for quantifying similarity between distributions, such as likelihood ratios \citep{dwork2006differential}, R\'enyi divergence \citep{bun2016concentrated,mironov2017renyi}, and hypothesis testing \citep{dong2022gaussian}. Hypothesis testing, in particular, is a powerful tool for understanding DP, where the objective of the adversary is to distinguish between two adjacent input datasets, $X$ and $X'$. The quantification of privacy protection relies on bounding the type I and type II errors of the most powerful test for this hypothesis. In \cite{dong2022gaussian}, the tradeoff function $T(P,Q): [0,1] \rightarrow [0,1]$ between distributions $P$ and $Q$ is introduced, defined as $T(P,Q)(\alpha) = \inf \{1-\mathbb{E}_{Q}\phi : \mathbb{E}_{P}(\phi) \leq 1 - \alpha\}$, where the infimum is taken over all measurable tests $\phi$.\footnote{The tradeoff function in \cite{dong2022gaussian} was initially defined as the smallest type II error given a type I error of $\alpha$. We follow \cite{awan2022log} and \cite{awan2023optimizing}, who use type I error $1-\alpha$ as this simplifies the formula for group privacy.} It provides the optimal type II error for testing $H_{0} = P$ versus $H_{1} = Q$ at type I error $1-\alpha$, thereby capturing the difficulty of distinguishing between $P$ and $Q$.

\begin{definition}\label{def: conventional f dp}
    ($f$-DP: \cite{dong2022gaussian}) Let \(\mathcal{D}\) be a dataspace, \(A: \mathcal{D} \times \mathcal{D} \rightarrow \{0,1\}\) be an adjacency function, and \(f\) be a tradeoff function. A mechanism \(M(X)\) is \((\mathcal{D},A,f)\)-DP if 
    \begin{equation*}
        T(M(X),M(X')) \geq f,
    \end{equation*}
    for all \(X, X' \in \mathcal{D}\) such that \(A(X,X') =1\).
\end{definition}

In Definition \ref{def: conventional f dp}, we explicitly include the underlying dataspace and adjacency function, unlike the typical expression that focuses solely on the tradeoff function $f$ as the privacy parameter. This is because our framework will account for a invariants and specifically considers adjacent databases among invariant-conforming databases. Therefore, it is essential to explicitly define the dataspace and adjacency function to accurately capture the privacy guarantee in our framework. 

Intuitively speaking, a mechanism \(M(X)\) satisfies \(f\)-DP, where \(f = T(P,Q)\), if testing \(H_{0}: X\) versus \(H_{1}: X'\) is at least as difficult as testing \(H_{0}: P\) versus \(H_{1}: Q\). The tradeoff function \(f\) serves as the privacy parameter, guiding the level of privacy protection. Without loss of generality we can assume $f$ is \emph{symmetric}, meaning that if $f = T(P,Q)$, then $f = T(Q,P)$. This is due to the fact that adjacency of databases is a symmetric relation \citep{dong2022gaussian}. Moreover, a function $f: [0,1] \rightarrow [0,1]$ is a tradeoff function if and only if $f$ is convex, continuous, non-decreasing, and $f(x) \leq x$ for all $x \in [0,1]$. We say that a tradeoff function $f$ is \emph{nontrivial} if $f(\alpha) < \alpha$ for some $\alpha \in (0,1)$. 

Two important special cases of \(f\)-DP are \((\mathcal{D},A,f_{\epsilon,\delta})\) and \((\mathcal{D},A,G_{\mu})\), which correspond to $(\epsilon,\delta)$-DP and $\mu$-Gaussian DP(GDP), respectively. For \(\epsilon \geq 0\) and \(\delta \in [0,1]\),  we define \(f_{\epsilon,\delta}(\alpha) = \max\{0, 1 - \delta - e^{\epsilon} + e^{\epsilon}\alpha, e^{-\epsilon}(\alpha - \delta)\}\), and for \(\mu \geq 0\), we define \(G_{\mu} = T(N(0,1),N(\mu,1))\). Moreover, a lossless conversion from \((\mathcal{D}, A, G_{\mu})\)-DP to a family of \((\mathcal{D},A,f_{\epsilon,\delta})\)-DP guarantees was developed in \cite{balle2018improving}: a mechanism is \((\mathcal{D}, A, G_{\mu})\)-DP if and only if it is \((\mathcal{D},A,f_{\epsilon,\delta(\epsilon)})\)-DP for all $\epsilon \geq 0$, where $\delta(\epsilon) = \Phi \left(-\frac{\epsilon}{\mu}+\frac{\mu}{2}\right) - e^{\epsilon} \Phi \left(-\frac{\epsilon}{\mu}-\frac{\mu}{2}\right)$. 

DP has the following three properties:\\

\noindent
{\bf Group Privacy:} Group privacy addresses the protection of privacy when a group of size \(k\) in a database is replaced by another group of the same size. When using an adjacency metric \(d\), group privacy concerns data pairs where \(d(X, X') \leq k\). Representing this through the adjacency function, we define \(A_{k}(X, X’) = \mathbbm{1}_{\{d(X, X’) \leq k\}}\) to capture the indistinguishability between \(X\) and \(X'\) under these conditions. Additionally, the level of privacy protection diminishes as the group size \(k\) increases. In particular, for a mechanism \(M\) that satisfies \((\mathcal{D}, A_{1}, f)\)-DP, it satisfies \((\mathcal{D}, A_{k}, f^{\circ k})\)-DP, where \(f^{\circ k}\) represents the functional composition of \(f\) applied \(k\) times. In the case of GDP, there is a particularly convenient formula: a mechanism M that satisfies $(\mathcal{D},A_{1},G_\mu)$-DP satisfies $(\mathcal{D},A_{k},G_{k\mu})$-DP.\\

\noindent
{\bf Composition:} The \(f\)-DP framework can quantify the cumulative privacy cost for the sequential release of multiple DP outputs. To model this process, let \(M_1: \mathcal{D} \to Y_1\) be the first mechanism applied to a dataset \(X\), and define \(Y_1 = M_1(X)\). The output \(Y_1\) then serves as an input to a second mechanism \(M_2: \mathcal{D} \times Y_1 \to Y_2\), which generates \(Y_2 = M_2(X, y_1)\). This recursive process continues for \(m\) mechanisms, such that the \(i\)-th mechanism \(M_i: \mathcal{D} \times Y_1 \times \cdots \times Y_{i-1} \to Y_i\) produces \(Y_i = M_i(X, Y_1, \dots, Y_{i-1})\). 
Thus, the \(m\)-fold composed mechanism \(M: \mathcal{D} \to Y_1 \times \cdots \times Y_m\) is defined as \(M(X) = (Y_1, Y_2, \dots, Y_m)\), where each \(M_i\) satisfies \((\mathcal{D}, A, f_i)\)-DP. The entire composition $M(X)$ satisfies \((\mathcal{D}, A, f_1 \otimes \cdots \otimes f_m)\)-DP, where \(\otimes\) denotes the \emph{tensor product} of the tradeoff functions. Specifically, if \(f = T(P_1, Q_1)\) and \(g = T(P_2, Q_2)\), then \(f \otimes g = T(P_1 \times P_2, Q_1 \times Q_2)\), quantifying the overall privacy guarantee for the composed mechanism. \\

\noindent
{\bf Invariance to Post-Processing:} Additional data-independent transformations on the DP output does not weaken the DP guarantee. Formally, if $M: \mathcal{D} \rightarrow \mathcal{Y}$ is $(\mathcal{D},A,f)$-DP and $\proc: \mathcal{Y} \rightarrow \mathcal{Z}$ a (possibly randomized) function, $\proc(M(X)): \mathcal{D} \rightarrow \mathcal{Z}$ is also $(\mathcal{D},A,f)$-DP. 

\subsection{Mechanism Design}
The most widely used approach to achieve DP is to add independent noise to a query. To make it difficult to distinguish two adjacent input databases, the additive noise should be large enough to mask a difference in the query due to adjacent databases. To this end, the key tool for calibrating the noise is the sensitivity space and sensitivity.

\begin{definition}\label{def: sensitivity space} (Sensitivity Space: \cite{awan2021structure}) Let $\phi: \mathcal{D} \rightarrow \mathbb{R}^{d}$ be any function. The \emph{sensitivity space} of $\phi$ with respect to adjacency function $A$ is 
\begin{equation}\label{equ: sensitivity space}
    S_{\phi}^A = \left\{\phi(X)-\phi(X'): A(X,X') =1 \text{ where } X,X' \in \mathcal{D} \right\}.
\end{equation}
When it is clear from the context, we may omit the super- and subscripts of $S$.
\end{definition}

The sensitivity space in Definition \ref{def: sensitivity space} refers to the set of all possible differences in query outputs resulting from changing a dataset to an adjacent counterpart. It helps quantify the maximum deviation that adjacent datasets can have on the query, thereby guiding the amount of noise needed to ensure privacy.

\begin{definition}\label{def: k-norm sensitivity} ($K$-norm Sensitivity: \cite{awan2021structure}). For a norm $\Vert \cdot \Vert_{K}$, whose unit ball is $K = \{x:\Vert x \Vert_{K} \leq 1\}$, the $K$-norm sensitivity of $\phi$ with respect to adjacency function $A$ is $\Delta_{K}(\phi;A) = \sup_{u \in S_{\phi}^A}\Vert u \Vert_{K}.$ For $p \in [1,\infty]$, the $l_{p}$-sensitivity of $\phi$ is  $\Delta_{p}(\phi;A) = \sup_{u \in S_{\phi}^A}\Vert u \Vert_{p}$. When it is clear from the context, we may omit the specification of $A$ and simply write $\Delta_K(\phi)$ or $\Delta_p(\phi)$.
\end{definition}

The sensitivity of $\phi$, geometrically, corresponds to the largest radius of $S_{\phi}^A$, measured by the norm of interest, indicating the maximum change in $\phi$ due to adjacent databases \citep{awan2021structure}. Gaussian additive noise proportional to $\ell_{2}$-sensitivity leads to $\mu$-GDP, while the Laplace additive noise calibrated by $\ell_{1}$-sensitivity satisfies $\epsilon$-DP. Moreover, a multivariate extension of the Laplace mechanism are the $K$-norm mechanisms, which substitutes the $\ell_{1}$-norm with another norm.

\begin{proposition}\label{prop: standard gaussian mechanism} (Gaussian Mechanism:\cite{dong2022gaussian}). Define the Gaussian mechanism that operates on a query $\phi$ as $M(X) = \phi(X) + \xi$, where $\xi \sim N(0, \sigma I_{d})$ for some $\sigma \geq \mu^{-1}\Delta_{2}(\phi;A)$. Then $M$ is $(\mathcal{D},A,G_{\mu})$-DP.
\end{proposition}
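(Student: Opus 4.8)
The plan is to reduce the multivariate distinguishing problem to the univariate Gaussian case and then invoke the definition of $\ell_2$-sensitivity. Fix any pair of adjacent databases with $A(X,X')=1$. By construction $M(X) \sim N(\phi(X), \sigma^2 I_d)$ and $M(X') \sim N(\phi(X'), \sigma^2 I_d)$ are multivariate Gaussians sharing the common covariance $\sigma^2 I_d$ and differing only in their means, and the goal is to establish $T(M(X),M(X')) \geq G_\mu$. If $\phi(X)=\phi(X')$ the two laws coincide and the bound is immediate, so I would assume the mean difference is nonzero.

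The central step is to identify the tradeoff function between two equal-covariance Gaussians. First I would invoke the Neyman--Pearson lemma: the most powerful test for $H_0: N(\phi(X),\sigma^2 I_d)$ versus $H_1: N(\phi(X'),\sigma^2 I_d)$ thresholds the likelihood ratio, which is a monotone function of the linear statistic $\langle y, \phi(X')-\phi(X)\rangle$. Projecting the observation onto the unit mean-difference direction $v = (\phi(X')-\phi(X))/\lVert \phi(X')-\phi(X)\rVert_2$ collapses the problem to distinguishing the scalar Gaussians $N(\langle\phi(X),v\rangle, \sigma^2)$ and $N(\langle\phi(X'),v\rangle, \sigma^2)$, whose means differ by exactly $\lVert\phi(X')-\phi(X)\rVert_2$. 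After rescaling by $\sigma$ and using the translation and scale invariance of the tradeoff function, this is precisely the problem of testing $N(0,1)$ against $N(\lVert\phi(X)-\phi(X')\rVert_2/\sigma, 1)$, so that $T(M(X),M(X')) = G_{\lVert\phi(X)-\phi(X')\rVert_2/\sigma}$. I expect this reduction to be the main obstacle, since it requires arguing that no test exploiting the remaining $d-1$ coordinates can improve on the projected statistic; this is the sufficiency of the one-dimensional projection for the common-covariance Gaussian testing problem, established in \cite{dong2022gaussian}.

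It then remains to compare the index $\lVert\phi(X)-\phi(X')\rVert_2/\sigma$ with $\mu$. Since $\mu \mapsto G_\mu$ is non-increasing pointwise (a larger mean separation makes the hypotheses easier to tell apart and lowers the minimum type II error at each type I level), I would conclude $G_{\lVert\phi(X)-\phi(X')\rVert_2/\sigma} \geq G_\mu$ by verifying $\lVert\phi(X)-\phi(X')\rVert_2/\sigma \leq \mu$. Because $A(X,X')=1$, the difference $\phi(X)-\phi(X')$ belongs to the sensitivity space $S_\phi^A$, so by Definition \ref{def: k-norm sensitivity} we have $\lVert\phi(X)-\phi(X')\rVert_2 \leq \Delta_2(\phi;A)$; combining this with the hypothesis $\sigma \geq \mu^{-1}\Delta_2(\phi;A)$ yields $\lVert\phi(X)-\phi(X')\rVert_2/\sigma \leq \Delta_2(\phi;A)/\sigma \leq \mu$, as required. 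As the adjacent pair was arbitrary, $T(M(X),M(X')) \geq G_\mu$ holds for every adjacent $X,X'$, and hence $M$ is $(\mathcal{D},A,G_\mu)$-DP.
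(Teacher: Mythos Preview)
Your argument is correct. The paper does not supply its own proof of this proposition---it is quoted from \cite{dong2022gaussian}---but the closest analogue is the proof of Proposition~\ref{thm: gaussian mech guarantee}, which invokes Lemma~\ref{thm: gaussian tradeoff function lemma} (from \cite{kim2022differentially}) stating $T\big(N_d(\mu_1,\Sigma),N_d(\mu_2,\Sigma)\big)=G_{\lVert \Sigma^{-1/2}(\mu_2-\mu_1)\rVert_2}$ as a black box and then uses the monotonicity $0\le a\le b \iff G_a\ge G_b$ exactly as you do. Your proposal follows the same skeleton but unpacks that lemma via the Neyman--Pearson reduction to the one-dimensional projection, so the two routes coincide once the tradeoff identity is established.
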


\begin{proposition}\label{prop: K-norm mech}($K$-Norm Mechanism: \cite{hardt2010geometry}). 
Let $\Vert \cdot \Vert_{K}$ be any norm on $\mathbb{R}^{d}$ and let $K = \{x : \Vert x \Vert_{K} \leq 1\}$ be its unit ball. Let $\Delta_{K}(\phi;A) = \sum_{u \in S_{\phi}^A} \Vert u \Vert_{K}$. Let $V$ be a random variable in $\mathbb{R}^{d}$, with density $f_{V}(v) = \frac{\exp\left(\frac{\epsilon}{\Delta}\Vert v \Vert_{K}\right)}{\Gamma(d+1)\lambda\left(\frac{\Delta}{\epsilon}\right)}$, where $\Delta_{K}(\phi;A) \leq \Delta < \infty$. Then releasing $\phi(X) +V$ satisfies $(\mathcal{D},A,f_{\epsilon,0})$-DP.
\end{proposition}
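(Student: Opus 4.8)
The plan is to reduce the claim to a pointwise bound on the ratio of the output densities, which is the standard characterization of pure $\epsilon$-DP. First I would invoke the fact, recalled in the excerpt, that $(\mathcal{D},A,f_{\epsilon,\delta})$-DP coincides with $(\epsilon,\delta)$-DP; here $\delta = 0$, so it suffices to establish classical $\epsilon$-DP. By the Neyman--Pearson lemma, $T(M(X),M(X')) \geq f_{\epsilon,0}$ holds for every adjacent pair once we show that for all $X,X'$ with $A(X,X')=1$ and all points $y$ in the output space, the likelihood ratio of $M(X)$ to $M(X')$ is bounded by $e^{\epsilon}$; the matching lower bound $e^{-\epsilon}$ then follows automatically from the symmetry of the adjacency relation.

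Next I would write the density of $M(X) = \phi(X) + V$ evaluated at $y$ as $f_V(y - \phi(X))$, and likewise $f_V(y - \phi(X'))$ for the neighbor. The crucial structural observation is that the additive noise $V$ is drawn from the same distribution regardless of the input database, so the normalizing constant $\Gamma(d+1)\lambda(\Delta/\epsilon)$ cancels in the ratio. Reading the density as decaying in $\Vert v \Vert_K$, this leaves
\[
\frac{f_{M(X)}(y)}{f_{M(X')}(y)} = \exp\!\left(\frac{\epsilon}{\Delta}\bigl(\Vert y - \phi(X')\Vert_K - \Vert y - \phi(X)\Vert_K\bigr)\right).
\]

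The key step is then to apply the reverse triangle inequality for the norm $\Vert \cdot \Vert_K$, taking $a = y - \phi(X')$ and $b = y - \phi(X)$, which gives $\Vert y - \phi(X')\Vert_K - \Vert y - \phi(X)\Vert_K \leq \Vert \phi(X) - \phi(X')\Vert_K$. Since $X$ and $X'$ are adjacent, the displacement $\phi(X) - \phi(X')$ lies in the sensitivity space $S_\phi^A$ by Definition \ref{def: sensitivity space}, so its $K$-norm is at most $\Delta_K(\phi;A) \leq \Delta$ by Definition \ref{def: k-norm sensitivity}. Substituting yields the ratio bound $\exp\bigl((\epsilon/\Delta)\cdot\Delta\bigr) = e^{\epsilon}$, which is exactly the pointwise guarantee needed.

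The main obstacle here is conceptual rather than computational: one must recognize that the data-independence of the noise is what makes the normalizing constants cancel, and that the reverse triangle inequality is precisely the device that converts a difference of two norms into the single norm of a vector that is guaranteed to live in the sensitivity space. Once those two observations are in place, the sensitivity bound $\Delta_K(\phi;A) \leq \Delta$ closes the estimate immediately, and the passage from a pointwise likelihood-ratio bound to the tradeoff-function statement $T(M(X),M(X')) \geq f_{\epsilon,0}$ is the standard equivalence between pure $\epsilon$-DP and $f_{\epsilon,0}$-DP.
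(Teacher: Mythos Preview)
The paper does not actually prove this proposition: it is a preliminary result cited from \cite{hardt2010geometry} and stated without proof, so there is no ``paper's own proof'' to compare against. Your argument is the standard one and is correct---reduce $f_{\epsilon,0}$-DP to a pointwise likelihood-ratio bound, cancel the data-independent normalizing constant, apply the reverse triangle inequality for $\Vert\cdot\Vert_K$, and close with the sensitivity bound $\Vert\phi(X)-\phi(X')\Vert_K \leq \Delta_K(\phi;A) \leq \Delta$.

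One minor remark: you were right to read the density as \emph{decaying} in $\Vert v\Vert_K$; the statement as printed has two evident typos (the exponent should carry a minus sign, and the sensitivity should be a $\sup$ rather than a $\sum$, consistent with Definition~\ref{def: k-norm sensitivity}). Your proof implicitly corrects both, which is the only sensible reading.
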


\cite{awan2021structure} proposed a criteria to compare $K$-norm mechanisms:

\begin{definition}\label{def: optimal K-norm mech}(Containment Order and Optimal $K$-norm mechanism \citep{awan2021structure}). Let $V$ and $W$ be two random variables on $\mathbb{R}^{d}$ with densities $f_{V}(v) = c \exp\left(-\frac{\epsilon}{\Delta_{K}}\Vert v \Vert_{K}\right)$ and $f_{W}(w)= c \exp\left(-\frac{\epsilon}{\Delta_{H}}\Vert w \Vert_{H}\right)$. We say that $V$ is preferred over $W$ in the containment order if $\Delta_{K}\cdot K \subset \Delta_{H}\cdot H$. 

Moreover, We define the optimal $K$-norm mechanism as one that adds noise from a random variable \(V_{K}\) with density \(f_{V_{K}}(v) = c \exp\left(-\frac{\epsilon}{\Delta_{K}}\Vert v \Vert_{K}\right)\), where the norm ball \(\Delta_{K} \cdot K\) is the smallest in containment order among all norm balls that contain the sensitivity space \(S\). \end{definition}

In \cite{awan2021structure}, they showed that if $V$ is preferred over $W$ in the containment order, then $V$ has smaller entropy, has smaller conditional variance in any directions, and is stochastically tighter about its center. They also proved that the optimal $K$-norm mechanism is based on the convex hull of the sensitivity space, provided that the sensitivity space is full-rank, i.e., dim(span($S_{\phi}^A))=d$. Notably, $K_{\phi}^A := \hull(S_{\phi}^A)$ is shown to be a norm ball in $\mathbb{R}^{d}$ so that the norm $\Vert \cdot \Vert_{K_{\phi}^A}$ is well-defined the corresponding $K$-norm mechanism is optimal.

\begin{proposition}(Optimal $K$-norm mechanism for full-rank sensitivity space \citep{awan2021structure}).
    Let $\phi: \mathcal{D} \rightarrow \mathbb{R}^{d}$ such that $S_{\phi}^A$ is bounded and $\Span(S_{\phi}^A) = \mathbb{R}^{d}$. Then the $K$-norm mechanism using norm $\lVert \cdot \rVert_{K_{\phi}^A}$ and sensitivity $\Delta_K(\phi;A)=1$ is the optimal $K$-norm mechanism.    
\end{proposition}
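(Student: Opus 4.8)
The plan is to verify the two assertions implicit in the statement: that $K_\phi^A := \hull(S_\phi^A)$ is genuinely the unit ball of a norm on $\mathbb{R}^d$, and that the corresponding sensitivity-scaled noise ball is the smallest, in the containment order, among all norm balls enclosing the sensitivity space. Write $S = S_\phi^A$ throughout.

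First I would show that $K := \hull(S)$ (passing to its closure if one insists on a closed ball, which does not affect the argument) is the unit ball of a norm, i.e.\ that it is convex, centrally symmetric, bounded, and contains the origin in its interior; the Minkowski gauge of such a body is then a norm with unit ball $K$. Convexity is built into $\hull$, and boundedness is assumed. Central symmetry $K = -K$ follows from the symmetry of adjacency, $A(X,X')=1 \iff A(X',X)=1$, so that $u = \phi(X)-\phi(X') \in S$ forces $-u \in S$, giving $S = -S$ and hence $\hull(S) = -\hull(S)$. The one delicate point is that $0 \in \operatorname{int}(K)$, and this is exactly where the hypothesis $\Span(S) = \mathbb{R}^d$ is used: picking linearly independent $u_1,\dots,u_d \in S$ and invoking symmetry, the cross-polytope $\hull\{\pm u_1,\dots,\pm u_d\} \subseteq K$ is $d$-dimensional and contains a neighborhood of the origin, so $K$ does too.

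Next I would compute the sensitivity of $\phi$ under $\lVert \cdot \rVert_{K_\phi^A}$ and show it equals $1$, so that $\Delta_K(\phi;A)\cdot K_\phi^A = \hull(S)$. Since $S \subseteq K$ we have $\lVert u \rVert_K \le 1$ for all $u \in S$, whence $\Delta_K(\phi;A) \le 1$; and if the supremum were some $s<1$ then $S \subseteq sK$ would give $K = \hull(S) \subseteq sK$, impossible for a bounded body with $0$ in its interior (iterating yields $K \subseteq s^n K \to \{0\}$). Finally, for optimality, any competing $K$-norm mechanism has unit ball $H$ and sensitivity $\Delta_H := \sup_{u \in S}\lVert u \rVert_H$, so by definition $S \subseteq \Delta_H \cdot H$; since $\Delta_H \cdot H$ is convex it then contains $\hull(S) = \Delta_K(\phi;A)\cdot K_\phi^A$. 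This says our noise ball is contained in that of every competitor, i.e.\ it is preferred in the containment order, which establishes optimality. I expect the norm-ball verification---specifically inferring full-dimensionality of $\hull(S)$ from the spanning hypothesis---to be the main obstacle, whereas the optimality itself reduces to the elementary fact that the convex hull is the smallest convex set containing $S$.
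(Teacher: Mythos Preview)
Your argument is correct. Note that this proposition is quoted from \cite{awan2021structure} and is not proved in the present paper; the paper instead establishes the rank-deficient generalization in Lemma~\ref{thm: convex hull norm ball} and Theorem~\ref{thm: semi dp K norm mech}, and your approach matches those proofs closely. The paper likewise gets convexity and boundedness for free, symmetry from the symmetry of the adjacency relation, and full-dimensionality from the spanning hypothesis---the only cosmetic difference is that the paper phrases the last condition as the \emph{absorbing} property and verifies it by an explicit convex-combination construction rather than your cross-polytope, but the two formulations are equivalent. The optimality step (any norm ball containing $S$ is convex, hence contains $\hull(S)=\Delta_K\cdot K$) is identical in both. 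Your contradiction argument that $\Delta_K(\phi;A)=1$ is actually more explicit than what the paper records for the analogous Proposition~\ref{prop: sensitivity one}.
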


However, given the potential violation of the full-rank condition due to invariants, we extend their findings in Section \ref{sec: optimal K norm mech} to provide the optimal $K$-norm mechanism applicable to rank-deficient sensitivity spaces, where dim(span($S_{\phi}))<d$.

\section{Semi Differential Privacy}\label{sec: semi DP}

This section introduces the Semi Differential Privacy (Semi-DP) framework, which extends traditional DP concepts to scenarios where invariants are jointly published with DP outputs. We start by demonstrating how an invariant can compromise standard DP protections, and explain how altering the set of adjacent databases results in adjusted privacy parameters. We contrast our approach with Subspace DP, highlighting its limitations in practical privacy contexts. Our framework focuses on preserving individual-level protection within invariant-conforming databases, ensuring a meaningful and interpretable privacy guarantee aligned with the core philosophy of DP.

\subsection{Challenges of Preserving DP with Invariants}\label{sec: problem setting}

To illustrate how the joint publication of an invariant can undermine privacy protection by the DP framework, we consider a simple example. 

\begin{example}\label{example: motivation}
    Suppose $\mathcal{D} = \{(x_{1},x_{2},x_{3}):x_{i} \in \{0,1\}$, \text{for all} $i=1,2,3\}$ and the true confidential data is $X^* = (0,1,1)$. While a data curator wants to publish $M(X)$ which is a differentially private version of a query $\phi(X)$, assume that the curator also wants to publish $T(X) = \#\{i \vert x_{i} = 1\}$ as well. In this case, the invariant is $t = T(X^*)=2$. 

    Without the invariant, DP protects any two databases differing in one entry. However, given the invariant $t=2$, an adversary would rule out all the databases that do not give the same invariant value and determine that there are only three databases which satisfy $T(X) = 2$: $(1,1,0)$, $(1,0,1)$, and $(0,1,1)$. Consequently, a privacy guarantee is only possible for these invariant-conforming databases. However, since the Hamming distance between any pair of the remaining databases is $2$, the privacy parameter should be modified as we have seen for group privacy. In other words, the presence of the invariant alters both the pairs protected by DP and the extent of that protection.
\end{example}

Example \ref{example: motivation} presents two key takeaways. The first is that a DP-style guarantee is only possible for invariant-conforming databases. From the adversary’s standpoint, if true information about the confidential data is provided by an invariant \(t\), the adversary would naturally aim to exploit \(t\) to its fullest extent. As a result, the adversary will immediately restrict their focus to invariant-conforming databases that align with \(t\), considering them the only possible true input databases. Therefore, regardless of the DP mechanism used for the query, this narrowing of candidates due to the provided invariant is inevitable.

Secondly, within invariant-conforming databases, the distance of $1$ can no longer serve as the standard for adjacent relations. In the case of Example \ref{example: motivation}, the Hamming distance among the remaining three databases is 2 for every pair. Therefore, it is necessary to redefine adjacency among invariant-conforming databases in order to give a meaningful DP guarantee.

\subsection{Limitations of Subspace DP}
In this section, we investigate subspace DP introduced in \cite{gao2022subspace} to highlight why focusing on invariant-conforming databases is crucial and necessary for privacy protection. The core issue of subspace DP is that it considers all adjacent databases---even those do not agree with the invariant. Note that the invariant in their consideration is a linear transformation of $\phi(X)$, or $C\phi(X)$ for some matrix $C$. Subspace DP can be expressed in $f$-
DP terminology, rather than $(\epsilon,\delta)$-DP in the original paper as follows: 
\begin{definition}\label{def: subspace dp}(Subspace Differential Privacy \citep{gao2022subspace}).
    Let $d$ be an adjacency metric and $f$ be a tradeoff function. Consider a query $\phi:\mathcal{D} \rightarrow \mathbb{R}^{d}$ and let $T(X) = C\phi(X)$ be a linear invariant for some $C$. A mechanism $M$ whose codomain is $\mathbb R^d$ satisfies $f$-subspace differential privacy if,  for any $X,X' \in \mathcal{D}$ such that $d(X,X') \leq 1$, 
    \begin{equation*}
T\left(\Proj_{\mathcal{C}}^{\perp}M(X),\Proj_{\mathcal{C}}^{\perp}M(X')\right) \geq f,
\end{equation*}
where $\mathcal{C} = \Span(S_{\phi})$ is the span of the sensitivity space and $\Proj_{\mathcal{C}}^{\perp}$ is the projection operator that projects onto the orthogonal complement of $\mathcal{C}$.
\end{definition}

Therefore, subspace DP considers noisy versions of $\phi(X)$ and only measures the closeness of outputs within the space where $\phi(X)$ is not already determined by the invariant $T(X)=C\phi(X)$. At a glance, this seems like a reasonable choice. If there are parts of $\phi(X)$ that $T(X)$ discloses without noise, there is no way to protect these overlapping parts regardless of any perturbation applied to $\phi(X)$. Therefore, the effort is directed towards protecting only the remaining parts that are not exposed. Additionally, by not adding noise to the parts that cannot be protected anyway, the total amount of noise introduced into the overall mechanism can be reduced, improving utility.

However, subspace DP cannot serve as a meaningful measure of privacy protection as it overlooks the fact that the adversary can also leverage the invariant \(T(X^*)=t\), where $X^*$ is the confidential data. The similarity in projected outputs does not provide any protection for \(X^*\) and \(X'\) when \(T(X^*) \neq T(X')\), as discussed above. 

Furthermore, a typical example of a linear transformation is a count statistic and databases differing by one entry cannot typically produce the same count statistics. 
Consequently, subspace DP fails to achieve indistinguishability between $X$ and $X'$, and so the $f$ parameter cannot be properly interpreted. This reinforces the argument to consider invariant-conforming databases instead of the original dataspace $\mathcal{D}$.

\subsection{Semi Differential Privacy}

Based on the previous discussion, we define our dataspace \(\mathcal{D}_{t}:= \{X \in \mathcal{D}: T(X) = t\}\) as the set of invariant-conforming databases that share the same invariant value as the confidential data, and we characterize a DP-style guarantee on this space. The term ``semi-private" was introduced in \cite{awan2023canonical}, where it is used to establish upper bounds on the power of differentially private testing by weakening the DP constraint. Specifically, in \cite{awan2023canonical}, semi-private testing refers to a test function where certain summary statistics of the confidential data are not protected, but DP is maintained for databases that share the same summary statistics as the confidential data. While we build on the notion of ``semi-private" from \cite{awan2023canonical}, we extend it to a broader setting by proposing a general DP-style protection for invariant-conforming databases in cases where both invariants and DP outputs are jointly released. 

First, suppose there exists a mechanism \(M\) that satisfies \((\mathcal{D}, A, f)\)-DP. We can then characterize the privacy protection for the joint release of an output from \(M\) and the invariant \(t\) as follows:

\begin{proposition}\label{thm: full characterize}(Full Characterization)
Given a confidential dataset \(X^*\), let the invariant be \(T(X^*) = t\) and let $M:\mathcal{D} \rightarrow \mathcal{Y}$ be a $(\mathcal{D},A,f)$-DP mechanism, where an adjacency function $A: \mathcal{D} \times \mathcal{D} \rightarrow \{0,1\}$ is defined via an adjacency metric by $A(X,X') = \mathbbm{1}_{\{d(X,X') \leq 1\}}$. Consider invariant-conforming databases $\mathcal{D}_{t} = \{X \in \mathcal{D}: T(X)=t\}$. Then for any invariant-conforming databases $X,X' \in \mathcal{D}_{t}$, we have 
\begin{equation*}
    T\left(M(X),M(X')\right) \geq f^{\circ d(X,X')}.
\end{equation*}
\end{proposition}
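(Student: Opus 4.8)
The plan is to reduce the claim to the group-privacy behaviour of $f$-DP already recorded in the preliminaries, crucially exploiting that $M$ is $(\mathcal{D},A,f)$-DP over the \emph{entire} dataspace rather than only over $\mathcal{D}_{t}$. Fix $X,X' \in \mathcal{D}_{t}$ and write $a = d(X,X')$. Because $d$ is an adjacency metric, its definition supplies a chain $X_{0} = X, X_{1}, \dots, X_{a} = X'$ of databases in $\mathcal{D}$ with $A(X_{i-1},X_{i}) = 1$ for every $i \in [a]$ and with $a$ minimal. The decisive observation is that the intermediate databases $X_{1},\dots,X_{a-1}$ need not conform to the invariant, i.e. they may violate $T(X_{i}) = t$ and hence lie outside $\mathcal{D}_{t}$; this is harmless precisely because the $f$-DP guarantee of $M$ holds for \emph{every} adjacent pair in $\mathcal{D}$, not merely for adjacent pairs drawn from $\mathcal{D}_{t}$. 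This is exactly the point that distinguishes the proposition from a naive application of DP restricted to invariant-conforming data.

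With the chain in hand, I would apply the defining $f$-DP inequality to each consecutive pair to obtain $T\!\left(M(X_{i-1}),M(X_{i})\right) \geq f$ for all $i \in [a]$. It then remains to combine these $a$ one-step bounds into a single bound between the endpoints, which is precisely the functional-composition (group-privacy) property of tradeoff functions stated earlier: a mechanism that is $(\mathcal{D},A_{1},f)$-DP is $(\mathcal{D},A_{k},f^{\circ k})$-DP. Taking $k = a$ and noting that $A_{a}(X,X') = \mathbbm{1}_{\{d(X,X') \leq a\}} = 1$, we conclude $T\!\left(M(X),M(X')\right) \geq f^{\circ a} = f^{\circ d(X,X')}$. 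Choosing $k = d(X,X')$ separately for each pair is what yields the sharp per-pair exponent rather than a single uniform $f^{\circ k}$.

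The one step carrying the real weight is the passage from the $a$ one-step tradeoff bounds to the composed bound $f^{\circ a}$: a triangle-type inequality for tradeoff functions asserting that if $T(P_{i-1},P_{i}) \geq f$ for each $i$ then $T(P_{0},P_{a}) \geq f^{\circ a}$. Since this is exactly the group-privacy property invoked above, I would either cite it directly or, for a self-contained argument, establish the base case $a = 2$---that $T(P,Q) \geq f$ and $T(Q,R) \geq f$ imply $T(P,R) \geq f \circ f$---and induct on $a$. I expect this composition lemma, rather than the construction of the adjacency chain, to be the main obstacle, since it is where the structure of tradeoff functions (convexity together with the symmetric normalization of $f$) is genuinely used; the remainder of the argument is bookkeeping along the path guaranteed by the adjacency-metric hypothesis.
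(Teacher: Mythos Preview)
Your proposal is correct and matches the paper's approach: the paper's proof is the one-line observation that $(\mathcal{D},A,f)$-DP together with the adjacency-metric property yields $T(M(X),M(X')) \geq f^{\circ d(X,X')}$ for all $X,X' \in \mathcal{D}$ via group privacy, and then restricts to $\mathcal{D}_{t} \subseteq \mathcal{D}$. Your additional remark that the intermediate chain databases need not lie in $\mathcal{D}_{t}$ is a useful clarification the paper leaves implicit.
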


Proposition \ref{thm: full characterize} applies to all data pairs and provides a tight lower bound on the trade-off function without knowing further details of the privacy mechanism.  Proposition \ref{thm: full characterize} follows from the observation that $M$ is $f$-DP if and only if $M$ is $f^{\circ d(X,X')}$ for all $X,X' \in \mathcal{D}$, given that $d$ is an adjacency metric and $\mathcal{D}_{t} \subseteq \mathcal{D}$. 

However, it is challenging to understand what this characterization means in practical terms and what kind of privacy protection it offers from the perspective of individuals whose data is in the database. In this regard, we introduce our primary framework, semi DP, designed to preserve the original spirit of DP while accounting for the presence of invariants.

\begin{definition}\label{def: semi dp} ($f$-semi differential privacy).
Given a confidential dataset \(X^*\), let the invariant be \(T(X^*) = t\) and $M(X)$ be a privacy mechanism. For the set of invariant-conforming databases \(\mathcal{D}_{t} = \{X \in \mathcal{D} : T(X) = t\}\), and given an adjacency function \(A: \mathcal{D}_{t} \times \mathcal{D}_{t} \rightarrow \{0,1\}\), a joint release of \((M(X), T(X))\) is said to satisfy \(f\)-semi DP if \(M(X)\) satisfies \((\mathcal{D}_{t}, A, f)\)-DP. That is, 
\[
T(M(X), M(X')) \geq f,
\]
for all \(X, X' \in \mathcal{D}_{t}\) such that \(A(X, X') = 1\).
\end{definition}

The primary distinction between our definition of \(f\)-semi DP and the conventional \(f\)-DP in Definition \ref{def: conventional f dp} lies in its explicit restriction to invariant-conforming database, rather than the original dataspace. Thus, Semi-DP guarantees indistinguishability among adjacent databases within the restricted dataspace \(\mathcal{D}_{t}\).

To properly define adjacency within this restricted space, we introduce the semi-adjacent parameter  \( a(t) \) and consider an adjacency function of the form \( A_{a(t)}(X, X') = \mathbbm{1}_{\{d(X, X') \leq a(t)\}} \) . The two most important considerations in defining the semi-adjacent parameter \( a(t) \) are: ensuring that every \( X \in \mathcal{D}_{t} \) has an adjacent counterpart \( X' \), and providing a meaningful interpretation of individual-level protection, consistent with the original differential privacy framework.

\begin{definition}\label{def: indiv protection a} (Semi-adjacent parameter).
Let \( \mathcal{D}_t \) be a collection of datasets, each consisting of \( n \) individuals, and let \( \mathcal{D}_t^i = \{x \in \mathcal{X} : \exists X \in \mathcal{D}_t, X_i = x \} \) represent the set of all possible data values for individual \( i \) across datasets in \( \mathcal{D}_t \). Define the semi-adjacent parameter as:
\[
a(t) = \sup_{i \in [n]} \sup_{\substack{x, y \in \mathcal{D}_t^i}} \inf \{ d(X, Y) : X, Y \in \mathcal{D}_t, X_i = x, Y_i = y \},
\]
where \( d(X, Y) \) is an adjacent metric between datasets \( X \) and \( Y \).
\end{definition}

The semi-adjacent parameter \( a(t) \) quantifies the worst-case scenario for replacing an individual \( x \) in the dataset with another individual \( y \), ensuring that the two resulting databases \( X \) and \( Y \) maintain the same invariant values. Importantly, this is not just about finding \textit{any} replacement \( y \), but identifying the most difficult or ``worst-case" replacement. By defining the adjacency function as \( A_{a(t)}(X, X') \), we ensure that, for each individual \( x \), even the most challenging alternative \( y \) is still indistinguishable from the perspective of the adversary. This focus on the worst-case replacement is crucial for ensuring that privacy holds under the most stringent conditions, consistent with the principles of DP.

\begin{remark}
One may consider other options for the semi-adjacent parameter. 
It is worth noting that the seemingly intuitive choice of \( a_{min} := \min_{X, X' \in \mathcal{D}_{t}} d(X, X') \) is inappropriate. This is because a particular confidential dataset \(X^*\) may not have any adjacent counterpart within a distance of \(a_{min}\), i.e., \(\{X \in \mathcal{D}_{t} \mid d(X, X^*) \leq a_{min}\} = \{X^*\}\). This situation arises if \(X^*\) is not one of the databases achieving \(a_{min}\), rendering \(a_{min}\) unsuitable for DP.

Alternatively, one might consider \( a_{maxmin}(t) := \max_{X \in \mathcal{D}_{t}} \min_{X' \in \mathcal{D}_{t}} d(X, X') \), which reflects the worst-case scenario by focusing on the maximum distance to the nearest database in \(\mathcal{D}_{t}\). While \(a_{maxmin}\) ensures that every database has an adjacent counterpart, it lacks a clear interpretation regarding individual-level privacy. For example, $d(X,X') \leq a_{maxmin}$ may only allow a record to differ in one attribute rather than arbitrarily swapping the record for another.

Therefore, we adopt \(a_{t}\) as defined in Definition \ref{def: indiv protection a}, which guarantees that every dataset in \(\mathcal{D}_{t}\) has an adjacent counterpart while allowing us to reason explicitly about individual privacy protection.    
\end{remark}

\begin{example}
    Suppose $\mathcal{D} = \{(x_{1},x_{2},x_{3}):x_{i} \in \{0,1\}$, \text{for all} $i=1,2,3\}$ and consider $T(X) = \#\{i : x_{i} = 1\}$ as defined in Example \ref{example: motivation}. 

   If \(t_{1} = 2\), then the set of invariant-conforming databases is \(\mathcal{D}_{t_1} = \{(1,1,0), (1,0,1), (0,1,1)\}\), so \(\mathcal{D}_{t_1}^{i} = \{0,1\}\) for each \(i \in \{1,2,3\}\). The semi-adjacency parameter in this case is \(a(t_1) = 2\), indicating that two switches in entries are required to transition between any two databases in \(\mathcal{D}_{t_1}\) while preserving the invariant.

If \(t_{2} = 0\), then \(\mathcal{D}_{t_2} = \{(0,0,0)\}\), so \(\mathcal{D}_{t_2}^{i} = \{0\}\) for each \(i \in \{1,2,3\}\). In this case, since there is only a single database conforming to the invariant \(t_{2}\), the semi-adjacency parameter is \(a(t_2) = 0\), as there is nothing left to protect.
\end{example}

We conclude this subsection by presenting the result on how the privacy guarantee of a mechanism \( M(X) \), which satisfies \((\mathcal{D}, A_{1}, f)\)-DP, is affected by the invariant \( t \) through the use of the semi-adjacent parameter \( a(t) \).

\begin{proposition}\label{thm: semi dp by semi adj para} Given a confidential dataset \(X^*\), let the invariant be \(T(X^*) = t\) and $a(t)$ be the semi-adjacent parameter. Let $M:\mathcal{D} \rightarrow \mathcal{Y}$ be a $(\mathcal{D},A_{1},f)$-DP mechanism, where an adjacency function $A: \mathcal{D} \times \mathcal{D} \rightarrow \{0,1\}$ is defined via an adjacency metric by $A(X,X') = \mathbbm{1}_{\{d(X,X') \leq 1\}}$. Then $M(X)$ satisfies $(\mathcal{D}_{t},A_{a(t)},f^{\circ a(t)})$-DP.
\end{proposition}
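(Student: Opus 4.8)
The plan is to reduce the claim to the group privacy property already recorded for $f$-DP, together with the elementary observation that $\mathcal{D}_t \subseteq \mathcal{D}$. Unpacking Definition \ref{def: semi dp}, what must be shown is that $T(M(X),M(X')) \geq f^{\circ a(t)}$ for every pair $X,X' \in \mathcal{D}_t$ with $A_{a(t)}(X,X') = \mathbbm{1}_{\{d(X,X') \leq a(t)\}} = 1$. Since $M$ is assumed to satisfy $(\mathcal{D},A_1,f)$-DP on the \emph{whole} dataspace, I would not argue inside $\mathcal{D}_t$ directly, but instead exploit that any such pair also lies in $\mathcal{D}$, where the full $f$-DP hypothesis is available.

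Concretely, I would first invoke the group privacy property with group size $k = a(t)$: a mechanism that is $(\mathcal{D},A_1,f)$-DP is automatically $(\mathcal{D},A_{a(t)},f^{\circ a(t)})$-DP, so $T(M(X),M(X')) \geq f^{\circ a(t)}$ holds for \emph{all} $X,X' \in \mathcal{D}$ with $d(X,X') \leq a(t)$. Restricting to pairs $X,X' \in \mathcal{D}_t$—legitimate precisely because $\mathcal{D}_t \subseteq \mathcal{D}$ and the adjacency condition $A_{a(t)}(X,X')=1$ is identical on the subset—yields exactly the $(\mathcal{D}_t,A_{a(t)},f^{\circ a(t)})$-DP guarantee demanded by Definition \ref{def: semi dp}. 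An equivalent route, avoiding a separate appeal to group privacy, is to apply Proposition \ref{thm: full characterize}, which already gives $T(M(X),M(X')) \geq f^{\circ d(X,X')}$ for $X,X' \in \mathcal{D}_t$, and then bound $f^{\circ d(X,X')} \geq f^{\circ a(t)}$ using $d(X,X') \leq a(t)$ together with the monotonicity of $k \mapsto f^{\circ k}$. That monotonicity follows from the standard tradeoff-function facts that $f$ is non-decreasing and $f(x) \leq x$, whence $f^{\circ(k+1)} = f^{\circ k}\circ f \leq f^{\circ k}$ pointwise.

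The step deserving the most care is the group-privacy composition itself, and specifically the fact that a chain of single-step adjacent databases $X_0 = X, X_1, \dots, X_{d(X,X')} = X'$ witnessing $d(X,X') \leq a(t)$ is guaranteed to exist \emph{in} $\mathcal{D}$ (because $d$ is an adjacency metric) but its intermediate links $X_1,\dots,X_{d(X,X')-1}$ need \emph{not} conform to the invariant, i.e.\ need not lie in $\mathcal{D}_t$. This is harmless: the $f$-DP hypothesis on $M$ holds across every single-step edge of the chain inside $\mathcal{D}$, so the composition $T(M(X_0),M(X_{d(X,X')})) \geq f^{\circ d(X,X')}$ goes through regardless of whether the intermediate databases are invariant-conforming, and only the endpoints $X,X'$ are required to be in $\mathcal{D}_t$. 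Finally I would dispose of the degenerate case $a(t) = 0$, where $A_{a(t)}$ reduces to the identity relation and $f^{\circ 0}$ is the identity tradeoff function, making the inequality immediate.
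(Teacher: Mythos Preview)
Your proposal is correct and matches the paper's approach. The paper does not spell out a standalone proof of this proposition in the appendix; instead it treats it as an immediate consequence of group privacy together with $\mathcal{D}_t \subseteq \mathcal{D}$, exactly as you do (and your second route via Proposition~\ref{thm: full characterize} plus the monotonicity $f^{\circ d(X,X')} \geq f^{\circ a(t)}$ is precisely the argument the paper uses inline for Corollary~\ref{thm: count stat semi dp guarantee}). Your observation that the intermediate links of the adjacency chain need only lie in $\mathcal{D}$, not $\mathcal{D}_t$, is a useful clarification that the paper leaves implicit.
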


Proposition \ref{thm: semi dp by semi adj para} demonstrates that when a mechanism \(M\) satisfies \((\mathcal{D}, A_{1}, f)\)-DP, the introduction of an invariant \(t\) modifies the privacy guarantee by restricting attention to invariant-conforming databases. Specifically, instead of ensuring indistinguishability between all adjacent databases within \(\mathcal{D}\) at distance 1, the privacy guarantee is on indistinguishability between databases in \(\mathcal{D}_t\) that differ by up to \(a(t)\), where \(a(t)\) is the semi-adjacent parameter. The privacy parameter is adjusted from \(f\) to \(f^{\circ a(t)}\), reflecting the privacy loss over the greater distance between adjacent databases. As a result, the privacy guarantee is weakened both by reducing the set of protected databases and by tightening the privacy parameter.

It is important to note that while this result appears similar to group privacy with group size \(a = a(t)\), there is a key distinction. In group privacy, the privacy guarantee applies across all the pairs of databases differing by up to $a$ entries. However, in this case, the privacy guarantee is limited to databases with up to $a$ different entries within \(\mathcal{D}_t\). Therefore, even though the privacy guarantee in Proposition \ref{thm: semi dp by semi adj para} and group privacy both involve the privacy parameter \(f^{\circ a}\), the guarantee in Proposition \ref{thm: semi dp by semi adj para} should be understood as weaker than group privacy because it is constrained to invariant-conforming databases, rather than applying broadly across all possible databases. This idea is captured and formalized later in Definition \ref{def: partial order}.

Finally, this subsection focused on how the privacy guarantee of a \((\mathcal{D}, A, f)\)-DP mechanism \(M\) changes when an invariant \(t\) is introduced. However, it is also possible to design privacy mechanisms that account for the invariant from the outset. In Section \ref{sec: mechanism design}, we will introduce a customized mechanism that incorporates the invariant directly into the mechanism design while achieving the same Semi-DP guarantee. 

\subsection{Example on Count Statistics}\label{sec: example on count statistics}

In this section, we calculate the semi-adjacent parameter $a(t)$ for some examples involving count statistics. 

For a set of variables \(\{A_1, \dots, A_p\} \), where each variable \( A_j\) can take \( n_j \) distinct levels, consider a dataset \( X = (X^{(1)}, \dots, X^{(n)}) \in \mathcal{X}^n \), where each data point \( X^{(i)} = (X_1^{(i)}, \dots, X_p^{(i)}) \) consists of \( p \) features corresponding to the variables \( A_1, \dots, A_p \). 

For each variable \( A_j \), we define the count statistic \( T_j(X) \) as the ordered tuple of the counts of each level \( k \in \{1, \dots, n_j\} \), given by:
\[
T_j(X) := \left( \#\{i : X_j^{(i)} = k\} \right)_{k=1}^{n_j},
\]
where \( \#\{i : X_j^{(i)} = k\} \) represents the number of data points indexed over \(i\), for which the \( j \)-th feature takes the value \( k \).

To capture the counts of all features in the dataset, we introduce the notation \( \mathbf{T}(X) \), which represents the tuple of count functions for all \( p \) features:
\[
\mathbf{T}(X) = \left( T_1(X), T_2(X), \dots, T_p(X) \right),
\]
where each \( T_j(X) \) provides a count of the levels for the corresponding variable \( A_j \).

\begin{example}(One-way margins of $2 \times 2$ contingency table).
Consider two binary variables \( A_1 \) and \( A_2 \), where each variable takes values in \( \{1, 2\} \). Let \( X = (X^{(1)}, \dots, X^{(n)}) \) be a dataset consisting of \( n \) data points, and suppose the corresponding contingency table for \( A_1 \) and \( A_2 \) is given as follows:

\[
\begin{array}{c|cc}
A_1 \backslash A_2 & 1 & 2 \\
\hline
1 & x_{11} & x_{12} \\
2 & x_{21} & x_{22}
\end{array}
\]

The one-way margins for \( A_1 \) represent the total counts for each value of \( A_1 \), which are \( x_{11} + x_{12} \) and \( x_{21} + x_{22} \). Similarly, the one-way margins for \( A_2 \) are given by \( x_{11} + x_{21} \) and \( x_{12} + x_{22} \).

In our notation, \( \mathbf{T}(X) = (T_1(X), T_2(X)) \), where \( T_1(X) \) gives the counts of the levels of \( A_1 \), and \( T_2(X) \) gives the counts of the levels of \( A_2 \). Specifically, we observe:
\[
T_1(X) = \left( \#\{i: X_1^{(i)} = 1\}, \#\{i: X_1^{(i)} = 2\} \right) = \left( x_{11} + x_{12}, x_{21} + x_{22} \right),
\]
\[
T_2(X) = \left( \#\{i: X_2^{(i)} = 1\}, \#\{i: X_2^{(i)} = 2\} \right) = \left( x_{11} + x_{21}, x_{12} + x_{22} \right).
\]

Thus, \( \mathbf{T}(X) \) coincides with the one-way margins of the contingency table, demonstrating that the notation captures the count of individual features.
\end{example}

Suppose the invariant is $t = \mathbf{T}(X^*)$, where $X^*$ is the confidential database. Then we have bounds for the semi-adjacent parameter $a(t)$ as follows:

\begin{theorem}\label{thm: count stat semi adj para}
Let \( X^* \) be the confidential database, and let the one-way margins \( t = \mathbf{T}(X^*) \) be the invariant. Consider the Hamming distance as the adjacency metric. Then the semi-adjacent parameter $a(t)$ satisfies the following bound:
\[
a(t) \leq p + 1.
\]
\end{theorem}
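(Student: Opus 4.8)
The plan is to unwind the definition of the semi-adjacent parameter and reduce the claim to a single explicit construction. Since
\[
a(t) = \sup_{i \in [n]} \ \sup_{x,y \in \mathcal{D}_t^i} \ \inf\{ H(X,Y) : X,Y \in \mathcal{D}_t,\ X_i = x,\ Y_i = y\},
\]
it suffices to fix an arbitrary index $i$ and an arbitrary pair of records $x = (x_1,\dots,x_p)$ and $y = (y_1,\dots,y_p)$ in $\mathcal{D}_t^i$, and to exhibit a single pair of invariant-conforming datasets $X,Y \in \mathcal{D}_t$ with $X_i = x$, $Y_i = y$, and Hamming distance $H(X,Y) \le p+1$. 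Bounding the inner infimum this way, and then taking the two suprema, yields the claim.

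For the construction I would start from any witness dataset $X \in \mathcal{D}_t$ with $X_i = x$ (which exists by definition of $\mathcal{D}_t^i$) and edit it into $Y$ one feature at a time. First set $Y_i := y$; this disturbs exactly the margins $T_j$ for which $x_j \ne y_j$, lowering the count of level $x_j$ by one and raising the count of level $y_j$ by one in feature $j$. To repair feature $j$ I would locate a \emph{second} record $i' \ne i$ whose $j$-th coordinate equals $y_j$ and flip only that coordinate to $x_j$. Since a single-coordinate flip touches the margin of that one feature alone, this reverses precisely the two off-by-one discrepancies in $T_j$ while leaving every other margin intact. Carrying out one such repair for each feature $j$ with $x_j \ne y_j$ restores $\mathbf{T}(Y) = \mathbf{T}(X) = t$, so that $Y \in \mathcal{D}_t$.

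The first point requiring care is the existence of a compensating record. Because $y \in \mathcal{D}_t^i$, there is some $Z \in \mathcal{D}_t$ with $Z_i = y$, so the prescribed count of level $y_j$ in feature $j$ satisfies $(t_j)_{y_j} \ge 1$; as every dataset in $\mathcal{D}_t$ shares this margin, $X$ likewise contains at least one record with $j$-th coordinate $y_j$, and none of these can be record $i$ (whose $j$-th coordinate is $x_j \ne y_j$). Hence a compensator $i' \ne i$ is always available for each discrepant feature. The Hamming accounting then follows directly: $Y$ differs from $X$ only at record $i$ and at the compensating records, of which there is at most one per discrepant feature and therefore at most $p$ in all, giving $H(X,Y) \le 1 + p$ and thus $a(t) \le p+1$.

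The main obstacle I anticipate is the simultaneous bookkeeping in the repair step, namely verifying that the per-feature corrections can be performed concurrently on one dataset without interfering. This is clean precisely because each correction acts on a distinct feature coordinate: the value of feature $j'$ in any record is unaffected by flipping feature $j$ of that or another record, so the set of available compensators computed from $X$ stays valid throughout, and the cancellations in the margins $T_1,\dots,T_p$ are mutually independent. Should the compensators selected for different features coincide in a single record, this only decreases $H(X,Y)$, so the bound $p+1$ is preserved in every case.
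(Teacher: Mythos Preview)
Your proposal is correct and follows essentially the same construction as the paper: replace record $i$ by $y$, then for each discrepant feature $j$ repair the margin via a single-coordinate flip of some other record carrying level $y_j$, for at most $p+1$ Hamming edits in total. Your argument is in fact slightly more careful than the paper's---the paper starts only from $x \in X^*$ rather than an arbitrary $x \in \mathcal{D}_t^i$ and does not explicitly verify that the per-feature repairs do not interfere---but the core idea is identical.
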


The upper bound of \( p + 1 \) arises when transforming an individual \( x \) into another individual \( y \), where all \( p \) features differ, and no individual in the dataset shares more than one feature with \( y \). Under these conditions, each of the \( p \) features requires a distinct replacement operation, leading to \( p \) replacements. Including the initial replacement of \( x \), this results in a total of \( p + 1 \) replacements. Note that there are cases where $a(t)<p+1$, such as when some of the marginal counts are zero.

\begin{example}\label{example: 2 by 2 semi adj para}(Semi-adjacent parameter given one-way margins of a \( 2 \times 2 \) contingency table).

Consider a \( 2 \times 2 \) contingency table, where \( p = 2 \), so \( a(t) \leq 3 \). To illustrate a scenario where \( a(t) = 3 \), let \( x \) be an individual with feature values \( A_1 = 1 \) and \( A_2 = 1 \). Suppose \( x \), observing that the one-way margins \( x_{21} + x_{22} \) and \( x_{12} + x_{22} \) are nonzero, attempts to pretend to be an individual with feature values \( A_1 = 2 \) and \( A_2 = 2 \). The worst-case scenario occurs when \( x_{22} = 0 \), meaning there are no individuals in the dataset with feature $A_{1}=2,A_{2}=2$. Note that for the one-way margins to remain nonzero, \( x_{21} \) and \( x_{12} \) must be positive. Thus, we can select an individual with \(A_2 = 2\) and \(A_1 = 1\), and another individual with \(A_1 = 1\) and \(A_2 = 1\). By replacing these individuals with others who also have \(A_1 = 1\) and \(A_2 = 1\), we can preserve the original counts. Thus, in addition to the initial change of \( x \), two further changes are required to maintain the counts, resulting in \( a(t) = 3 \). The resulting changes in terms of the contingency table can be expressed as \(x_{11} + 1\), \(x_{12} - 1\), \(x_{21} - 1\), and \(x_{22} + 1\).
\end{example}

\begin{remark}
An interesting extension of our result arises when more detailed counts are to be released instead of the marginal counts for each of the \( p \) features \( A_1, \dots, A_p \). For example, consider a dataset of college students where \( A_1 \) represents gender, \( A_2 \) represents major, and \( A_3 \) represents state of residence. In the case of \( \mathbf{T}(X) \), marginal counts for gender, major, and state would be published separately. However, a data curator may wish to release joint counts for gender and major, which correspond to counts such as (male, statistics major), (female, statistics major), (male, CS major), (female, CS major), and so on. This provides more detailed information than marginal counts alone. Conceptually, we can view this as combining \( A_1 \) and \( A_2 \) into a single feature \( A_1 \times A_2 \), effectively reducing the number of features to \( p - 1 \). According to Theorem \ref{thm: count stat semi adj para}, the semi-adjacent parameter \( a(t) \) would then be bounded by \( p \).

At first, this might seem counterintuitive: publishing more detailed information results in a smaller \( a(t) \). However, the key observation is that as more detailed information is released, the set of datasets consistent with the invariant becomes smaller. This reduction effectively constrains the set of potential replacements for an individual \( x \) with another individual \( y \). As a result, the number of viable \( y \) values to consider when evaluating the semi-adjacent parameter \( a(t) \) decreases. Since the supremum over \( x, y \in \mathcal{D}_{t}^{i} \) is taken over a smaller set, the upper bound of \( a(t) \) naturally decreases.

When the counts for every level in $A_{1} \times \cdots \times A_{p}$ are published, \( x \) must select one of the non-zero count feature combinations and wishes to pretend to be an individual \( y \) with those feature values. Since there is guaranteed to be an individual in the dataset with the chosen feature combination, changing that individual to one with \( x \)'s original feature values preserves the counts. As a result, only two replacements are required.

Thus, as more detailed information is published, the selection of a plausible \( y \) becomes more constrained, leading to the somewhat surprising but logical result that the upper bound of \( a(t) \) decreases. Note that in the extreme case where $t$ reveals the entire dataset, we have $a(t)=0$.
\end{remark}

Finally, we have the following privacy guarantee for a joint release of such invariant $t$ and a DP mechanism: 

\begin{corollary}\label{thm: count stat semi dp guarantee}
Given a confidential database $X^*$ that consists of $p$ features, let the invariant be \( t = \mathbf{T}(X^*) \) and $M$ be a pivacy mechansim that satisfies $(\mathcal{D},A_{1},f)$-DP. Then the joint release $(M(X),t)$ satisfies $(\mathcal{D}_{t},A_{a(t)},f^{\circ (p+1)})$-DP.
\end{corollary}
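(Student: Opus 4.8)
The plan is to combine the two preceding results with a short monotonicity argument on composed tradeoff functions. The two ingredients are Proposition~\ref{thm: semi dp by semi adj para}, which converts a standard $(\mathcal{D}, A_1, f)$-DP guarantee into a Semi-DP guarantee governed by the semi-adjacent parameter $a(t)$, and Theorem~\ref{thm: count stat semi adj para}, which establishes the bound $a(t) \leq p+1$ for one-way margins under the Hamming distance.

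First I would apply Proposition~\ref{thm: semi dp by semi adj para} directly. Since $M$ satisfies $(\mathcal{D}, A_1, f)$-DP and the invariant is $t = \mathbf{T}(X^*)$, the proposition immediately yields that the joint release $(M(X), t)$ satisfies $(\mathcal{D}_t, A_{a(t)}, f^{\circ a(t)})$-DP, where $a(t)$ is the semi-adjacent parameter associated with the one-way margins. Next I would invoke Theorem~\ref{thm: count stat semi adj para} to bound $a(t) \leq p+1$. The remaining task is to show that this guarantee implies the (weaker) stated guarantee with tradeoff function $f^{\circ(p+1)}$, while keeping the adjacency function $A_{a(t)}$ unchanged. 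This rests on the fact that iterated composition makes a tradeoff function pointwise smaller: because any tradeoff function satisfies $f(\alpha) \leq \alpha$ and is non-decreasing, one has $f^{\circ(k+1)}(\alpha) = f\bigl(f^{\circ k}(\alpha)\bigr) \leq f^{\circ k}(\alpha)$ for every $\alpha$, so $k \mapsto f^{\circ k}$ is pointwise non-increasing. Hence $a(t) \leq p+1$ forces $f^{\circ a(t)} \geq f^{\circ(p+1)}$ pointwise.

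Putting these together, for every $X, X' \in \mathcal{D}_t$ with $A_{a(t)}(X,X') = 1$ we obtain $T(M(X), M(X')) \geq f^{\circ a(t)} \geq f^{\circ(p+1)}$, which is precisely the asserted $(\mathcal{D}_t, A_{a(t)}, f^{\circ(p+1)})$-DP. I do not expect a genuine obstacle, since both structural results are already in hand; the only point requiring care is the monotonicity of $f^{\circ k}$ in $k$, which should be verified explicitly rather than assumed, as it is exactly what legitimizes relaxing the tradeoff-function bound from $f^{\circ a(t)}$ to $f^{\circ(p+1)}$ without altering the adjacency relation.
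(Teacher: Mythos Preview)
Your proposal is correct and follows essentially the same route as the paper: apply Proposition~\ref{thm: semi dp by semi adj para} to obtain the $(\mathcal{D}_t, A_{a(t)}, f^{\circ a(t)})$-DP guarantee, invoke Theorem~\ref{thm: count stat semi adj para} for $a(t)\leq p+1$, and then use $f(x)\leq x$ to conclude $f^{\circ(p+1)} \leq f^{\circ a(t)}$. Your explicit verification of the monotonicity $k\mapsto f^{\circ k}$ is a welcome elaboration of what the paper states in one line.
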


Corollary \ref{thm: count stat semi dp guarantee} is immediate from Theorem \ref{thm: count stat semi adj para}, Proposition \ref{thm: semi dp by semi adj para}, and the fact that $f^{\circ (p+1)} \leq f^{\circ a(t)}$, which is because $a(t)\leq p+1$ and $f(x) \leq x$.

\subsection{Comparison Between Semi-DP Guarantees}

In our Semi-DP framework, comparing guarantees is inherently more challenging than in standard DP. For standard DP, comparisons are simplified because the guarantees share a common underlying dataspace and adjacency function, making it sufficient to compare the privacy parameters alone. However, in Semi-DP, the underlying dataspace and corresponding adjacency function vary based on the invariant \( t \). Therefore, relying solely on privacy parameters for comparison can be misleading. 

\begin{example}\label{example: group vs. semi}(Comparing group privacy and Semi-DP). To illustrate the case where merely comparing privacy parameters could be misleading, consider a $(\mathcal{D},A,f)$-DP mechanism $M(X)$ and invariant $t$ and suppose $a(t)=a$. On one hand, $M(X)$ satisfies $(\mathcal{D},A_{a},f^{\circ a})$-DP as the group privacy of size $a$. On the other hand, in $\mathcal{D}_{t}$, for $X,X' \in \mathcal{D}_{t}$ such that $A_{a}(X,X')=1$, their tradeoff function has a lower bound as $f^{\circ a}$ by Proposition \ref{thm: full characterize}, leading to $(\mathcal{D}_{t},A_{a},f^{\circ a})$-DP, or $f^{\circ a}$-semi DP.

If we only compare privacy parameters, the two privacy guarantees should have the same level of protection. However, this is incorrect. Group privacy quantifies the indistinguishability for all data pairs in the dataspace with a size difference of $a$, while Semi-DP quantifies the indistinguishability for some data pairs in the  $\mathcal{D}$ with a size difference of $a$. Roughly speaking, some data pairs protected under group privacy are not protected under Semi-DP. 
\end{example}

To better understand what pairs of databases are protected, we define the ``indistinguishable pairs" of a privacy guarantee.

\begin{definition}\label{def: indistinguishable pairs}(Indistinguishable pairs). Given an underlying dataspace $\mathcal{D}$ and an adjacency function $A: \mathcal{D} \times \mathcal{D} \rightarrow \{0,1\}$, we define the indistinguishable pairs as the collection of pairs of databases which are adjacent:
\begin{equation*}
    IND(\mathcal{D},A) = \{(X,X') \in \mathcal{D}\times \mathcal{D}: A(X,X')=1\}.
\end{equation*}
\end{definition}

Note that Definition \ref{def: indistinguishable pairs} outlines the pairs of databases that can be protected by a given privacy guarantee, effectively defining the scope of the privacy protection. For example, a standard DP mechanism always has $IND(\mathcal{D},A_1)$ since all adjacent databases are of interest. On the other hand, for group privacy of size $a$, the corresponding $IND$ is $IND(\mathcal{D},A_a)$ because group privacy targets indistinguishability between any databases with a size difference of $a$. However, in Semi-DP, the $IND$ changes depending on invariant.

\begin{definition}\label{def: partial order}
    Consider two DP guarantees, $(\mathcal{D}, A, f)$-DP and $(\mathcal{D}', A', g)$-DP. We define a partial order $\preceq$ between these guarantees as follows:
    
    \[
    (\mathcal{D}', A', g)\text{-DP} \preceq (\mathcal{D}, A, f)\text{-DP} \quad \text{if:}
    \]
    \begin{align*}
        1. &\quad \text{IND}(\mathcal{D}', A') \subseteq \text{IND}(\mathcal{D}, A) \text{ and } \\
        2. &\quad g \leq f.
    \end{align*}

    We say that $(\mathcal{D}, A, f)$-DP is a stronger privacy guarantee than $(\mathcal{D}', A', g)$-DP if 
    \[
    (\mathcal{D}', A', g)\text{-DP} \preceq (\mathcal{D}, A, f)\text{-DP}.
    \]

    Furthermore, we define $(\mathcal{D}', A', g)$-DP $\prec$ $(\mathcal{D}, A, f)$-DP if one of the following conditions holds:
    \begin{align*}
        1. &\quad \text{IND}(\mathcal{D}', A') \subseteq \text{IND}(\mathcal{D}, A) \text{ and } g < f, \text{ or } \\
        2. &\quad \text{IND}(\mathcal{D}', A') \subset \text{IND}(\mathcal{D}, A) \text{ and } g \leq f.
    \end{align*}
\end{definition}

In this definition, the partial order \(\preceq\) allows us to compare DP guarantees based on their indistinguishable pairs and tradeoff functions, while the strict partial order \(\prec\) indicates that one guarantee is strictly stronger if it improves on one of the criteria while not being weaker in the other.

Definition \ref{def: partial order} highlights the importance of considering both the indistinguishable pairs and the privacy parameter to evaluate a privacy guarantee. Recall Example \ref{example: group vs. semi}, which compares \((\mathcal{D}, A_{a}, f^{\circ a})\)-DP and \((\mathcal{D}_{t}, A_{a}, f^{\circ a})\)-DP. Although they share the same lower bound on the tradeoff function \(f^{\circ a}\), we have \(\text{IND}(\mathcal{D}_{t},A_a) \subseteq \text{IND}(\mathcal{D},A_{a})\), implying \((\mathcal{D}_{t}, A_{a}, f^{\circ a})\)-DP \(\preceq\) \((\mathcal{D}, A_{a}, f^{\circ a})\)-DP.

\subsection{Properties of Semi-DP}
The Semi-DP framework retains many characteristics of DP within the restricted dataspace. \\

\noindent{\bf Composition:} When a data curator considers a joint release of composition among DP mechanisms and invariants, there are two options in analyzing the privacy guarantee: calculate Semi-DP first and then apply composition (Semi-DP first), or applying composition and then calculating Semi-DP (composition first). Theorem \ref{thm: semi dp composition} tells us that employing Semi-DP first offers a stronger privacy guarantee.

\begin{proposition}\label{thm: semi dp composition} Suppose $M_{i}: \mathcal{D} \rightarrow \mathcal{Y}$ satisfies $(\mathcal{D},A_{1},f_{i})$-DP for all $i \in [k]$. Let $a(t)$ be the semi-adjacent parameter given the invariant $t$. Then Semi-DP first satisfies $(\mathcal{D}_{t},A_{a(t)},f_{1}^{\circ a} \otimes \cdots \otimes f_{k}^{\circ a})$-DP and Composition first satisfies $(\mathcal{D}_{t},A_{a(t)},(f_{1} \otimes \cdots \otimes f_{k})^{\circ a})$-DP.

Moreover, $(\mathcal{D}_{t},A_{a(t)},(f_{1} \otimes \cdots \otimes f_{k})^{\circ a})$-DP $\preceq$ $(\mathcal{D}_{t},A_{a(t)},f_{1}^{\circ a} \otimes \cdots \otimes f_{k}^{\circ a})$-DP.
\end{proposition}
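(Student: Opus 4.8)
The plan is to establish the two stated guarantees by applying the earlier results in the two different orders, and then to reduce the partial-order claim to a single inequality between tradeoff functions. For the \emph{Semi-DP first} route I would apply Proposition~\ref{thm: semi dp by semi adj para} to each $M_i$ separately to obtain that $M_i$ is $(\mathcal{D}_t, A_{a(t)}, f_i^{\circ a})$-DP, and then invoke the composition (tensor) property on the common pair $(\mathcal{D}_t, A_{a(t)})$ to conclude $(\mathcal{D}_t, A_{a(t)}, f_1^{\circ a}\otimes\cdots\otimes f_k^{\circ a})$-DP. For the \emph{composition first} route I would first apply the composition property on $(\mathcal{D}, A_1)$ to see that the joint mechanism is $(\mathcal{D}, A_1, f_1\otimes\cdots\otimes f_k)$-DP, and then apply Proposition~\ref{thm: semi dp by semi adj para} to restrict to $\mathcal{D}_t$, giving $(\mathcal{D}_t, A_{a(t)}, (f_1\otimes\cdots\otimes f_k)^{\circ a})$-DP. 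Both derivations are routine consequences of results already in hand.

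It then remains to verify the $\preceq$ relation. By Definition~\ref{def: partial order}, since both guarantees live on the \emph{same} space $\mathcal{D}_t$ with the \emph{same} adjacency $A_{a(t)}$, the two sets of indistinguishable pairs coincide, so the condition $\text{IND}(\mathcal{D}_t, A_{a(t)}) \subseteq \text{IND}(\mathcal{D}_t, A_{a(t)})$ holds trivially, and the whole claim collapses to the tradeoff-function inequality
\[
(f_1\otimes\cdots\otimes f_k)^{\circ a} \;\le\; f_1^{\circ a}\otimes\cdots\otimes f_k^{\circ a}.
\]
I would prove this by two nested inductions. Using associativity of $\otimes$, induction on $k$ reduces the statement to the two-factor case. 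For the two-factor case I would induct on $a$: writing $(f\otimes g)^{\circ a} = (f\otimes g)\circ(f\otimes g)^{\circ(a-1)}$, invoking the inductive hypothesis $(f\otimes g)^{\circ(a-1)} \le f^{\circ(a-1)}\otimes g^{\circ(a-1)}$ together with monotonicity of functional composition in its inner argument (valid since tradeoff functions are non-decreasing under this convention), and then appealing to the key sub-lemma
\[
(f\otimes g)\circ(h\otimes l) \;\le\; (f\circ h)\otimes(g\circ l)
\]
with $h = f^{\circ(a-1)}$ and $l = g^{\circ(a-1)}$, which collapses the right side to $f^{\circ a}\otimes g^{\circ a}$.

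The crux, and the main obstacle, is the sub-lemma, for which my approach is a \emph{chain-realization} argument. I would realize each factor by a length-two chain of distributions: choose $P_1,Q_1,R_1$ with $T(P_1,Q_1)=f$, $T(Q_1,R_1)=h$ and, crucially, with the composition bound \emph{tight}, $T(P_1,R_1)=f\circ h$; similarly $P_2,Q_2,R_2$ realizing $g$, $l$, and $g\circ l$. On the product chain $(P_1\times P_2)\to(Q_1\times Q_2)\to(R_1\times R_2)$ the single-step tradeoffs are exactly $f\otimes g$ and $h\otimes l$, so the iterated triangle (group-privacy) bound in the product space gives $T(P_1\times P_2,\,R_1\times R_2) \ge (f\otimes g)\circ(h\otimes l)$, while tensorization across coordinates gives the exact identity $T(P_1\times P_2,\,R_1\times R_2) = T(P_1,R_1)\otimes T(P_2,R_2) = (f\circ h)\otimes(g\circ l)$; combining the two yields the sub-lemma.

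The delicate point on which I expect to spend the most effort is securing the \emph{tight} realization $T(P,R)=f\circ h$ for arbitrary tradeoff functions, i.e.\ that the group-privacy/triangle bound is actually attained rather than merely dominating. I would settle this by exhibiting a canonical chain: the location-family construction that makes $G_\mu^{\circ a}=G_{a\mu}$ tight generalizes, and the realizability theory of the $f$-DP framework guarantees such a chain exists. Since the $a$-induction only ever composes a single $f$ (resp.\ $g$) with itself, it in fact suffices to realize $f^{\circ a}$ tightly by an $a$-step chain in which every step equals $f$, which is the cleaner and more tractable statement to verify.
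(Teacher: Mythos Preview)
Your argument for the two guarantees and for reducing the partial-order claim to the inequality $(f_1\otimes\cdots\otimes f_k)^{\circ a}\le f_1^{\circ a}\otimes\cdots\otimes f_k^{\circ a}$ is exactly the paper's proof. The only divergence is that the paper does not prove this inequality: it quotes it as Lemma~\ref{lemma: tradeoff function} (Proposition~A.7 of \cite{awan2022log}) and finishes in one line, whereas you sketch a from-scratch proof via a chain-realization and the sub-lemma $(f\otimes g)\circ(h\otimes l)\le(f\circ h)\otimes(g\circ l)$.

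Your sketch is structurally sound, and you correctly flag the one nontrivial point: the existence of a chain $P\to Q\to R$ with $T(P,Q)=f$, $T(Q,R)=f^{\circ(a-1)}$, and $T(P,R)=f^{\circ a}$ \emph{exactly}. This tight realization is not automatic for arbitrary tradeoff functions, but your instinct to use a location family is the right one: taking $F$ a CND for $f$ and setting $P=F$, $Q=F(\cdot-1)$, $R=F(\cdot-a)$ does the job, since the CND construction in \cite{awan2023canonical,awan2022log} is precisely designed to make the group-privacy bound tight along integer shifts. So your extra work is correct once that citation is invoked---but at that point you may as well cite the finished inequality directly, as the paper does.
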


Note that both approaches share the same indistinguishable pairs. Therefore, comparison between the two privacy guarantees depends only on the tradeoff function. We establish that the tradeoff function for Semi-DP has a lower bound of \( f_{1}^{\circ a} \otimes \cdots \otimes f_{k}^{\circ a} \), while the Composition approach yields \( (f_{1} \otimes \cdots \otimes f_{k})^{\circ a} \). Using the result from \cite{awan2022log} that we present as Lemma \ref{lemma: tradeoff function} in the Appendix, we have that \( (f_{1} \otimes \cdots \otimes f_{k})^{\circ a} \) is the smaller function. Therefore, Theorem \ref{thm: semi dp composition} implies that Semi-DP first provides a stronger privacy guarantee.

\begin{remark}
The composition of $k$ Semi-DP mechanisms $((M_{1}(X),T_{1}(X)), \cdots, (M_{k}(X),T_{k}(X)))$ can be addressed by Theorem \ref{thm: semi dp composition} by taking $T(X) = (T_{1}(X),\cdots,T_{k}(X))$. On the other hand, we do not have an efficient method to sequentially update the impact due to invariant. 
\end{remark}

\noindent
{\bf Invariance to Post-Processing:} Similar to DP, in Semi-DP, applying further data-independent transformations to the mechanism's output will not weaken the privacy guarantee. Note that in Semi-DP, the transformation may depend on the invariant.

\begin{proposition}\label{thm: semi dp postprocessing}
    Suppose $M(X)$ is $(\mathcal{D}_{t},A,f)$-DP, and let $\proc$ be a randomized function taking as input the output of $M(X)$ and the invariant $t$. Then for any pair of databases $X, X' \in \mathcal{D}_{t}$ such that $A(X,X')=1$, we have
    \begin{equation*}
        T(\proc(M(X),t),\proc(M(X')),t) \geq f.
    \end{equation*}
\end{proposition}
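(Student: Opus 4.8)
The plan is to reduce post-processing invariance of Semi-DP to the standard post-processing invariance of $f$-DP, which we may treat as an established property. The key observation is that the definition of $f$-semi DP in Definition~\ref{def: semi dp} states that $(M(X),T(X))$ satisfies $f$-semi DP precisely when $M(X)$ satisfies $(\mathcal{D}_t, A, f)$-DP as an ordinary $f$-DP mechanism on the restricted dataspace $\mathcal{D}_t$. Since all databases in $\mathcal{D}_t$ share the same invariant value $T(X)=t$, the post-processing map $\proc(\cdot, t)$ that additionally takes $t$ as input is, from the perspective of the fixed dataspace $\mathcal{D}_t$, simply a data-\emph{independent} randomized function of the output of $M$: the second argument $t$ is a constant across all of $\mathcal{D}_t$ and does not depend on which $X \in \mathcal{D}_t$ produced the output.

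Concretely, I would first fix the invariant value $t$ and define the map $\widetilde{\proc}(y) := \proc(y, t)$, a randomized function $\mathcal{Y} \to \mathcal{Z}$ with no dependence on the input database. Because $M$ is $(\mathcal{D}_t, A, f)$-DP, the standard invariance-to-post-processing property of $f$-DP (stated earlier in the Preliminaries) applies verbatim on the dataspace $\mathcal{D}_t$: for the data-independent randomized map $\widetilde{\proc}$, the composed mechanism $\widetilde{\proc}(M(X))$ is again $(\mathcal{D}_t, A, f)$-DP. Unpacking the definition of $f$-DP on $\mathcal{D}_t$ then yields exactly $T(\proc(M(X),t),\proc(M(X'),t)) \ge f$ for every adjacent pair $X,X' \in \mathcal{D}_t$ with $A(X,X')=1$, which is the claimed inequality.

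The only subtle point, and the step I expect to require the most care, is justifying that conditioning the post-processing on the invariant $t$ does \emph{not} break data-independence. This is where the restriction to $\mathcal{D}_t$ is essential: in general post-processing must not depend on the confidential input, and $t = T(X^*)$ is a function of the data. The resolution is that within $\mathcal{D}_t$ every database yields the identical value $t$, so $t$ carries no information that distinguishes the adjacent pair $X$ from $X'$; it acts as a fixed constant, and $\widetilde{\proc}$ is therefore legitimately data-independent on this domain. I would state this explicitly, perhaps noting that the standard post-processing theorem is insensitive to any auxiliary constant hard-coded into the channel, so that the tradeoff-function lower bound $f$ is preserved under composition with $\widetilde{\proc}$.
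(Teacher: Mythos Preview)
Your argument is correct and is precisely the reasoning the paper relies on: the paper does not give a separate proof of this proposition in the appendix, treating it as an immediate consequence of the standard $f$-DP post-processing property once one observes (as the text surrounding the statement does) that $t$ is constant across $\mathcal{D}_t$ and hence $\proc(\cdot,t)$ is a data-independent transformation on that restricted dataspace. Your explicit construction of $\widetilde{\proc}(y)=\proc(y,t)$ and appeal to the preliminary post-processing property is exactly the intended justification.
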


\noindent
{\bf Converting to Other DP Guarantees:} In Section \ref{sec: preliminary}, we discussed the conversion between \(\mu\)-GDP and \((\epsilon,\delta)\)-DP. Since our Semi-DP framework relies on the same similarity measures as in existing DP definitions while focusing on the dataspace and adjacency relations influenced by invariants, all conversions valid in standard DP should remain applicable in our framework. For example, $f$-DP can be converted to divergence-based version DP, such as R\'enyi-DP \citep{mironov2017renyi}.

\section{Mechanism Design}\label{sec: mechanism design}

In this section, we introduce customized mechanisms which leverage the invariant in their construction in order to minimize the amount of noise added to achieve Semi-DP. To this end, we leverage the sensitivity space, which is a fundamental tool in additive mechanisms. Specifically we develop the optimal $K$-norm mechanism in Semi-DP.

\subsection{Mechanism Design via Analysis on Sensitivity Space}

If Semi-DP is the target privacy guarantee, we can optimize the choice of the mechanism \( M \) and any  post-processing to leverage the invariant \( t \). By incorporating \( t \), it is possible to reduce unnecessary noise while still satisfying the target privacy guarantee.

Recall that the sensitivity space, as defined in Definition \ref{def: sensitivity space}, encompasses all possible differences in a query $\phi: \mathcal{D} \rightarrow \mathbb{R}^{d}$ between adjacent databases. The additive noise is scaled according to the sensitivity of $\phi$, which is the largest radius of this sensitivity spaceto mask the impact of differences in adjacent input databases.

In Semi-DP given invariant $t$ and target privacy guarantee $(\mathcal{D}_{t},A_{a(t)},f)$, the sensitivity space of $\phi$ is 
\[
S_{Semi} = \{\phi(X)-\phi(X'): A_{a(t)}(X,X')=1\}.
\]

Using the Semi-DP sensitivity space, we can seamlessly apply many existing DP methods. This encompasses traditional mechanisms like the Gaussian and Laplace mechanisms, as well as more advanced methods such as DP-empirical risk minimization \citep{chaudhuri2011differentially} and DP-stochastic gradient descent \citep{abadi2016deep}. Consequently, our framework supports the flexible application of established DP tools, distinguishing itself from prior work \citep{gong2020congenial, gao2022subspace}, which focused solely on invariants which are linear transformations and mechanisms of a particular structure.

While the idea of using \( S_{Semi} \) is simple, it is also highly effective. This effectiveness becomes evident when compared to other naive approaches. A straightforward method for designing a mechanism that satisfies \((\mathcal{D}_{t}, A_{a(t)}, f)\)-DP is to first construct a mechanism that satisfies \((\mathcal{D}, A_{1}, g)\)-DP, where \( g^{\circ a} \geq f \). By leveraging group privacy, this mechanism would then satisfy \((\mathcal{D}, A_{a}, f)\)-DP, and consequently \((\mathcal{D}_{t}, A_{a(t)}, f)\)-DP as well. Another naive approach would be to directly design a mechanism that satisfies \((\mathcal{D}, A_{a}, f)\)-DP, that is group privacy of size $a$ with privacy parameter $f$, which then automatically satisfies \((\mathcal{D}_{t}, A_{a(t)}, f)\)-DP. However, neither of thses approaches fully leverages the information in $t$.

To further illustrate the advantage of leveraging $S_{Semi}$ from the perspective of the sensitivity space, let us consider the Gaussian mechanism that satisfies \(\mu\)-GDP. In the first naive approach, we use the sensitivity calculated from \( S_{DP} := \{\phi(X) - \phi(X'): A_{1}(X, X') = 1\} \) to construct a mechanism that satisfies \((\mathcal{D}, A_{1}, G_{\mu/a})\)-DP. According to the Gaussian mechanism described in Proposition \ref{prop: standard gaussian mechanism}, substituting \(\mu\) with \(\mu/a\) adjusts the variance of the noisy Gaussian noise to $\sigma \geq $\(a \Delta_{2} / \mu\). Thus, the first method can be interpreted as calibrating noise using an enlarged sensitivity space \( a S_{DP} \), effectively scaling the original \( S_{DP} \) by \( a \). On the other hand, the second approach directly targets indistinguishability for all pairs within distance \( a \) in the dataspace, leading to a sensitivity space defined as \( S_{Group} := \{\phi(X) - \phi(X'): A_{a}(X, X') = 1\} \). Comparing \( S_{Semi} \) with \( a S_{DP} \) and \( S_{Group} \), we the following:

\begin{proposition}\label{thm: naive SS large}
For $S_{DP}$, $S_{Group}$ and $S_{Semi}$, we have 
\begin{equation*}
S_{Semi} \subseteq S_{Group} \text{ and } \hull(S_{Group}) \subseteq a\hull(S_{DP}).
\end{equation*}
Moreover, $\sup_{u \in S_{Semi}}\Vert u \Vert \leq \sup_{u \in S_{Group}}\Vert u \Vert \leq a \sup_{u \in S_{DP}}\Vert u \Vert$,
for any norm $\Vert \cdot \Vert$.
\end{proposition}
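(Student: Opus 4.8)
The plan is to treat the three assertions separately, observing that the inclusion $S_{Semi} \subseteq S_{Group}$ and the first norm inequality are essentially immediate from set containment, so that the substantive content is the convex-hull bound $\hull(S_{Group}) \subseteq a\,\hull(S_{DP})$. Throughout I take $a = a(t)$, matching the threshold $A_{a(t)}$ appearing in $S_{Semi}$ with the group size $A_{a}$ appearing in $S_{Group}$.

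First I would dispatch the easy inclusion. Both $S_{Semi}$ and $S_{Group}$ collect differences $\phi(X)-\phi(X')$ subject to $d(X,X') \leq a$, but $S_{Semi}$ restricts the admissible pairs to the invariant-conforming set $\mathcal{D}_t \subseteq \mathcal{D}$, whereas $S_{Group}$ ranges over all of $\mathcal{D}$. Every pair admissible for $S_{Semi}$ is therefore admissible for $S_{Group}$, giving $S_{Semi} \subseteq S_{Group}$; and since the supremum of a fixed function over a subset cannot exceed its supremum over a superset, this at once yields $\sup_{u \in S_{Semi}}\Vert u \Vert \leq \sup_{u \in S_{Group}}\Vert u \Vert$ for every norm.

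The central step is $\hull(S_{Group}) \subseteq a\,\hull(S_{DP})$, where I would invoke the standing hypothesis that $d$ is an adjacency metric. For any $u \in S_{Group}$, write $u = \phi(X)-\phi(X')$ with $b := d(X,X') \leq a$. The adjacency-metric property supplies a chain $X = X_0, X_1, \dots, X_b = X'$ with $A_1(X_{i-1},X_i)=1$ for every $i$, which telescopes to $u = \sum_{i=1}^{b}\bigl(\phi(X_{i-1})-\phi(X_i)\bigr)$, each summand $s_i := \phi(X_{i-1})-\phi(X_i)$ lying in $S_{DP}$. Because adjacency is symmetric, $S_{DP}$ is symmetric about the origin, so $0 = \tfrac12(s-s) \in \hull(S_{DP})$ for any $s \in S_{DP}$. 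Padding the chain with $a-b$ copies of the origin, I can express $\tfrac{1}{a}u$ as a convex combination, each weight $1/a$, of the $a$ points $s_1,\dots,s_b,0,\dots,0$, all of which lie in $\hull(S_{DP})$. Hence $\tfrac1a u \in \hull(S_{DP})$, i.e. $u \in a\,\hull(S_{DP})$; since $a\,\hull(S_{DP})$ is convex, the inclusion passes from $S_{Group}$ to $\hull(S_{Group})$.

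Finally, the remaining bound $\sup_{u \in S_{Group}}\Vert u \Vert \leq a\sup_{u \in S_{DP}}\Vert u \Vert$ can be obtained either directly from the telescoping decomposition by the triangle inequality, via $\Vert u \Vert \leq \sum_{i=1}^{b}\Vert s_i \Vert \leq b\sup_{S_{DP}}\Vert \cdot \Vert \leq a\sup_{S_{DP}}\Vert \cdot \Vert$, or by combining the hull containment with the standard fact that a norm, being convex, attains its supremum over a convex hull on the generating set, so that $\sup_{a\,\hull(S_{DP})}\Vert \cdot \Vert = a\sup_{S_{DP}}\Vert \cdot \Vert$. I expect the main obstacle to be the careful treatment of the case $b < a$ in the hull containment: the chain produces only $b$ nonzero increments, and realizing $u/a$ as a convex combination of exactly $a$ points of $\hull(S_{DP})$ forces one to pad with the origin, which is precisely where both the symmetry of $S_{DP}$ (ensuring $0 \in \hull(S_{DP})$) and the adjacency-metric assumption (ensuring the chain decomposition exists) are essential.
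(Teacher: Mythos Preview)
Your proposal is correct and follows essentially the same strategy as the paper: both arguments reduce to the adjacency-metric chain decomposition $\phi(X)-\phi(X') = \sum_i (\phi(X_{i-1})-\phi(X_i))$ with each increment in $S_{DP}$, and then observe that dividing by $a$ yields a convex combination. The only organizational difference is that you first show $S_{Group}\subseteq a\,\hull(S_{DP})$ and then pass to the hull by convexity, whereas the paper starts from a generic point of the hull and decomposes each extreme summand; your explicit padding with $0\in\hull(S_{DP})$ for the case $d(X,X')=b<a$ is a welcome clarification that the paper's proof leaves implicit.
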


According to Proposition \ref{thm: naive SS large}, we conclude that \( S_{Semi} \) allows smaller noise than using other naive counterparts. This is because \( S_{Semi} \) requires masking only the differences that occur between pairs we target to protect, instead of considering differences across all pairs, including those that cannot be protected due to the given invariant $t$. \\

\noindent
{\bf Downsizing the Noise by Projection:} We propose a method to reduce the additive noise 
when the sensitivity space is rank-deficient, meaning rank$(\Span(S)) < d$. A typical example occurs when the invariant $t$ is a linear transformation of the query $\phi$, as considered in \cite{gao2022subspace}.

\begin{proposition}\label{prop: linear invariant rank def}
    Suppose the invariant \( t \) is a non-trivial linear transformation of \( \phi(X) \), meaning \( t = L\phi(X) \) for some matrix \( L \) such that \( \text{ker}(L) \neq \mathcal{D} \), and the semi-adjacent parameter \( a(t) \) is given. Then, the sensitivity space \( S_{Semi} = \{\phi(X) - \phi(X') : A_{a(t)}(X,X') = 1 \text{ and } L\phi(X) = L\phi(X')\} \) is rank-deficient.
\end{proposition}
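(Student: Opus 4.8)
The plan is to exploit the fact that the linear invariant constraint $L\phi(X) = L\phi(X')$ forces every element of $S_{Semi}$ to lie in the kernel of $L$, and then to invoke rank-nullity to bound the dimension of the span. The whole argument is short, so the real work is just reading the non-triviality hypothesis correctly.

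First I would take an arbitrary element $u = \phi(X) - \phi(X') \in S_{Semi}$. By construction the defining pair $(X,X')$ satisfies both $A_{a(t)}(X,X') = 1$ and $L\phi(X) = L\phi(X')$. The second condition gives immediately
\[
Lu = L\phi(X) - L\phi(X') = 0,
\]
so $u \in \ker(L)$. Since $u$ was arbitrary, this shows $S_{Semi} \subseteq \ker(L)$, and because $\ker(L)$ is a linear subspace it also contains the span, i.e. $\Span(S_{Semi}) \subseteq \ker(L)$. This containment is the crux of the proposition and is where the assumption that the invariant is \emph{linear} in $\phi$ is used in an essential way.

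Next I would bound $\dim(\ker(L))$. The hypothesis that $t = L\phi(X)$ is a non-trivial linear transformation means $L \neq 0$, equivalently $\ker(L) \neq \mathbb{R}^{d}$ (this is the content of the condition written $\ker(L) \neq \mathcal{D}$ in the statement). Hence $\mathrm{rank}(L) \geq 1$, and rank-nullity gives $\dim(\ker(L)) = d - \mathrm{rank}(L) \leq d-1$. Combining this with the containment from the previous step yields
\[
\dim\big(\Span(S_{Semi})\big) \;\leq\; \dim(\ker(L)) \;\leq\; d-1 \;<\; d,
\]
which is exactly the claim that $S_{Semi}$ is rank-deficient. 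There is no genuine obstacle beyond correctly interpreting the non-triviality hypothesis: it is precisely what rules out the degenerate case $L = 0$, in which the linear constraint $L\phi(X)=L\phi(X')$ would be vacuous and $S_{Semi}$ could well be full-rank. In the proof write-up I would state that degenerate exclusion explicitly so the role of ``non-trivial'' is transparent.
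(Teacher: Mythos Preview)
Your proof is correct and follows essentially the same approach as the paper: show $S_{Semi}\subseteq\ker(L)$ via linearity of the invariant constraint, then use non-triviality of $L$ to conclude $\dim(\ker(L))<d$. Your write-up is in fact a bit more explicit than the paper's, spelling out the rank--nullity step and the interpretation of the non-triviality hypothesis.
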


In such cases, we do not need to add noise in these negligible directions as adjacent databases yield identical values, and so they provide no clues to the adversary for distinguishing between the adjacent databases. By projecting the noise onto the subspace where the query value actually changes, we can avoid adding noise to such directions, leading to the same privacy guarantee with smaller noise.

\begin{proposition}\label{thm: projection framework} (Projection to Downsize Additive Noise) Let $\phi:\mathcal{D} \rightarrow \mathbb{R}^{d}$ be a query, and suppose the mechanism $M(X) = \phi(X) + e$ satisfies $(\mathcal{D}, A, f)$-DP, where $e$ is a random vector. Then, a mechanism $\phi(X) + \Proj_{\mathcal{S}}e$ also satisfies $(\mathcal{D}, A, f)$-DP, where $\mathcal{S} = \Span(S_{\phi})$ is the span of the sensitivity space and $\Proj_{\mathcal{S}}$ is the orthogonal projection operator onto $\mathcal{S}$. \end{proposition}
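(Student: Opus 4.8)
The plan is to exploit the orthogonal decomposition $\mathbb{R}^{d} = \mathcal{S} \oplus \mathcal{S}^{\perp}$, where $\mathcal{S} = \Span(S_{\phi})$, together with two facts: that deterministic post-processing cannot weaken an $f$-DP guarantee (the post-processing invariance established in Section~\ref{sec: preliminary}), and that the trade-off function $T$ is invariant under translating both of its arguments by a common vector. The guiding observation is that the difference $\phi(X) - \phi(X')$ of any adjacent pair lies in $S_{\phi} \subseteq \mathcal{S}$, so the two queries agree exactly in the $\mathcal{S}^{\perp}$ directions; hence the noise $\Proj_{\mathcal{S}^{\perp}} e$ that we discard carries no information that could help an adversary distinguish $X$ from $X'$.

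Concretely, first I would apply the deterministic map $\proc = \Proj_{\mathcal{S}}$ to the output of $M$. By post-processing invariance, the mechanism $\tilde{M}(X) := \Proj_{\mathcal{S}}(M(X)) = \Proj_{\mathcal{S}}\phi(X) + \Proj_{\mathcal{S}} e$ is $(\mathcal{D}, A, f)$-DP. Next I would rewrite the target mechanism using $\phi(X) = \Proj_{\mathcal{S}}\phi(X) + \Proj_{\mathcal{S}^{\perp}}\phi(X)$ to obtain
\[
M'(X) = \phi(X) + \Proj_{\mathcal{S}} e = \tilde{M}(X) + \Proj_{\mathcal{S}^{\perp}}\phi(X),
\]
so that $M'$ is $\tilde{M}$ shifted by the deterministic, but data-dependent, vector $\Proj_{\mathcal{S}^{\perp}}\phi(X)$.

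The step I expect to require the most care is handling this data-dependent shift, since it is not a valid data-\emph{independent} post-processing when viewed globally. The resolution is to verify the $f$-DP condition pairwise. Fix any $X, X'$ with $A(X,X') = 1$. Because $\phi(X) - \phi(X') \in S_{\phi} \subseteq \mathcal{S}$, we have $\Proj_{\mathcal{S}^{\perp}}(\phi(X) - \phi(X')) = 0$, so the shift $c := \Proj_{\mathcal{S}^{\perp}}\phi(X) = \Proj_{\mathcal{S}^{\perp}}\phi(X')$ is one and the same vector for both members of the pair. Thus $M'(X) = \tilde{M}(X) + c$ and $M'(X') = \tilde{M}(X') + c$ are obtained by translating $\tilde{M}(X)$ and $\tilde{M}(X')$ by a common $c$. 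Since translation $y \mapsto y + c$ is a bijection with measurable inverse $y \mapsto y - c$, applying the data-processing inequality for trade-off functions in both directions yields $T(M'(X), M'(X')) = T(\tilde{M}(X), \tilde{M}(X')) \geq f$. As this holds for every adjacent pair, $M'$ satisfies $(\mathcal{D}, A, f)$-DP, completing the argument.
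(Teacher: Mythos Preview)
Your proof is correct and follows essentially the same approach as the paper: apply $\Proj_{\mathcal{S}}$ as post-processing, then restore $\Proj_{\mathcal{S}^{\perp}}\phi(X)$ and argue that this shift is common to any adjacent pair since $\phi(X)-\phi(X')\in\mathcal{S}$. The paper packages the latter observation as a separate lemma asserting that $\Proj_{\mathcal{S}^{\perp}}\phi(\cdot)$ is a data-independent constant and invokes post-processing once more, whereas you argue pairwise via translation invariance of the tradeoff function---a slightly more careful but equivalent justification.
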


In Proposition \ref{thm: projection framework}, if the original noise \(e\) perturbs all \(d\) dimensions in which the query takes values, then \(\Proj_{\mathcal{S}} e\) serves to add noise only in the directions where changes occur between adjacent databases. This ensures that noise is applied effectively, targeting only the relevant dimensions influenced by the data change. Indeed, it follows from standard linear algebra that$\Vert \Proj_{\mathcal{S}}e \Vert \leq \Vert e \Vert$ for any norm, as projecting onto a subspace cannot increase the norm, where the strong inequality holds when $\mathcal{S}$ is rank-deficient. Therefore, Proposition \ref{thm: projection framework} helps us to downsize the noise. 

\begin{remark}
Proposition \ref{thm: projection framework} is especially useful when the invariant is a linear transformation of the query $\phi$. For instance, the 2020 US Census has invariant $T(X)$ as a linear mapping of true counts on demographics $\phi(X)$, say $T(X) = C\phi(X)$ for some matrix $C$. Then $T(X)=t$ allows us not to add noise onto subspace $\Span{(C)}$ and in this case, it agrees with the mechanism design in \cite{gao2022subspace} for subspce DP. Note that a similar projection idea was introduced in \cite{kim2022differentially} for general queries, but without considering invariants, and with a specific focus on the Gaussian mechanism. However, we emphasize that Theorem \ref{thm: projection framework} presents a general concept applicable to all additive noise mechanisms. 
\end{remark}

We introduce a modified Gaussian mechanism that satisfies \((\mathcal{D}_t, A_{a(t)}, G_{\mu})\)-DP by leveraging the sensitivity space \( S_{Semi} \) and applying the projection technique outlined in Proposition \ref{thm: projection framework}.

\begin{algorithm}[htb!]
\caption{\texttt{Gaussian Mechanism}}
\label{alg: gaussian mechanism}
\begin{algorithmic}[1]
    \STATE \textbf{Input:} Query $\phi:\mathcal{D}\rightarrow \mathbb{R}^{d}$ to be privatized, invariant $t$, privacy parameter $\mu$
    \STATE Get $S_{Semi} = \{\phi(X) - \phi(X'): A_{a(t)}(X,X')=1\}$ 
    \STATE Define $P$ as an orthogonal projection matrix onto $\Span{(S_{Semi})}$
    \STATE Calculate $\ell_{2}$-sensitivity $\Delta_{2}(S_{Semi}) = \sup_{u \in S_{Semi}} \Vert u \Vert_{2}$
    \STATE Sample $N \sim N_{d}\left(0, (\Delta_{2}(S_{Semi})/\mu)^{2}P\right)$
    \STATE \textbf{Output: } $M(X) = \phi(X) + N$
\end{algorithmic}    
\end{algorithm}

Algorithm \ref{alg: gaussian mechanism} outlines the Gaussian mechanism. The noise \( N \) in Algorithm \ref{alg: gaussian mechanism} can equivalently be understood as the result of sampling from \( N_{d}(0, (\Delta_{2}(S_{Semi})/\mu)^{2} I_{d}) \) and then projecting onto \(\Span(S_{Semi})\). In this context, the projection matrix \( P \) projects vectors from the ambient space \(\mathbb{R}^d\) onto the subspace \(\Span(S_{Semi})\). It indicates that \( P \) effectively reduces the variance of the noise by confining it to the subspace \(\Span(S_{Semi})\). Notably, if \(\{u_{1}, \cdots, u_{s}\}\) is an orthonormal basis of \(\Span(S_{Semi})\), where \( s \leq d \), the projection matrix \( P \) can be constructed as \( P = \sum_{i=1}^{s} u_{i}u_{i}^\top \).

Proposition \ref{thm: gaussian mech guarantee} establishes the privacy guarantee of the Gaussian mechanism.

\begin{proposition}\label{thm: gaussian mech guarantee}
    The mechanism \( M(X) \) in Algorithm \ref{alg: gaussian mechanism} satisfies \((\mathcal{D}_t, A_{a(t)}, G_{\mu})\)-DP.
\end{proposition}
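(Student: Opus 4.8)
The plan is to recognize the mechanism in Algorithm \ref{alg: gaussian mechanism} as the composition of two operations whose guarantees are already available in the excerpt: the standard isotropic Gaussian mechanism (Proposition \ref{prop: standard gaussian mechanism}) and the noise-downsizing projection (Proposition \ref{thm: projection framework}), both instantiated on the restricted dataspace $\mathcal{D}_t$ with adjacency $A_{a(t)}$. First I would rewrite the degenerate noise $N \sim N_d(0,(\Delta_{2}(S_{Semi})/\mu)^2 P)$ as $N \stackrel{d}{=} P\tilde N$, where $\tilde N \sim N_d(0,(\Delta_{2}(S_{Semi})/\mu)^2 I_d)$ is isotropic. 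This identity holds because $P$ is a symmetric idempotent projection, so the covariance of $P\tilde N$ is $P\,(\Delta_{2}(S_{Semi})/\mu)^2 I_d\, P^\top = (\Delta_{2}(S_{Semi})/\mu)^2 P$, matching that of $N$. This is exactly the equivalence noted in the remark following the algorithm.

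Next I would establish the guarantee for the un-projected mechanism $\tilde M(X)=\phi(X)+\tilde N$. Applying Proposition \ref{prop: standard gaussian mechanism} with dataspace $\mathcal{D}_t$ and adjacency $A_{a(t)}$, the relevant $\ell_2$-sensitivity is $\Delta_{2}(\phi;A_{a(t)})=\sup_{u\in S_\phi^{A_{a(t)}}}\|u\|_2$, which is precisely $\sup_{u\in S_{Semi}}\|u\|_2=\Delta_{2}(S_{Semi})$ by the definition of $S_{Semi}$ over $\mathcal{D}_t$. Since the algorithm takes $\sigma=\Delta_{2}(S_{Semi})/\mu \ge \mu^{-1}\Delta_{2}(\phi;A_{a(t)})$ (with equality), Proposition \ref{prop: standard gaussian mechanism} gives that $\tilde M$ satisfies $(\mathcal{D}_t,A_{a(t)},G_\mu)$-DP.

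Then I would invoke the projection framework. Because the span of the Semi-DP sensitivity space $\mathcal{S}=\Span(S_{Semi})$ is exactly the range onto which $P$ projects, Proposition \ref{thm: projection framework} — applied with dataspace $\mathcal{D}_t$, adjacency $A_{a(t)}$, base noise $e=\tilde N$, and $\mathcal{S}=\Span(S_{Semi})$ — yields that $\phi(X)+\Proj_{\mathcal{S}}\tilde N=\phi(X)+P\tilde N$ inherits the $(\mathcal{D}_t,A_{a(t)},G_\mu)$-DP guarantee of $\tilde M$. Combining with the distributional identity $N\stackrel{d}{=}P\tilde N$ from the first step shows $M(X)=\phi(X)+N$ satisfies the same guarantee, completing the argument.

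The main obstacle is careful bookkeeping rather than a deep inequality: I must ensure the sensitivity space $S_\phi$ appearing in Proposition \ref{thm: projection framework} is taken with respect to $A_{a(t)}$ over $\mathcal{D}_t$, so its span coincides with the $\Span(S_{Semi})$ used to build $P$; otherwise the projection could discard a direction in which an admissible difference $\phi(X)-\phi(X')$ genuinely varies, breaking the guarantee. Equivalently, one must verify that every realizable query difference lies in the range of the degenerate covariance, so that no adjacent pair in $\mathcal{D}_t$ becomes perfectly distinguishable; this holds because each such difference belongs to $S_{Semi}\subseteq\Span(S_{Semi})=\mathrm{range}(P)$. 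As a self-contained sanity check I could instead compute the tradeoff function directly: $M(X)$ and $M(X')$ are Gaussians with common degenerate covariance $\Sigma=(\Delta_{2}(S_{Semi})/\mu)^2 P$ and mean gap $v=\phi(X)-\phi(X')\in\Span(S_{Semi})$, giving $T(M(X),M(X'))=G_{\mu^*}$ with $\mu^{*2}=v^\top\Sigma^{+}v=(\mu/\Delta_{2}(S_{Semi}))^2\|v\|_2^2\le\mu^2$; since $G_\mu$ is non-increasing in $\mu$, this yields $G_{\mu^*}\ge G_\mu$, matching the conclusion.
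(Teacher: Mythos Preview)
Your proposal is correct and follows essentially the same route as the paper: establish the $(\mathcal{D}_t,A_{a(t)},G_\mu)$ guarantee for the isotropic mechanism $\phi(X)+\tilde N$, then pass to the projected noise via the post-processing/projection argument and the identity $N\stackrel{d}{=}P\tilde N$. The only cosmetic difference is that the paper verifies the isotropic step by directly computing the tradeoff function (via a lemma on Gaussian tradeoffs) and then re-derives the projection step inline using Lemma \ref{thm: orthogonal complement indep of data}, whereas you cite Proposition \ref{prop: standard gaussian mechanism} and Proposition \ref{thm: projection framework}; your packaging is slightly cleaner but the underlying argument is the same.
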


\subsection{Optimal $K$-Norm Mechanism}\label{sec: optimal K norm mech}

$K$-norm mechanisms are a multi-dimensional extension of the Laplace mechanism that satisfy $\epsilon$-DP, or equivalently, $f_{\epsilon,0}$-DP. Unlike the Gaussian mechanism, one has the freedom to customize the norm. In \cite{awan2021structure}, they proved that a norm ball $K$ which is the convex hull of the sensitivity space leads to the \emph{optimal} $K$-norm mechanism, given that the underlying sensitivity space is of full-rank. 

However, the introduction of an invariant $t$ can make the sensitivity space rank-deficient, rendering their strategy no longer valid as the convex hull of a rank-deficient sensitivity space is not guaranteed to be a norm ball in $\mathbb{R}^{d}$ as it is not of full dimension. To address this, we consider the convex hull of the sensitivity space in the subspace spanned by the sensitivity space, where the convex hull becomes a valid norm ball.

\begin{lemma}\label{thm: convex hull norm ball}
    Let $K$ = $\hull(S)$ be the convex hull of $S$. Then $K$ is a norm ball in $\mathcal{S}=\Span(S)$, provided that the sensitivity space $S$ is bounded.
\end{lemma}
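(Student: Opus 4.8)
The plan is to verify that $K = \hull(S)$ meets the standard convex-geometry characterization of a norm ball, but relative to the subspace $\mathcal{S} = \Span(S)$ rather than the ambient $\mathbb{R}^d$: a subset of a finite-dimensional real vector space is the unit ball of some norm precisely when it is convex, centrally symmetric, bounded, and contains the origin in its (relative) interior. Three of these are nearly immediate. Convexity holds because $K$ is by definition a convex hull. Boundedness is inherited directly from the hypothesis that $S$ is bounded, since the convex hull of a bounded set in $\mathbb{R}^d$ is bounded (it lies in any ball containing $S$). Central symmetry, $K = -K$, follows from the symmetry of the sensitivity space: because the adjacency relation $A$ is symmetric, $u = \phi(X) - \phi(X') \in S$ forces $-u = \phi(X') - \phi(X) \in S$, so $S = -S$ and hence $\hull(S) = \hull(-S) = -\hull(S)$.

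The crux is to establish that $0 \in \mathrm{relint}(K)$, the interior of $K$ taken relative to $\mathcal{S}$; this is exactly the step that replaces the full-rank argument of \cite{awan2021structure}, and it is where restricting to $\mathcal{S} = \Span(S)$ is essential, since in $\mathbb{R}^d$ the rank-deficient set $K$ has empty interior and cannot be a norm ball there. I would argue as follows. Let $s = \dim \mathcal{S}$. Since $S$ spans $\mathcal{S}$, I can extract a basis $u_1, \dots, u_s \in S$. By symmetry $\pm u_1, \dots, \pm u_s \in S \subseteq K$, so the cross-polytope $C = \hull\{\pm u_1, \dots, \pm u_s\}$ satisfies $C \subseteq K$. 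The $2s$ points $\pm u_i$ affinely span $\mathcal{S}$ (their affine hull contains the midpoint $0$ together with each $u_i$, hence all of $\Span\{u_i\} = \mathcal{S}$), so $C$ is full-dimensional in $\mathcal{S}$. Writing $0 = \sum_{i=1}^{s} \tfrac{1}{2s} u_i + \sum_{i=1}^{s} \tfrac{1}{2s}(-u_i)$ exhibits the origin as a convex combination of the vertices of $C$ with all coefficients strictly positive, so by the standard description of the relative interior of a polytope, $0 \in \mathrm{relint}(C)$. Since $\mathrm{relint}(C)$ is open in $\mathcal{S}$ and contained in $K$, the origin is an interior point of $K$ relative to $\mathcal{S}$, i.e.\ $0 \in \mathrm{relint}(K)$.

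With these four properties in hand I would conclude by passing to the Minkowski gauge $\Vert x \Vert_K = \inf\{\lambda > 0 : x \in \lambda K\}$ for $x \in \mathcal{S}$: convexity of $K$ yields the triangle inequality, positive homogeneity is automatic, central symmetry gives $\Vert -x \Vert_K = \Vert x \Vert_K$, boundedness forces $\Vert x \Vert_K = 0 \Rightarrow x = 0$, and $0 \in \mathrm{relint}(K)$ guarantees that $\Vert \cdot \Vert_K$ is finite on all of $\mathcal{S}$. Hence $\Vert \cdot \Vert_K$ is a genuine norm on $\mathcal{S}$ with unit ball $K$, as claimed. I expect the relative-interior argument to be the \emph{main obstacle}, since the other three properties are direct, whereas this step is what the rank-deficiency makes delicate. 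The one technical caveat I would flag is closedness: if $S$ is merely bounded and not compact, $\hull(S)$ need not be closed, so $K$ agrees with the unit ball of $\Vert \cdot \Vert_K$ only up to its boundary. This does not affect well-definedness of the norm and can be tidied up, if a closed unit ball is preferred, by replacing $K$ with $\overline{\hull(S)}$.
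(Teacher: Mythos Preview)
Your proposal is correct and follows the same high-level strategy as the paper: verify that $K=\hull(S)$ is convex, bounded, centrally symmetric, and absorbing in $\mathcal{S}$, then invoke the Minkowski gauge. The first three properties are handled identically. For the crucial fourth property, the paper phrases it as ``absorbing'' and proves it by taking an arbitrary $v\in\mathcal{S}$, writing it as a finite linear combination $\sum a_i s_i$ of elements of $S$, splitting by the sign of the coefficients, and using symmetry $-s_j\in S$ to exhibit $v/\bigl(\sum_i|a_i|\bigr)$ explicitly as a convex combination of points in $S$. You instead phrase it as $0\in\mathrm{relint}(K)$ and prove it by extracting a basis $u_1,\dots,u_s\in S$ and noting that the symmetric cross-polytope $\hull\{\pm u_i\}$ sits inside $K$ and has $0$ in its relative interior. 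Both arguments are standard and equivalent in finite dimensions; yours is a bit cleaner geometrically, while the paper's gives an explicit scaling constant. Your closing caveat about closedness of $K$ is a nice observation that the paper does not address.
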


Note that the boundedness condition on the sensitivity space is inevitable, otherwise, we cannot mask the change between adjacent databases. For $K = \hull(S)$, we define the \( K \)-norm  \(\Vert \cdot \Vert_{K}: \mathcal{S} \rightarrow \mathbb{R}^{\geq 0}\) in the subspace \(\mathcal{S}\) by $\Vert v \Vert_{K} := \inf\{c \in \mathbb{R}^{\geq 0}: u \in cK \}$. If $s$ is the dimension of $\mathcal{S}$, we can further define a corresponding norm in \(\mathbb{R}^s\), an isomorphic counterpart of $\mathcal{S}$. Specifically, let \(\Vert \cdot \Vert_{\mathbb{R}^s} : \mathbb{R}^s \rightarrow \mathbb{R}\) be defined by \(\Vert v \Vert_{\mathbb{R}^s} = \Vert \theta^{-1}(v) \Vert_{K}\) for \( v \in \mathbb{R}^s \), where $\theta: \mathcal{S} \rightarrow \mathbb{R}^{s}$ is a linear isomorphism. For instance, if $\{v_{1},\dots,v_{s}\}$ is an orthonormal basis of $\mathcal{S}$, $\theta$ can map it to the standard basis $\{e_{1},\dots, e_{s}\}$ in $\mathbb{R}^{s}$. The \( K \)-norm mechanism described in Algorithm \ref{alg: optimal K mech} samples in \(\mathbb{R}^s\) via norm $\Vert \cdot \Vert_{\mathbb{R}^{s}}$. The noise is then transformed back into the ambient space by the isomorphism. 

\begin{algorithm}[htb!]
\caption{\texttt{Optimal $K$-Norm Mechanism}}
\label{alg: optimal K mech}
\begin{algorithmic}[1]
    \STATE \textbf{Input:} Query $\phi$ to be privatized, invariant $t$, privacy parameter $\epsilon$
    \STATE Get $S_{Semi} = \{\phi(X) - \phi(X'): A_{a(t)}(X,X')=1\}$ and set $s = \text{dim}(\Span{(S_{Semi})})$
    \STATE Define $K$-norm in $\mathcal{S} = \Span(S_{Semi})$ where $K = \hull(S_{Semi})$, and isomorphism $\theta: \mathcal{S} \rightarrow \mathbb{R}^{s}$ 
    \STATE Define a norm \(\Vert \cdot \Vert_{\mathbb{R}^s} : \mathbb{R}^s \rightarrow \mathbb{R}\) by \(\Vert v \Vert_{\mathbb{R}^s} = \Vert \theta^{-1}(v) \Vert_{K}\) for \( v \in \mathbb{R}^s \)
    \STATE Sample $V \sim f_{V}(v) \propto \exp \left( -\epsilon \Vert v \Vert_{\mathbb{R}^{s}} \right) $
    \STATE \textbf{Output: } $\phi(X) + \theta^{-1}(V)$
\end{algorithmic}    
\end{algorithm}

\begin{proposition}\label{prop: sensitivity one}
For a query $\phi: \mathcal{D} \rightarrow \mathbb{R}^{d}$, consider the sensitivity space $S_{\text{Semi}} = \{ \phi(X) - \phi(X') : A_{a(t)}(X, X') = 1 \}$. Let $\mathcal{S} = \operatorname{Span}(S_{\text{Semi}})$ be a subspace spanned by $S_{\text{Semi}}$, and let $\theta: \mathcal{S} \rightarrow \mathbb{R}^{s}$ be a linear isomorphism, where $s = \dim(\mathcal{S})$. Define the norm $\Vert \cdot \Vert_{\mathbb{R}^{s}}: \mathbb{R}^{s} \rightarrow \mathbb{R}$ by $ \Vert v \Vert_{\mathbb{R}^{s}} = \Vert \theta^{-1}(v) \Vert_{K}$ for $v \in \mathbb{R}^{s}$, where $\Vert \theta^{-1}(v) \Vert_{K}$ is the $K$-norm with $K = \hull(S_{Semi})$. Then, we have
\[
\Delta_{s} = \sup_{A_{a(t)}(X, X') = 1} \Vert \theta(\phi(X) - \phi(X')) \Vert_{\mathbb{R}^{s}} = 1.
\]
\end{proposition}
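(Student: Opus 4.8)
The plan is to reduce the claim to a statement purely about the $K$-norm on the sensitivity space and then exploit the defining property of the convex hull. First, I would observe that because $\theta$ is a linear isomorphism and $\Vert v \Vert_{\mathbb{R}^s} = \Vert \theta^{-1}(v) \Vert_{K}$ by construction, we have $\Vert \theta(u) \Vert_{\mathbb{R}^s} = \Vert u \Vert_{K}$ for every $u \in \mathcal{S}$. Since the differences $\phi(X) - \phi(X')$ with $A_{a(t)}(X,X')=1$ range exactly over $S_{Semi} \subseteq \mathcal{S}$, this identifies $\Delta_{s} = \sup_{u \in S_{Semi}} \Vert u \Vert_{K}$. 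It therefore suffices to prove $\sup_{u \in S_{Semi}} \Vert u \Vert_{K} = 1$, where $K = \hull(S_{Semi})$ is the unit ball of $\Vert \cdot \Vert_{K}$ by Lemma \ref{thm: convex hull norm ball}.

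For the upper bound, I would note that $S_{Semi} \subseteq \hull(S_{Semi}) = K$, and since $K$ is the unit ball of the $K$-norm (realized as the Minkowski functional $\Vert v \Vert_{K} = \inf\{c \geq 0 : v \in cK\}$), every $u \in S_{Semi}$ satisfies $u \in 1 \cdot K$, hence $\Vert u \Vert_{K} \leq 1$. Taking the supremum gives $\sup_{u \in S_{Semi}} \Vert u \Vert_{K} \leq 1$.

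For the matching lower bound, I would argue by contradiction using boundedness. Suppose $\sup_{u \in S_{Semi}} \Vert u \Vert_{K} = c < 1$. Then $u \in cK$ for every $u \in S_{Semi}$, i.e.\ $S_{Semi} \subseteq cK$; since $cK$ is convex, its convex hull is contained as well, giving $K = \hull(S_{Semi}) \subseteq cK$. Iterating the inclusion $K \subseteq cK$ yields $K \subseteq c^{n}K$ for all $n$, so for any fixed nonzero $v \in K$ (which exists because $K$ is a full-dimensional norm ball in $\mathcal{S}$ with $s \geq 1$) we obtain $v/c^{n} \in K$ for all $n$; as $c < 1$, the norms $\Vert v/c^{n} \Vert$ tend to infinity, contradicting the boundedness of $K$ guaranteed by the hypothesis that $S_{Semi}$ is bounded. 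Hence $c \geq 1$, and combined with the upper bound, $\Delta_{s} = \sup_{u \in S_{Semi}} \Vert u \Vert_{K} = 1$.

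The main obstacle, and the reason the statement is phrased through the isomorphism $\theta$ rather than directly in $\mathbb{R}^{d}$, is the rank-deficiency of $S_{Semi}$: in the ambient space $\mathbb{R}^{d}$ the convex hull $\hull(S_{Semi})$ has empty interior and cannot serve as a norm ball, so every step above must be carried out inside the subspace $\mathcal{S} = \Span(S_{Semi})$. Lemma \ref{thm: convex hull norm ball} is precisely what legitimizes this, ensuring that $K$ is a genuine bounded, symmetric, full-dimensional norm ball within $\mathcal{S}$ so that $\Vert \cdot \Vert_{K}$ is a well-defined norm there and the boundedness/iteration argument applies. Once this subspace reduction is in place, the remaining work is the short convexity argument above, and no delicate analysis of whether the supremum is attained is required, since the contradiction pins the value to exactly $1$ regardless.
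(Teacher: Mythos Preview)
Your argument is correct and actually supplies more detail than the paper itself, which does not include a standalone proof of this proposition (it is treated as immediate from the construction $K=\hull(S_{Semi})$ together with the definition of the $K$-norm as the Minkowski functional). The reduction through $\theta$ to $\sup_{u\in S_{Semi}}\|u\|_K$ and the upper bound via $S_{Semi}\subseteq K$ are exactly right.

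One small technical point in your lower bound: from $\sup_{u\in S_{Semi}}\|u\|_K=c<1$ you deduce $u\in cK$ for every $u\in S_{Semi}$, but the Minkowski functional only guarantees $u\in rK$ for each $r>\|u\|_K$; the inclusion $\{v:\|v\|_K\le c\}\subseteq cK$ would require $K$ to be closed, which $\hull(S_{Semi})$ need not be for an arbitrary bounded $S_{Semi}$. The remedy is immediate and does not change your strategy: fix any $r\in(c,1)$, note that $\|u\|_K\le c<r$ forces $u\in rK$ (since there exists $s<r$ with $u\in sK\subseteq rK$), hence $K=\hull(S_{Semi})\subseteq rK$, and your iteration $K\subseteq r^{n}K$ together with boundedness then gives the contradiction exactly as you wrote. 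With this tweak the proof is complete.
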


\begin{theorem}\label{thm: semi dp K norm mech}(The optimal $K$-norm mechanism for Semi-DP).
The $K$-norm mechanism in Algorithm \ref{alg: optimal K mech} satisfies $(\mathcal{D}_{t},A_{a(t)},f_{\epsilon,0})$-DP and it is the optimal $K$-norm mechanism.
\end{theorem}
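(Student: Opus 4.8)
The plan is to prove both assertions by transporting the whole problem into \(\mathbb{R}^{s}\) through the isomorphism \(\theta\), where the already-established full-dimensional results of \cite{awan2021structure} (the standard \(K\)-norm mechanism in Proposition \ref{prop: K-norm mech} and the full-rank optimality proposition following Definition \ref{def: optimal K-norm mech}) apply directly. The quantitative heart has already been supplied by Proposition \ref{prop: sensitivity one} (sensitivity exactly \(1\) in \(\mathbb{R}^{s}\)) and Lemma \ref{thm: convex hull norm ball} (\(\hull(S_{Semi})\) is a genuine norm ball in \(\mathcal{S}\)), so the argument is mostly faithful assembly.

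\emph{Privacy.} First I would introduce the auxiliary query \(\psi := \theta \circ \Proj_{\mathcal{S}} \circ \phi : \mathcal{D}_{t} \rightarrow \mathbb{R}^{s}\). The key observation is that for any adjacent pair \(X, X' \in \mathcal{D}_{t}\) we have \(\phi(X) - \phi(X') \in S_{Semi} \subseteq \mathcal{S}\), so \(\Proj_{\mathcal{S}}\) acts as the identity on this difference and \(\psi(X) - \psi(X') = \theta(\phi(X) - \phi(X'))\). Hence the sensitivity space of \(\psi\) with respect to \(A_{a(t)}\) is exactly \(\theta(S_{Semi})\), and by Proposition \ref{prop: sensitivity one} its \(\Vert \cdot \Vert_{\mathbb{R}^{s}}\)-sensitivity equals \(1\). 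Applying Proposition \ref{prop: K-norm mech} in \(\mathbb{R}^{s}\) with \(\Delta = 1\) and the noise \(V\) of Algorithm \ref{alg: optimal K mech}, the mechanism \(\psi(X) + V\) is \((\mathcal{D}_{t}, A_{a(t)}, f_{\epsilon,0})\)-DP.

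To return to \(M\), I would write \(M(X) = \theta^{-1}\!\big(\psi(X) + V\big) + \big(\phi(X) - \Proj_{\mathcal{S}}\phi(X)\big)\), which one checks by expanding \(\theta^{-1}(\psi(X)+V) = \Proj_{\mathcal{S}}\phi(X) + \theta^{-1}(V)\). For a fixed adjacent pair the offset \(\phi(X) - \Proj_{\mathcal{S}}\phi(X)\) lies in \(\mathcal{S}^{\perp}\) and therefore coincides for \(X\) and \(X'\); thus \(M(X)\) and \(M(X')\) are obtained from \(\psi(X)+V\) and \(\psi(X')+V\) by one and the same deterministic map. By post-processing invariance (the DP property from Section \ref{sec: preliminary}, or Proposition \ref{thm: semi dp postprocessing}), \(T(M(X), M(X')) \geq T(\psi(X)+V, \psi(X')+V) \geq f_{\epsilon,0}\) for every adjacent pair, which is exactly \((\mathcal{D}_{t}, A_{a(t)}, f_{\epsilon,0})\)-DP.

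\emph{Optimality.} Here I would again lift to \(\mathbb{R}^{s}\): the set \(\theta(S_{Semi})\) is bounded and \(\Span(\theta(S_{Semi})) = \theta(\mathcal{S}) = \mathbb{R}^{s}\), so it is full-rank, and by Lemma \ref{thm: convex hull norm ball} its convex hull \(\theta(K) = \hull(\theta(S_{Semi}))\) is a norm ball, namely the unit ball of \(\Vert \cdot \Vert_{\mathbb{R}^{s}}\). By Proposition \ref{prop: sensitivity one} the sensitivity is \(1\), so the full-rank optimality proposition applies in \(\mathbb{R}^{s}\) and identifies \(\Vert \cdot \Vert_{\mathbb{R}^{s}}\) as the optimal \(K\)-norm mechanism there. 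Because \(\theta\) is a linear isomorphism it preserves set inclusions and scalar multiples, hence preserves the containment order of Definition \ref{def: optimal K-norm mech}; transporting back yields that \(\Delta_{K}\cdot K = K = \hull(S_{Semi})\) is the smallest norm ball in \(\mathcal{S}\) (in containment order) containing \(S_{Semi}\), establishing optimality. The one reinterpretation to make explicit is that, since \(S_{Semi}\) is rank-deficient, optimality must be phrased among norm balls living in \(\mathcal{S}\) rather than \(\mathbb{R}^{d}\); this is justified by Proposition \ref{thm: projection framework}, since any noise component along \(\mathcal{S}^{\perp}\) can be projected away without loss of privacy while shrinking the spread, so an optimal mechanism necessarily confines its noise to \(\mathcal{S}\). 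I expect this careful handling of the degenerate support—confirming that the reduction is faithful (sensitivity exactly \(1\), convex hull a valid norm ball, containment order transported through \(\theta\), and the restriction to \(\mathcal{S}\) justified)—to be the main obstacle, rather than any new estimate.
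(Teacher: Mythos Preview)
Your proposal is correct and follows essentially the same route as the paper: reduce to the auxiliary query \(\psi=\theta\circ\Proj_{\mathcal{S}}\circ\phi\) in \(\mathbb{R}^{s}\), invoke the standard \(K\)-norm guarantee there (sensitivity \(=1\) by Proposition \ref{prop: sensitivity one}), and return to \(\phi(X)+\theta^{-1}(V)\) via post-processing using that \(\Proj_{\mathcal{S}}^{\perp}\phi\) is constant on adjacent pairs; optimality is handled by transporting containment order through \(\theta\). Two minor differences worth noting: the paper argues minimality of \(\hull(S_{Semi})\) directly in \(\mathcal{S}\) and then transports to \(\mathbb{R}^{s}\), whereas you apply the full-rank proposition in \(\mathbb{R}^{s}\) first and pull back; and you add an explicit justification (via Proposition \ref{thm: projection framework}) for why optimality should be read among norm balls in \(\mathcal{S}\) rather than \(\mathbb{R}^{d}\), which the paper leaves implicit.
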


The optimality in Theorem \ref{thm: semi dp K norm mech} stems from the fact that $K = \hull(S)$ is the smallest norm ball, in terms of containment order, among all norm balls that include the sensitivity space in $\mathcal{S}$. Then the norm ball induced by $\Vert \cdot \Vert_{\mathbb{R}^s}$ is the smallest norm ball that contains $\theta(S)$ since the isomorphism $\theta$ preserves properties including containment.

Finally, once the sensitivity space is of full-rank with the  dataspace $\mathcal{D}$ and the adjacency function is $A_{1}(X,X')$, our mechanism coincides with the $K$-norm mechanism in \cite{awan2021structure}. Therefore, our approach is indeed a generalization of theirs.

\section{Application to Contingency Table Analysis}\label{sec: contingency table}

A contingency table, or a multi-way frequency table, is a fundamental data summary tool widely used in various fields, such as the DP publications by the US Census Bureau. In this section, we demonstrate how to create a privatized contingency table when true one-way margins are held invariant. We derive an optimized Gaussian mechanism satisfying \(G_\mu\)-semi DP and $K$-norm mechanism satisfying \(f_{\epsilon,0}\)-semi DP, along with numerical verification to show how our approach outperforms naive methods. Finally, we present a Semi-DP UMPU test for the odds ratio in a $2 \times 2$ contingency table.

\subsection{Private Release of Contingency Table under True Margins}\label{sec: contingency theory}

We identify a proper sensitivity space so that mechanisms investigated in Section \ref{sec: mechanism design} can be applied. To begin with, let \(\phi: \mathcal{X}^{n} \rightarrow \mathbb{R}^{r \times c}\) represent an \(r \times c\) contingency table, where the sample space \(\mathcal{X}\) is partitioned into \(r \times c\) categories. The table can be vectorized into a \(d\)-dimensional vector, where \(d = r \times c\). Therefore, the contingency table can be written as \(\phi: \mathcal{X}^{n} \rightarrow \mathbb{R}^{d}\).

The one-way margins of an \(r \times c\) contingency table correspond to the margins for each feature \(A_1\) and \(A_2\), which are described using the notation \(\mathbf{T}(X)\) in Example \ref{sec: example on count statistics}. Specifically, for two features \(A_1\) with \(r\) categories and \(A_2\) with \(c\) categories, the one-way margins are captured by \(\mathbf{T}(X)=(T_{1}(X),T_{2}(X))\), where \(T_1(X)\) and \(T_2(X)\) represent the total counts for each category of \(A_1\) and \(A_2\), respectively.

Recall from Theorem \ref{thm: count stat semi adj para} that the semi-adjacent parameter is bounded by $a(t) \leq 3$ as \(r \times c\) contingency table is counts for $p=2$ features. To express the corresponding sensitivity space $S_{Semi}$, we first introduce some notation.

Define an \(r \times c\) matrix \(\mathbf{v}_{ijkl}\) as 
\[
(\mathbf{v}_{ijkl})_{pq} =
\begin{cases}
1, & \text{if } (p, q) = (i, j) \text{ or } (p, q) = (k,l), \\
-1, & \text{if } (p, q) = (i, l) \text{ or } (p, q) = (k, j), \\
0, & \text{otherwise}.
\end{cases}
\]

In \(\mathbf{v}_{ijkl}\), \(i, k \in \{1, \ldots, r\}\) are the row indices, and \(j, l \in \{1, \ldots, c\}\) are the column indices. The matrix \(\mathbf{v}_{ijkl}\) represents an element of the sensitivity space that adjusts four specific positions within an \(r \times c\) contingency table while preserving the one-way margins. See Figure \ref{fig: n notation} for a typical example of a \(\mathbf{v}_{ijkl}\) matrix. 

\begin{figure}[ht]
    \centering
    \includegraphics[scale=0.47]{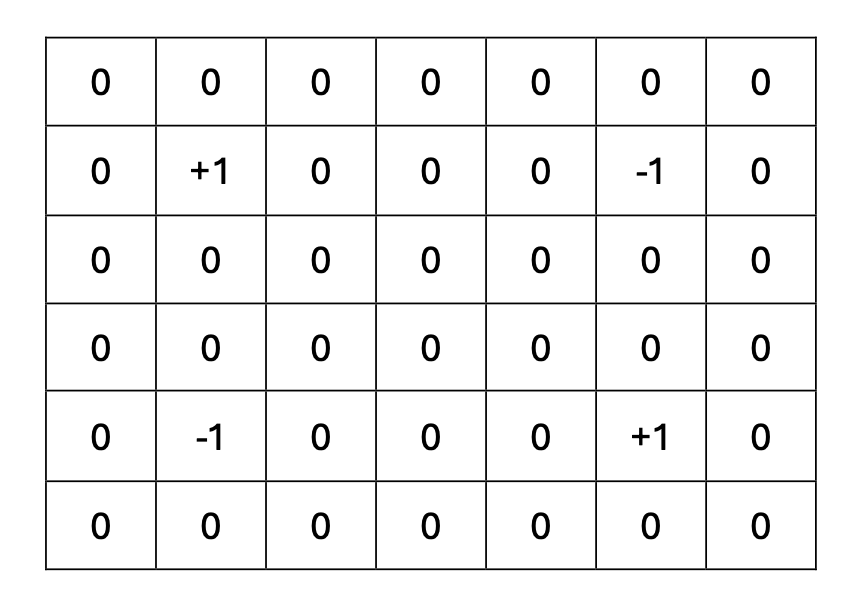}
    \caption{A \( 7 \times 6 \) contingency table illustrating the sensitivity space element \( \mathbf{v}_{ijlk} \) for the indices \( i = 2 \), \( j = 2 \), \( k = 5 \), and \( l = 5 \). The table shows the placement of \( +1 \) values at positions \( (i, j) = (2, 2) \) and \( (l, k) = (5, 5) \), as well as \( -1 \) values at positions \( (i, k) = (2, 5) \) and \( (l, j) = (5, 2) \), while all other cells contain zeros. It represents a typical element of the sensitivity space \( S_{Semi} \) that preserves the one-way margins.}
    \label{fig: n notation}
\end{figure}

\begin{proposition}\label{thm: table sens space}
    Let \(\phi: \mathcal{X}^{n} \rightarrow \mathbb{R}^{r \times c}\) be an \(r \times c\) contingency table and let the invariant be its one-way margins \(t=\mathbf{T}(X)=(T_1(X),T_2(X))\). Then the sensitivity space \(S_{Semi}\) with respect to the adjacency function $A_{3}$ is
    \[
    S_{Semi} = \left\{\mathbf{v}_{ijlk} \in \mathbb{R}^{r \times c} : i, k \in \{1, \ldots, r\}, \; j, l \in \{1, \ldots, c\}, \; i \neq k, \; j \neq l \right\} \cup \{\mathbf{0}\},
    \]
   where $\mathbf{0}$ denotes the zero matrix in $\mathbb{R}^{r \times c}$.
\end{proposition}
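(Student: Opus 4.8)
The plan is to prove the two set inclusions separately, exploiting the fact that conforming to the one-way margins $t = \mathbf{T}(X)$ forces any $X, X' \in \mathcal{D}_t$ to share identical row totals and column totals. Consequently every difference matrix $D := \phi(X) - \phi(X')$ has all of its row sums and all of its column sums equal to zero, which will be the main structural handle throughout.

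For the inclusion $\supseteq$, I would exhibit, for each admissible tuple $(i,j,k,l)$ with $i \neq k$ and $j \neq l$, an explicit pair $X, X' \in \mathcal{D}_t$ within Hamming distance $a(t) \leq 3$ (using Theorem \ref{thm: count stat semi adj para}) whose table difference equals $\mathbf{v}_{ijkl}$. This is precisely the construction of Example \ref{example: 2 by 2 semi adj para}: reassigning a record from cell $(i,l)$ to $(i,j)$ and compensating with a reassignment from $(k,j)$ to $(k,l)$ (or, in the worst case where a cell is empty, with the three-change compensation described there) preserves both margins while producing exactly the four $\pm 1$ entries of $\mathbf{v}_{ijkl}$. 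The zero matrix arises trivially from $X' = X$, so the right-hand side lies in $S_{Semi}$.

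For the inclusion $\subseteq$, I would take an arbitrary adjacent pair and analyze $D$ by writing it as the signed sum of the elementary cell-reassignments induced by the changed records. I would then combine (i) that $D$ has vanishing row and column sums with (ii) that only few records are changed, the adjacency encoding a single individual's record being replaced together with the \emph{minimal} set of compensating reassignments needed to restore $t$. The combinatorial heart is that an integer matrix with zero row and column sums is a circulation on the complete bipartite graph $K_{r,c}$, whose minimal nonzero circulations are exactly the length-four cycles, and these are precisely the patterns $\mathbf{v}_{ijkl}$. Boundedness of $S_{Semi}$ (each entry of $D$ is bounded by the few reassignments) guarantees we stay in this minimal regime, so a nonzero $D$ must be a single $\mathbf{v}_{ijkl}$.

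The main obstacle is this forward inclusion: a priori a difference of two margin-conforming databases at Hamming distance $3$ could close up into a longer bipartite cycle (for instance a length-six cycle with $+1$ on $(1,1),(2,2),(3,3)$ and $-1$ on $(1,2),(2,3),(3,1)$), which still has zero margins and small support yet is \emph{not} a single $\mathbf{v}_{ijkl}$ but rather a sum of two of them. The resolution is to use the individual-level semantics underlying the Semi-DP adjacency, namely that adjacency corresponds to altering one record and then restoring the invariant with the fewest possible compensating reassignments; this minimality couples the reassignments into a single elementary $4$-cycle instead of letting several independent records rotate through a longer cycle. I would finally record the boundary caveat that when some marginal counts vanish, certain tuples $(i,j,k,l)$ cannot be realized (and correspondingly $a(t) < 3$), so the stated set is the \emph{maximal} sensitivity space, attained when the relevant margins are positive; adopting it for noise calibration is conservative and hence safe.
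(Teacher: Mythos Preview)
Your $\supseteq$ construction is fine and matches the paper's. The gap is in the $\subseteq$ direction, precisely at the obstacle you yourself flagged. Your proposed resolution---that adjacency under $A_{3}$ encodes ``one record altered plus the minimal compensating reassignments''---is not available: $A_{3}$ is defined purely metrically as $A_{3}(X,X') = \mathbbm{1}_{\{H(X,X') \leq 3\}}$ (see the discussion preceding Definition~\ref{def: indiv protection a}). The semi-adjacent parameter $a(t)$ fixes only the \emph{radius} $3$; it imposes no structure on which pairs within that radius are admitted. So nothing rules out three independent single-record moves that happen to close into a longer cycle.

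Concretely, for $r,c \geq 3$, take any $X \in \mathcal{D}_{t}$ containing at least one individual in each of the cells $(1,2),(2,3),(3,1)$ (such $X$ exists for generic margins $t$), and obtain $X'$ by relocating one individual from each of these cells to $(1,1),(2,2),(3,3)$ respectively. Then $H(X,X')=3$; each moved individual stays in its row, so row margins are preserved; and each of columns $1,2,3$ loses one and gains one, so column margins are preserved. Hence $X' \in \mathcal{D}_{t}$, $A_{3}(X,X')=1$, yet $\phi(X')-\phi(X)$ is the length-six cycle, which is not any $\mathbf{v}_{ijkl}$. This also undercuts your closing ``conservative hence safe'' remark: the six-cycle has $\ell_{2}$-norm $\sqrt{6} > 2$, so calibrating noise to the $\mathbf{v}_{ijkl}$'s alone would \emph{under}-noise. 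The paper's own proof reasons through the same one-individual-plus-compensation narrative and so does not exclude this case either; the stated equality holds when $r=c=2$ (no six-cycles are possible there), but not for general $r \times c$ under $A_{3}$ as written.
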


The sensitivity space \( S_{Semi} \) in Theorem \ref{thm: table sens space} consists of matrices that preserve the one-way margins. The zero matrix, \(\mathbf{0}\), emerges from scenarios where individual replacements do not affect the contingency table counts. For instance, consider swapping an individual with \( A_1 = 1 \) and \( A_2 = 1 \) for another individual with \( A_1 = 2 \) and \( A_2 = 2 \), followed by switching an individual with \( A_1 = 2 \) and \( A_2 = 2 \) back to an individual with \( A_1 = 1 \) and \( A_2 = 1 \). This process results in no net change to the table. 

The nonzero elements of \( S_{Semi} \) are generated by modifying a total of three individuals, resulting in the placement of two \( 1 \)’s and two \(-1\)’s in positions that preserve both the row and column sums. Each element in \( S_{Semi} \) can be represented as a \( d = r \times c \) dimensional vector through vectorization. Notably, Theorem \ref{thm: table sens space} aligns with Lemma A.1 in \cite{awan2024best}, which identified a basis for the space that preserves all margins for a contingency table.

\begin{example}\label{example: 2 by 2 sens space}
For a \(2 \times 2\) contingency table, where \(i, j, k, l \in \{1, 2\}\), there are only two possible combinations that satisfy \(i \neq k\) and \(j \neq l\): \((i, j, k, l) = (1, 1, 2, 2)\) or \((2, 1, 1, 2)\). Consequently, the nonzero elements in \(S_{Semi}\) are:
1. For \(i = 1, j = 1, k = 2, l = 2\):
   \[
   \mathbf{v}_{1122} = 
   \begin{bmatrix}
   1 & -1 \\
   -1 & 1
   \end{bmatrix} \quad \text{or as a vector: } (1, -1, -1, 1).
   \]

2. For \(i = 2, j = 1, k = 1, l = 2\):
   \[
   \mathbf{v}_{2112} = 
   \begin{bmatrix}
   -1 & 1 \\
   1 & -1
   \end{bmatrix} \quad \text{or as a vector: } (-1, 1, 1, -1).
   \]

This matches what we observed in the \(2 \times 2\) contingency table in Section \ref{sec: example on count statistics}.
\end{example}

Moreover, the sensitivities are given as \(\Delta_{1}(S_{Semi}) = 4\), \(\Delta_{2}(S_{Semi}) = 2\), and \(\Delta_{\infty}(S_{Semi}) = 1\), regardless of \( r \) and \( c \), since each nonzero vector in \( S_{Semi} \) contains two \(1\)s and two \(-1\)s, with zeros in all other positions. 

Having established the sensitivity space \( S_{Semi} \), we can now apply the mechanisms from Section \ref{sec: mechanism design}. For Gaussian mechanism in Algorithm \ref{alg: gaussian mechanism}, we leverage the sensitivity space \( S_{Semi} \) in Theorem \ref{thm: table sens space} and the \(\ell_{2}\) sensitivity \(\Delta_{2}(S_{Semi}) = 2\). 

\begin{example}
    Gaussian mechanism in Algorithm \ref{alg: gaussian mechanism} for $2 \times 2$ contingency table is deployed with $\Delta_{2}(S_{Semi})=2$ and 
    \begin{equation*}
    P = \frac{1}{4} \begin{pmatrix} 1 & -1 & -1 & 1 \\ -1 & 1 & 1 & -1 \\ -1 & 1 & 1 & -1 \\ 1 & -1 & -1 & 1 \end{pmatrix}.
\end{equation*}
This is because for $s = \text{dim}(\text{span}(S_{Semi}))$, if $\{u_{1},\cdots u_{s}\}$ is the basis of $\text{span}(S_{Semi})$, the projection matrix $P$ can be constructed by $P = \sum_{i}^{s}u_{i}u_{i}^\top$ and the basis in this case is $\{(1,-1,-1,1)\}$.
\end{example}

Likewise, we can deploy Algorithm \ref{alg: optimal K mech} with \( K = \hull(S_{Semi}) \). However, implementing the $K$-norm mechanism is not straightforward. The rank-deficiency in sensitivity space raises a technical difficulty as the corresponding convex hull is degenerate in the ambient space. To deal with this, we take a cube of the same dimension of $S_{Semi}$ that contains $K$ and sample from the cube to implement a rejection sampler. Recall from Proposition \ref{prop: sensitivity one} that $\Delta_{s}(S_{Semi})=1$.

\begin{algorithm}[htb!]
\caption{\texttt{Optimal $K$-Norm Mechanism with Rejection Sampling}}\label{alg: knorm mech reject sampling}
\begin{algorithmic}[1]
    \STATE \textbf{Input:} Query $\phi$, sensitivity space $S_{Semi}$, sensitivity $\Delta_{K}(S_{Semi})$ and privacy parameter $\epsilon$
    \STATE Find basis $\{b_{i}\}_{i=1}^{s}$ for $\Span(\hull(S_{Semi}))$
    \STATE Sample $r \sim \text{Gamma}(\alpha = s+1,\beta=\epsilon/\Delta_{K}(S_{Semi})$
    \STATE Find scaling constants $c_{1},\cdots,c_{k}$ such that a cube $\prod_{i=1}^{s}[-c_{i},c_{i}]$ contains $\hull(S_{Semi})$ in the coordinates $\{b_{i}\}_{i=1}^{s}$
    \STATE Sample $U_{j} \stackrel{iid}{\sim} U(-1,1)$ for $j=1,\cdots,s$
    \STATE Set $V = \sum_{i=1}^{s}U_{i}c_{i}b_{i}$
    \STATE If $V \in \hull(S)$, output $\phi(X) + rV$, else go to 5
    
\end{algorithmic}    
\end{algorithm}

The $K$-norm mechanism in Algorithm \ref{alg: knorm mech reject sampling} works by finding a basis for the span of the convex hull of the sensitivity space \( \hull(S_{Semi}) \) and generating noise from a cube enclosing the convex hull of the sensitivity space. Rejection sampling ensures that the generated noise lies within the convex hull, and thus satisfies the privacy constraints.

\begin{remark}
Note that one may apply additional transformations to the mechanism's output through various post-processing methods. For example, while preserving the margins, each count can be adjusted to a nonnegative integer value. Since these transformations use only the private output and publicly available invariant, without accessing the raw data, the same privacy guarantee is still satisfied by the post-processing property in Theorem \ref{thm: semi dp postprocessing}.    
\end{remark}
    
\subsection{Numerical Experiments for additive noise}\label{sec: contingency numerical}
In this section, we numerically compare the \( L_{2} \)-costs of our mechanisms with those of naive designs. We treat the contingency table \( \phi(X) \) as fixed, with its entries sampled from a multinomial distribution, \(\text{Multinomial}_{k^{2}}(n, \pi)\). For our comparisons, we set \( n = 500 \) and consider two models for the true probabilities \(\pi = (\pi_{1}, \ldots, \pi_{k^{2}})\). In Model I, we assume a uniform probability for each cell, setting \(\pi_{i} = 1/k^{2}\) for all \(i\). In Model II, the cell probabilities increase linearly, defined by \(\pi_{i} = i / \left(\sum_{i=1}^{k^{2}} i\right)\). To be consistent with the approach in Section \ref{sec: contingency theory}, we assume that all one-way margins are published. Here, we calculate $a(t)=3$ to consider the privacy guarantee for the worst-case. We report the average \( L_{2} \)-loss \(\Vert M(X) - \phi(X) \Vert_{2}\) aggregated over 30 replicates to assess the performance of the mechanisms. \\

\noindent
{\bf Comparison on Gaussian Mechanisms:}
While running our Gaussian mechanism from Algorithm \ref{alg: gaussian mechanism}, we compare it to a naive Gaussian mechanism defined as:
\[
M_{\text{Naive}}(X) = \phi(X) + N(0, (3\sqrt{2}/\mu)^2 I).
\]
This naive mechanism uses an \(\ell_2\) sensitivity of \(\sqrt{2}\), as previously observed, and a semi-adjacency parameter of \(a(t) = 3\). Since any mechanism that satisfies \((\mathcal{D}, A_{1}, G_{\mu/3})\)-DP also satisfies \((\mathcal{D}, A_{3}, G_{\mu})\)-DP by the group privacy property, and further satisfies \((\mathcal{D}_t, A_{3}, G_{\mu})\)-DP, we adjust the Gaussian mechanism from Proposition \ref{prop: standard gaussian mechanism} by substituting \(\mu/3\) in place of \(\mu\). Therefore, this naive approach effectively inflates the sensitivity by the semi-adjacency parameter when applying the Gaussian noise.

\begin{figure}[htbp]
    \centering
    \includegraphics[scale=0.47]{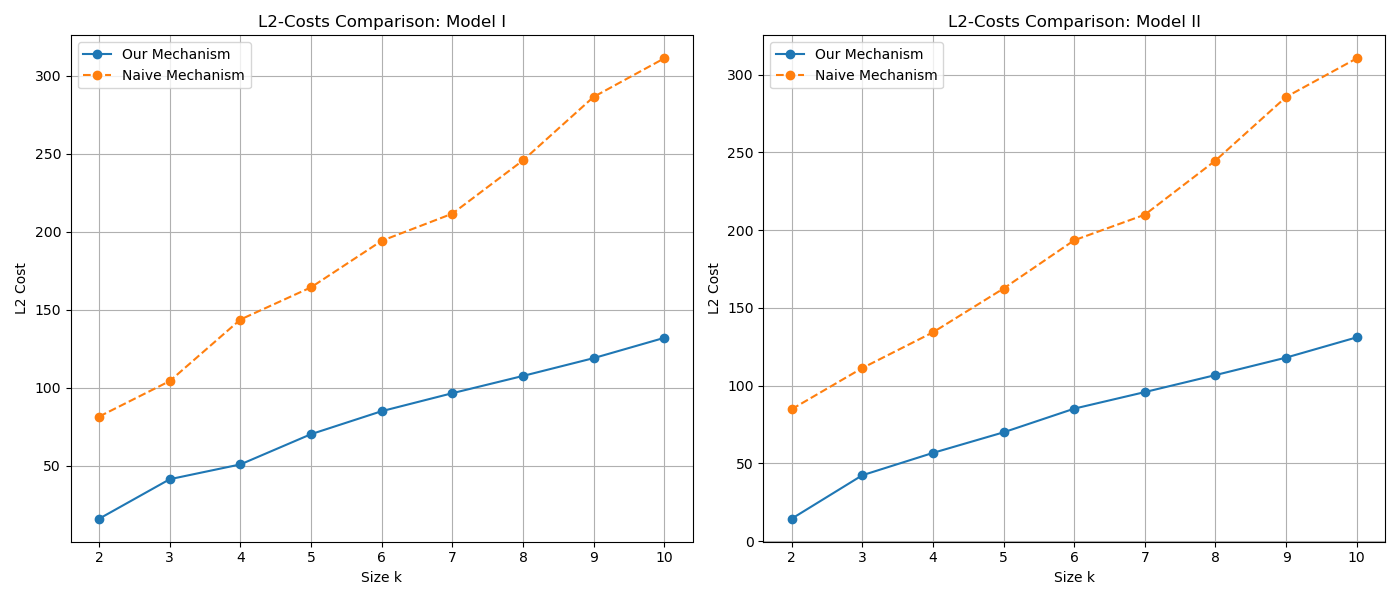}
    \caption{Comparison of average \( L_2 \)-costs between our mechanism and the naive mechanism across varying contingency table sizes \( k \in \{2,3,\dots,10\} \). Model I assumes uniform cell probabilities, while Model II has linearly increasing cell probabilities.}
    \label{fig: simul gaussian}
\end{figure}

Figure \ref{fig: simul gaussian} demonstrates that our Gaussian mechanism from Algorithm \ref{alg: gaussian mechanism} consistently results in smaller \( L_2 \) costs compared to the naive approach across both models. This aligns with theoretical expectations, as our mechanism applies Gaussian noise with a smaller covariance. Specifically, the covariance of the Gaussian noise in our mechanism, \((2/\mu)^{2}P_{k \times k}\), is smaller than that used in the naive approach, \((3\sqrt{2}/\mu)^{2}I_{k \times k}\). This is due to the fact that \( 2/\mu < 3\sqrt{2}/\mu \) and \( P_{k \times k} \preceq I_{k \times k} \), where \( P_{k \times k} \) is the projection matrix.
\\

\noindent
{\bf Comparison on $K$-Norm Mechanisms:} To numerically compare the $K$-norm mechanisms, we consider different contingency table sizes by taking $k \in \{2,3\}$ with different privacy parameters $\epsilon \in \{0.1, 0.5, 1\}$.

\begin{figure}[h]
    \centering
    \includegraphics[scale=0.45]{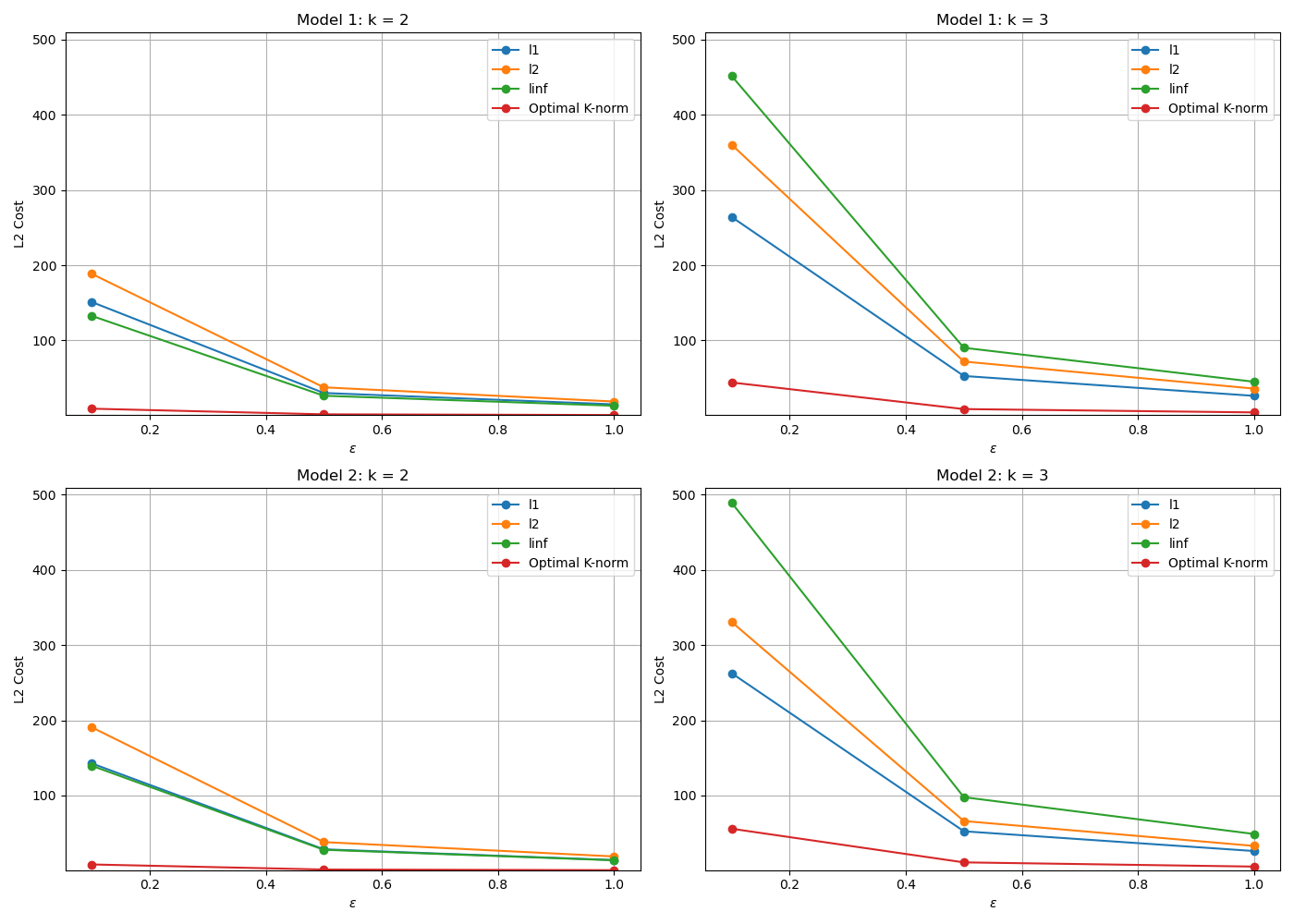}
    \caption{Comparison of \( L_2 \)-costs for the \( K \)-norm mechanism against naive \(\ell_{1}\), \(\ell_{2}\), and \(\ell_{\infty}\)-norm mechanisms across two models for contingency tables of size \( k = 2 \) and \( k = 3 \). The results are shown for three privacy parameters \(\epsilon = 0.1, 0.5,\) and \( 1\).}
    \label{fig: simul knorm}
\end{figure}

We implement the optimal $K$-norm mechanism by Algorithm \ref{alg: knorm mech reject sampling} with the sensitivity space $S_{Semi}$ as identified in Theorem \ref{thm: table sens space}. 
On the other hand, we consider \(\ell_{1}, \ell_{2}, \ell_{\infty}\)-norm mechanisms in a naive way. We run $\ell_{1},\ell_{2}$ and $\ell_{\infty}$-norm mechanisms in Section \ref{app: details on simulation} in the Appendix with the sensitivity space $S_{DP} = \{\phi(X)-\phi(X'): A_{1}(X,X')=1\}$. Note that \(\ell_{1}, \ell_{2}\), and \(\ell_{\infty}\)-sensitivities from $S_{DP}$ are \(2\), \(\sqrt{2}\), and \(1\), respectively, since we can easily check that each element of $S_{DP}$ is a vector with a single $1$ and a single $-1$. On the other hand, we take the privacy parameter $\epsilon/3$ so that the naive mechanisms satisfy \((\mathcal{D}, A_{1}, f_{\epsilon/3,0})\). By doing so, they satisfy \((\mathcal{D}, A_{3}, f_{\epsilon/,0})\)-DP, and immediately  \((\mathcal{D}_{t}, A_{3}, f_{\epsilon/,0})\)-DP. Same as what we have observed in the naive Gaussian mechanism, the naive mechanisms via group privacy use sensitivities inflated by a factor of 3 to achieve group privacy with \(a(t) = 3\).

Finally, Figure \ref{fig: simul knorm} shows see the efficiency of the optimal $K$-norm mechanism, which achieves significantly lower $L_{2}$-cost compared to other naive mechanisms. 

\subsection{Semi Private UMPU Testing}

This section formalizes the construction of the UMPU test for odds ratio testing under Semi-DP given the one-way margins as the invariant. We extend the framework of semi-private hypothesis testing introduced in \cite{awan2023canonical}, where the concept was employed to establish upper bounds on the power of private hypothesis testing. Specifically, \cite{awan2023canonical} demonstrated that while no UMPU test satisfying DP exists in general, it is achievable under Semi-DP. Their results were focused on tests for the difference of proportions in Bernoulli data. We extend this approach to testing the odds ratio in a $2 \times 2$ contingency table with the invariant as the one-way margins. By refining the definition of semi-privacy, we also provide a more simplified proof.

Denote a $2 \times 2$ contingency table by $(x_{11},x_{12},x_{21},x_{22})$ and let the one-way margins $t =(t_{1\cdot},t_{2 \cdot},t_{\cdot 1},t_{\cdot 2})^\top$ be the invariant, where $t_{i\cdot}=\sum_{j=1}^{2} x_{ij}$ and $t_{\cdot, j} = \sum_{i=1}^{2} x_{ij}$. We test the odds ratio $w = \frac{\theta_{2}/1-\theta_{2}}{\theta_{1}/1-\theta_{1}}$:

\begin{center}
$H_0$: $w \leq 1$\quad vs. \quad $H_1$: $w>1$,
\end{center}
while preserving the privacy of individuals in the database.

We follow the framework of $f$-DP hypothesis testing as investigated in \cite{awan2023canonical}. One can define a test to be a function $\phi:\mathcal{D} \rightarrow [0,1]$, where $\phi(X)$ represents the probability of rejecting the null hypothesis given the database $X$. According to \cite{awan2023canonical}, the mechanism corresponding to this test releases a random variable drawn as $\text{Bern}(\phi(X))$, where $1$ represents ``Reject'' and $0$ represents ``Accepts''. The test $\phi$ satisfies $(\mathcal{D},A,f)$-DP if the corresponding mechanism $\text{Bern}(\phi(X))$ satisfies $(\mathcal{D},A,f)$-DP. 

\begin{lemma}\label{thm: f-dp test}(Lemma 4.1 in \cite{awan2023canonical}). Let $\mathcal{D}$ be a dataspace, $A: \mathcal{D} \times \mathcal{D}$ be an adjacency function and $f$ be a tradeoff function. A test $\phi: \mathcal{D} \rightarrow [0,1]$ satisfies $(\mathcal{D},A,f)$-DP if and only if $\phi(X) \leq 1-f(1-\phi(X'))$ for all $X,X' \in \mathcal{D}$ such that $A(X,X') = 1$.
\end{lemma}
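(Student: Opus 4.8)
The plan is to use the fact, recalled just before the statement, that the test $\phi$ is realized by the mechanism $M(X)=\mathrm{Bern}(\phi(X))$, so that by Definition \ref{def: conventional f dp} the assertion ``$\phi$ satisfies $(\mathcal D,A,f)$-DP'' is literally the requirement $T\big(\mathrm{Bern}(\phi(X)),\mathrm{Bern}(\phi(X'))\big)\ge f$ for every ordered adjacent pair $X,X'$. Writing $u=\phi(X)$ and $v=\phi(X')$, everything collapses to understanding the single two-point tradeoff function $T(\mathrm{Bern}(u),\mathrm{Bern}(v))$. My strategy is (i) to read off the claimed scalar inequality from one distinguished rejection rule for the ``only if'' direction, and (ii) to show that this same scalar inequality, through the convexity and symmetry of $f$, already forces the full functional inequality for the ``if'' direction.

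For the ``only if'' direction I would feed the mechanism to the rejection rule that reports ``reject'' exactly when the released bit equals $1$. Against the null $\mathrm{Bern}(u)$ this rule has type~I error $u$, hence it is feasible at level $1-\alpha=u$, i.e.\ $\alpha=1-u$; against the alternative $\mathrm{Bern}(v)$ it has type~II error $1-v$. Since $T$ is an infimum over feasible rules, this gives $T(\mathrm{Bern}(u),\mathrm{Bern}(v))(1-u)\le 1-v$, and combining with the DP hypothesis $f\le T(\mathrm{Bern}(u),\mathrm{Bern}(v))$ yields $f(1-u)\le 1-v$, i.e.\ $\phi(X')\le 1-f(1-\phi(X))$. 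Because this holds for every ordered adjacent pair and $A$ is symmetric, relabelling $X\leftrightarrow X'$ gives exactly $\phi(X)\le 1-f(1-\phi(X'))$ for all adjacent $X,X'$. Notably this step uses only feasibility of the chosen rule, not its optimality.

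For the converse I would first compute $T(\mathrm{Bern}(u),\mathrm{Bern}(v))$ explicitly by a Neyman--Pearson (likelihood-ratio threshold) analysis on the two-point output space: it is piecewise linear with exactly two segments, passing through the vertices $(0,0)$ and $(1,1)$ and having a single interior kink, which for $u\le v$ sits at $\big(1-u,\,1-v\big)$. The key reduction is that, since $f$ is convex with $f(0)=0$ and $f(1)\le 1$, the inequality $T\ge f$ holds on all of $[0,1]$ as soon as it holds at these three vertices: on each linear piece $f$ lies below the chord joining its own endpoint values, and those values lie below the corresponding vertex heights of $T$. The vertices $(0,0)$ and $(1,1)$ are automatic, so everything reduces to the single kink condition $f(1-u)\le 1-v$.

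It then remains to recover the kink condition from the hypothesis, and this is the step I expect to be the main obstacle, because the location of the kink depends on which of $u,v$ indexes the null. Assuming without loss of generality $u\le v$, the ordered pair $(X,X')$ requires $f(1-u)\le 1-v$, which is exactly the hypothesized inequality for the ordered pair $(X',X)$. The reversed pair $(X',X)$, whose null parameter $v$ is the larger one, instead requires the kink condition $f(v)\le u$, which is not literally among the assumptions; here I would invoke the symmetry of $f$. Since $T(\mathrm{Bern}(u),\mathrm{Bern}(v))$ and $T(\mathrm{Bern}(v),\mathrm{Bern}(u))$ are reflections of one another and $f$ is its own reflection, symmetry can be written as $f\big(1-f(1-x)\big)=x$, and then from $f(1-u)\le 1-v$ and the monotonicity of $f$ one gets $v\le 1-f(1-u)$ and hence $f(v)\le f\big(1-f(1-u)\big)=u$. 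Thus both orderings reduce to the single assumed inequality, and bridging this orientation mismatch via the symmetry of $f$, together with the convexity reduction to the kink, is the crux of the argument.
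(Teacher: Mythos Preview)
The paper does not prove this lemma; it is quoted as Lemma~4.1 of \cite{awan2023canonical} and stated without proof, so there is nothing in the paper to compare your argument against.

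Your argument is correct. The ``only if'' direction via the single rejection rule $\psi=\mathbbm{1}\{\text{output}=1\}$ is exactly right, and for the ``if'' direction your reduction to the kink of the piecewise-linear Bernoulli tradeoff---using $f(0)=0$ (forced by $f(x)\le x$ and $f\ge 0$), $f(1)\le 1$, and convexity of $f$ against linearity of $T$ on each piece---is the clean way to collapse the functional inequality to a single scalar one.

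One small technical caveat on the orientation step: the identity $f(1-f(1-x))=x$ that you invoke holds for symmetric $f$ only when $f$ is strictly increasing; for tradeoff functions with flat pieces (e.g.\ $f_{\epsilon,\delta}$ with $\delta>0$) it can fail at some points. What always holds for symmetric $f$ is the inequality $f(1-f(1-x))\le x$, and that is all your chain $f(v)\le f(1-f(1-u))\le u$ actually needs, so the argument still goes through. Equivalently, symmetry in this paper's convention says precisely that $f(1-u)\le 1-v\ \Longleftrightarrow\ f(v)\le u$ (the epigraph of $\alpha\mapsto f(1-\alpha)$ is symmetric about the diagonal), which is the robust form of the step and sidesteps the bijection issue entirely.
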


Lemma \ref{thm: f-dp test} shows that a test function satisfies $(\mathcal{D},A,f)$-DP if for any databases $X,X' \in \mathcal{D}$ such that $A(X,X')=1$, the values $\phi(X)$ and $\phi(X')$ are close in terms of an inequality based on $f$.

Recall from Example \ref{example: 2 by 2 semi adj para} that when the  one-way margins are invariant, $a(t) \leq 3$. From Example \ref{example: 2 by 2 sens space}, we know that the sensitivity space for a query that publishes the $2 \times 2$ contingency table given this invariant is 
\[
S_{Semi} = \{(1, -1, -1, 1),(-1, 1, 1, -1),(0,0,0.0)\}.
\]
Thus, by Lemma \ref{thm: f-dp test}$, \phi$ is a $(\mathcal{D}_{t},A_{a(t)},f)$-DP test if
\begin{equation}\label{equ: f-semi dp condition}
\begin{split}
    & \phi(x_{11},x_{12},x_{21},x_{22}) \leq 1-f(1-\phi(x_{11}+1,x_{12}-1,x_{21}-1,x_{22}+1)),\\
    & \phi(x_{11},x_{12},x_{21},x_{22}) \leq 1-f(1-\phi(x_{11}-1,x_{12}+1,x_{21}+1,x_{22}-1)).
\end{split}    
\end{equation}

In \cite{awan2023canonical}, they utilized a canonical noise distribution (CND) for $f$-DP to construct optimal tests. A CND is formulated to satisfy $f$-DP for any given tradeoff function $f$ by optimally matching the tradeoff function. After being introduced by \cite{awan2023canonical}, CNDs have been further investigated in \cite{awan2022log} and \cite{awan2023optimizing}.

\begin{definition}\label{def: cnd}(Canonical noise distribution (CND) \citep{awan2023canonical}). Let $f$ be a symmetric nontrivial tradeoff function. A continuous distribution function $F$ is a canonical noise distribution(CND) for $f$ if 
\begin{itemize}
    \item[1. ] for every statistic $\phi: \mathcal{D} \rightarrow \mathbb{R}$ with sensitivity $\Delta > 0$, and $N \sim F(\cdot),$ the mechanism $\phi(X) + \Delta N$ satisfies $f$-DP. Equivalently, for every $m \in [0,1],$ $T(F(\cdot),F(\cdot-m)) \geq f$,
    \item[2. ] $f(\alpha)=T(F(\cdot),F(\cdot-1))(\alpha)$ for all $\alpha \in (0,1)$,
    \item[3. ] $T(F(\cdot),F(\cdot-1))(\alpha) = F(F^{-1}(\alpha)-1)$ for all $\alpha \in (0,1)$,
    \item[4. ] $F(x) = 1-F(-x)$ for all $x \in \mathbb{R};$ that is, $F$ is the cdf of a random variable which is symmetric about zero.
\end{itemize}
\end{definition}

In Definition \ref{def: cnd}, the four properties can be interpreted as follows: 1) adding noise scaled by the sensitivity \(\Delta\) to a statistic with sensitivity \(\Delta\) ensures that the resulting mechanism satisfies \((\mathcal{D},A,f)\)-DP; 2) when the statistics for two datasets differ exactly by the sensitivity \(\Delta\), the tradeoff function \(T\) between the two noisy statistics equals \(f\), indicating that the privacy protection is optimally tight; 3) for such statistics as in part 2, the optimal rejection region follows a threshold form, which demonstrates a monotone likelihood ratio property; and 4) because the \(f\)-DP guarantee is symmetric, we restrict our attention to symmetric distributions. For example, $N(,1/\mu^{2})$ is a CND for $(\mathcal{D},A,G_{\mu})$-DP and Tulap distribution is the unique CND for $(\mathcal{D},A,f_{(\epsilon,\delta))}$-DP \citep{awan2022log}.

\cite{awan2023canonical} proved that a CND exists for any symmetric nontrivial tradeoff function $f$ and gave the following construction, which can be easily sampled by inverse transform sampling:

\begin{lemma}(CND Construction: Theorem 3.9 in \citep{awan2023canonical}). Let $f$ be a symmetric nontrivial tradeoff function. Let $c \in [0,1/2)$ be the unique value satisfying $f(1-c)=c$. Define $F_{f}: \mathbb{R} \rightarrow \mathbb{R}$ as 
\[
F_f(x) = 
\begin{cases} 
    f(F_f(x + 1)) & \text{if } x < -1/2, \\
    c(1/2 - x) + (1 - c)(x + 1/2) & \text{if } -1/2 \leq x \leq 1/2, \\
    1 - f(1-F_f(x - 1)) & \text{if } x > 1/2.
\end{cases}
\]
Then $F_{f}$ is a canonical noise distribution for $f$.  
\end{lemma}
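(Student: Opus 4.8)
The plan is to verify the four defining properties of a canonical noise distribution in Definition \ref{def: cnd} directly for the explicitly constructed $F_f$, after first establishing that $c$ and $F_f$ are well defined. For $c$, I would consider $h(c) = f(1-c) - c$ on $[0,1/2]$. Since $f$ is non-decreasing, $c \mapsto f(1-c)$ is non-increasing while $c \mapsto c$ is strictly increasing, so $h$ is strictly decreasing and hence has at most one root. Because $h(0) = f(1) \ge 0$ and $h(1/2) = f(1/2) - 1/2 \le 0$ (using $f(x) \le x$), the intermediate value theorem yields a unique root, and nontriviality together with symmetry of $f$ forces $f(1/2) < 1/2$, placing $c \in [0,1/2)$. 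I would then argue that $F_f$ is well defined on all of $\mathbb{R}$ by induction on the intervals $[k-1/2,\, k+1/2]$: the middle branch defines $F_f$ on the base interval, and the two recursive branches extend it to neighboring intervals since they only reference values shifted one unit toward the center. The one consistency check is at the interface points $x = \pm 1/2$; at $x = 1/2$ the middle branch gives $1 - c$ while the right branch gives $1 - f(1 - F_f(-1/2)) = 1 - f(1-c) = 1-c$ by the defining equation $f(1-c) = c$ (here $F_f(-1/2) = c$), so the two branches agree, and symmetrically at $x = -1/2$.

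Next I would check that $F_f$ is a genuine continuous, non-decreasing distribution function. Continuity and monotonicity propagate from the base interval through the recursions because $f$ is continuous and non-decreasing, so each composition $f(F_f(\cdot+1))$ and $1 - f(1 - F_f(\cdot - 1))$ preserves both properties. The remaining point is the limiting behavior: as $x \to -\infty$ the values are repeated $f$-iterates of numbers in $(0,1)$, which are decreasing and must converge to a fixed point of $f$; using $f(\alpha) \le \alpha$ together with nontriviality and convexity I would argue the only admissible limit is $0$, and symmetrically $F_f(x) \to 1$ as $x \to +\infty$.

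The heart of the proof is the functional equation $F_f(x-1) = f(F_f(x))$ for \emph{all} $x \in \mathbb{R}$. On the region $x < 1/2$ this is exactly the left branch after reindexing, and I would extend it to $x \ge 1/2$ using the right branch together with the symmetry relation $F_f(x) = 1 - F_f(-x)$, i.e.\ property 4, which I verify first by a parallel induction that invokes the symmetry of $f$; on the center interval this relation reduces to the linear identity $F_f(x) = 1/2 + x(1-2c)$. Granting the functional equation and property 4, property 3 amounts to showing that the optimal test between $F_f(\cdot)$ and $F_f(\cdot - 1)$ is of threshold form, i.e.\ that $F_f$ has a monotone likelihood ratio under unit shift; this is where the convexity of $f$ enters, guaranteeing that the optimal rejection region is one-sided so that $T(F_f(\cdot), F_f(\cdot-1))(\alpha) = F_f(F_f^{-1}(\alpha) - 1)$. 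Property 2 then follows immediately by substituting $\alpha = F_f(x)$ into the functional equation: $F_f(F_f^{-1}(\alpha)-1) = F_f(x-1) = f(F_f(x)) = f(\alpha)$.

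Finally, for property 1 I would show that $m \mapsto T(F_f(\cdot), F_f(\cdot - m))$ is non-increasing on $[0,1]$, so that the smallest tradeoff function over admissible shifts occurs at $m = 1$ and equals $f$ by property 2; hence $T(F_f(\cdot), F_f(\cdot - m)) \ge f$ for every $m \in [0,1]$, and scaling by the sensitivity $\Delta$ gives the mechanism guarantee. I expect the main obstacle to be precisely the monotone-likelihood-ratio and threshold-optimality argument underlying properties 1 and 3: establishing that the shifts of $F_f$ form a stochastically ordered family whose optimal tests are one-sided, and that less shift is uniformly more private, is the substantive analytic step, whereas the well-definedness, continuity, limits, and the symmetry induction are comparatively routine bookkeeping.
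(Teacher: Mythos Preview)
The paper does not prove this lemma: it is stated as a cited result, explicitly labeled ``CND Construction: Theorem 3.9 in \cite{awan2023canonical},'' and no proof appears in the appendix. So there is no in-paper argument to compare your proposal against.

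For what it is worth, your outline is a sensible reconstruction of how such a result would be established: well-definedness of $c$ and of the recursive $F_f$, the symmetry induction giving property~4, the functional equation $F_f(x-1)=f(F_f(x))$ yielding properties~2 and~3 once threshold optimality is in hand, and monotonicity in the shift $m$ for property~1. You are also right that the substantive analytic step is the monotone-likelihood-ratio / threshold-optimality argument underlying properties~1 and~3; the rest is bookkeeping. If you want to flesh this out fully, the original source \cite{awan2023canonical} is where the actual proof lives.
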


\cite{awan2023canonical} showed that CNDs are central to the construction of optimal DP tests.

Recall that our objective is to develop a private test that satisfies Semi-DP while maximizing statistical power. Without the privacy constraint, there is no UMP test for this problem. Traditionally, attention is restricted to unbiased tests. Recall that for a parameter space \(\Omega\) that can be partitioned into \(\Omega = \Omega_{0} \cup \Omega_{1}\), a test is unbiased if for all \( w_{1} \in \Omega_{1} \) and \( w_{0} \in \Omega_{0} \), the power at \( \Omega_{1} \) is higher than at \( w_{0} \). Then the search for a UMP unbiased test can be restricted to tests which satisfy \( \mathbb{E}_{w=1}\left( \phi(X_{11}, X_{12}, X_{21}, X_{22}) \mid T \right) = \alpha \) by leveraging the Neyman structure as in \cite{awan2023canonical}.

Note that \( (x_{11}, x_{12}, x_{21}, x_{22}) \mid t \) is equal in distribution to \( \left( H, \; t_{1\cdot} - H, \; t_{\cdot 1} - H, \; H + t_{2\cdot} - t_{\cdot 1} \right) \), where \( H \sim \text{Hyper}(t_{1\cdot}, t_{2\cdot}, t_{\cdot 1}, w) \) is the Fisher noncentral hypergeometric distribution, which has the probability mass function (pmf):

\begin{equation*}
    \mathbb{P}_{w}\left( H = x \right) = \frac{\binom{t_{1\cdot}}{x} \binom{t_{2\cdot}}{t_{\cdot 1} - x} w^{x}}{\sum_{k=L}^{U} \binom{t_{1\cdot}}{k} \binom{t_{2\cdot}}{t_{\cdot 1} - k} w^{k}},
\end{equation*}
with support \( L = \max\{ 0, \; t_{\cdot 1} - t_{2\cdot} \}, \; L + 1, \ldots, \; U = \min\{ t_{1\cdot}, \; t_{\cdot 1} \} \). Additionally, \( x_{11} \mid t_{1\cdot} \sim \text{Binom}(t_{1\cdot}, \theta_{1}) \) and \( x_{21} \mid t_{2\cdot} \sim \text{Binom}(t_{2\cdot}, \theta_{2}) \).

This leads to tests \( \phi \) satisfying:
\begin{equation}\label{equ: neyman umpu size alpha}
    \mathbb{E}_{H \sim \text{Hyper}(t_{1\cdot}, t_{2\cdot}, t_{\cdot 1}, 1)} \phi\left( H, \; t_{1\cdot} - H, \; t_{\cdot 1} - H, \; H + t_{2\cdot} - t_{\cdot 1} \right) = \alpha,
\end{equation}
as shown in Lemma \ref{thm: neyman umpu size alpha}.

Then the Semi-DP UMPU test is equivalent to identifying the UMP among the following set of \( (\mathcal{D}_{t}, A_{a(t)}, f) \)-DP tests:
\begin{equation}
    \Phi_{f}^{\text{semi}} = \left\{ \phi(x_{11}, x_{12}, x_{21}, x_{22}): \phi \text{ satisfies Equation } (\ref{equ: f-semi dp condition}) \text{ and Equation } (\ref{equ: neyman umpu size alpha}) \right\}.
\end{equation}

\begin{theorem}\label{thm: semi dp umpu}
Let \( f \) be a symmetric nontrivial tradeoff function, and let \( F_{f} \) be a CND for \( f \). Let \( \alpha \in (0,1) \) be given. For the \( 2 \times 2 \) contingency table and the hypothesis \( H_{0}: \; w \leq 1 \) vs. \( H_{1}: \; w > 1 \),

\begin{equation}
    \phi^{\ast}(X) = \psi_{z}^{\ast}(x_{11}) = F_{f}\left( x_{11} - m(t) \right),
\end{equation}
is the UMP among \( \Phi_{f}^{\text{semi}} \), where \( m(t) \) is chosen such that the test has size \( \alpha \). Moreover, if \( N \sim F_{f} \), then  \( U = X_{11} + N \) satisfies \( (\mathcal{D}_{t}, A_{a(t)}, f) \)-DP, and the quantity
\begin{equation}
    p = \mathbb{E}_{X_{11}\sim \text{Hyper}(t_{1\cdot}, t_{2\cdot}, t_{\cdot 1}, 1)}F_{f}(X_{11}-U),
\end{equation}
which is a post-processing of $U$, is a $p$-value that agrees with $\phi^{\ast}(X)$. That is, $\mathbb{P}_{U}\left( p \leq \alpha \mid X \right) = \phi^{\ast}(X).$
\end{theorem}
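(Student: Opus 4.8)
The plan is to reduce the bivariate testing problem to a one-dimensional problem in $x_{11}$ by conditioning on the invariant $t$, and then to recognize the result as an instance of the canonical-noise UMP construction of \cite{awan2023canonical}. First I would use the distributional identity stated just before the theorem: given $t$, the whole table is a deterministic function of $x_{11}$, and $x_{11}\mid t \sim \text{Hyper}(t_{1\cdot},t_{2\cdot},t_{\cdot 1},w)$. Hence every $\phi \in \Phi_f^{\text{semi}}$ can be written as $\phi(X)=\psi(x_{11})$ for a function $\psi$ on the hypergeometric support $\{L,\dots,U\}$, its conditional size under $w=1$ is exactly the left-hand side of Equation~(\ref{equ: neyman umpu size alpha}), and the Semi-DP constraints in Equation~(\ref{equ: f-semi dp condition}) collapse to the one-dimensional $f$-DP test inequalities $\psi(x)\le 1-f(1-\psi(x+1))$ and $\psi(x+1)\le 1-f(1-\psi(x))$ for consecutive $x$. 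This last step is immediate from Example~\ref{example: 2 by 2 sens space}, where $S_{Semi}=\{(1,-1,-1,1),(-1,1,1,-1),0\}$ shows that the two adjacency directions correspond precisely to incrementing and decrementing $x_{11}$ by one.

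Next I would verify that $\psi^*(x)=F_f(x-m(t))$ lies in $\Phi_f^{\text{semi}}$. Membership follows from the CND properties in Definition~\ref{def: cnd}: combining $f(F_f(y))=F_f(y-1)$ (properties 2 and 3) with the symmetry $F_f(-y)=1-F_f(y)$ (property 4), the inequality $\psi^*(x+1)\le 1-f(1-\psi^*(x))$ in fact holds with equality, while the opposite direction is slack. Thus $\psi^*$ \emph{saturates} the privacy constraint in the increasing direction. The constant $m(t)$ is then fixed by the size requirement: since $m \mapsto \mathbb{E}_{X_{11}\sim\text{Hyper}(t_{1\cdot},t_{2\cdot},t_{\cdot 1},1)}F_f(X_{11}-m)$ is continuous and strictly decreasing from $1$ to $0$, there is a unique $m(t)$ making the conditional size equal $\alpha$, so $\psi^*$ satisfies Equation~(\ref{equ: neyman umpu size alpha}) and respects the Neyman structure of Lemma~\ref{thm: neyman umpu size alpha}.

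The heart of the argument, and the step I expect to be the main obstacle, is optimality: showing $\psi^*$ is uniformly most powerful over all competitors in $\Phi_f^{\text{semi}}$ of the same size. I would first record that the Fisher noncentral hypergeometric family is a one-parameter exponential family in $\log w$, with likelihood ratio $(w_2/w_1)^{x}$ increasing in $x$, so it has monotone likelihood ratio in $x_{11}$. The difficulty is that the ordinary Neyman--Pearson threshold test is inadmissible here, because the Semi-DP inequalities couple adjacent values and forbid sharp jumps; the optimal test must instead be the \emph{steepest} admissible monotone function, which is exactly the saturating $\psi^*$. I would make this precise by adapting the optimality proof of \cite{awan2023canonical}: for any $\psi\in\Phi_f^{\text{semi}}$ with $\mathbb{E}_{w=1}\psi=\alpha$, the saturating property forces a single-crossing relation between $\psi$ and $\psi^*$, and single-crossing together with the MLR inequality yields $\mathbb{E}_{w}\psi^*\ge \mathbb{E}_{w}\psi$ for every $w>1$. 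Deriving the single-crossing structure directly from the privacy constraints is the delicate part, and is where I would concentrate the technical effort.

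Finally I would settle the $p$-value claim by a direct computation. Since $N\sim F_f$ and $x_{11}$ has sensitivity one in the conditional problem, part 1 of Definition~\ref{def: cnd} gives that $U=X_{11}+N$ satisfies $(\mathcal{D}_t,A_{a(t)},f)$-DP, and $p$, being a function of $U$ and the public $t$, inherits this guarantee by post-processing (Proposition~\ref{thm: semi dp postprocessing}). Because $F_f$ is increasing, $p=\mathbb{E}_{X_{11}}F_f(X_{11}-U)$ is strictly decreasing in $U$, so $\{p\le \alpha\}=\{U\ge u_\alpha\}$, where $u_\alpha$ solves $\mathbb{E}_{X_{11}\sim\text{Hyper}(t_{1\cdot},t_{2\cdot},t_{\cdot 1},1)}F_f(X_{11}-u_\alpha)=\alpha$; this is the very equation defining $m(t)$, so $u_\alpha=m(t)$. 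Then $\mathbb{P}_U(p\le \alpha\mid X)=\mathbb{P}(N\ge m(t)-x_{11})=F_f(x_{11}-m(t))=\phi^*(X)$ by the symmetry of $F_f$, which is the desired agreement.
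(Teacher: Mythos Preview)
Your proposal is correct and follows the same route as the paper: reduce to a one-dimensional problem in $x_{11}$ via the hypergeometric conditional distribution and the sensitivity space of Example~\ref{example: 2 by 2 sens space}, then invoke the canonical-noise UMP machinery of \cite{awan2023canonical}. The only difference is that the paper packages your membership check, single-crossing/MLR optimality argument, and $p$-value computation into a single citation of Lemma~\ref{thm: dp np lemma} (their adaptation of Theorem~4.8 in \cite{awan2023canonical}), whereas you unpack that lemma's proof; since the lemma already delivers all three conclusions once you observe that $|x_{11}-x_{11}'|\le 1$ under $A_{a(t)}$ and that the noncentral hypergeometric has monotone likelihood ratio, you need not re-derive the single-crossing step you flagged as the main obstacle.
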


In Theorem \ref{thm: semi dp umpu}, only \( x_{11} \) is actually utilized among the four variables. However, it does not necessarily have to be \( x_{11} \); any of the variables \( x_{11}, \ldots, x_{22} \) can be free. This is because there are three independent constraints imposed by the given margins \( t \), resulting in one degree of freedom. Once a value, such as \( x_{11} \), is determined, the remaining three counts are subsequently determined by \( x_{11} \) and the margins \( t \).

We remark that both our derivation and the original semi DP test in \cite{awan2023canonical} leverage a DP version of the classic Neyman-Pearson Lemma, originally developed in \cite{awan2018differentially}. However, unlike \cite{awan2023canonical} which leverages group privacy type reasoning on their construction, we directly target the UMP test among the tests with Neyman structure under the Semi-DP constraint. The simplified proof highlights the value of understanding the Semi-DP framework.

\section{Analysis on US 2020 Census Redistricting Data (P.L. 94-171)}\label{sec: census}

The US Census published noisy counts on demographics while giving some true counts in the 2020 Census Redistricting Data (P.L. 94-171) Summary File.\footnote{\url{https://www.census.gov/programs-surveys/decennial-census/about/rdo/summary-files.html\#P1}} In the 2020 Census Disclosure Avoidance System (DAS) and the TopDown Algorithm (TDA), the invariants consist of the following counts:
\begin{itemize}
    \item total population for each state, the District of Columbia, and Puerto Rico,
    \item the number of housing units in each block (but not the population living in theses housing units),
    and
    \item the count and type of occupied group quarters in each block (but not the population living in theses group quarters).
\end{itemize}

These invariants are crucial for specific operations and legal requirements, such as the apportionment of the House of Representatives and maintaining the Master Address File (MAF) used in the census. 

Rather than $f$-DP or $(\epsilon,\delta)$-DP, the baseline DP definition in the 2020 Decennial Census is $\rho$-zero Concentrated DP($\rho$-zCDP) that utilizes R\'enyi divergence to measure the similarity on outputs of mechanisms. $\rho$-zCDP is slightly weaker than $\mu$-GDP as $\mu$-GDP implies $\frac{\mu^{2}}{2}$-zCDP, but not the opposite \citep{dong2022gaussian}. \citet{su20242020} recently derived an optimized privacy accounting for the US Census products using $f$-DP, but we will restrict our analysis to $\rho$-zCDP for simplicity.

\begin{definition}\label{def: zcdp}(zero-Concentrated Differential Privacy (zCDP), \citep{bun2016concentrated}) Given $\rho \geq 0$, a randomized mechanism $M: \mathcal{D} \rightarrow \mathcal{Y}$ satisfies $(\mathcal{D},A,\rho)$-zero-concentrated differential privacy($(\mathcal{D},A,\rho)$-zCDP) if for all $X,X' \in \mathcal{D}$ such that $A(X,X')=1$, and all $\alpha \in (1,\infty):$
\begin{equation*}
    D_{\alpha}(M(X) \Vert M(X')) \leq \rho \alpha,
\end{equation*}
where $D_{\alpha}(P \Vert Q) = \frac{1}{\alpha -1}\log \left(\sum_{E \in \mathcal{Y}}P(E)^{\alpha}Q(E)^{(1-\alpha)}\right)$ is the R\'enyi divergence of order $\alpha$ of the distribution $P$ from the distribution $Q$.
\end{definition}

We can adopt our privacy framework to the zCDP setting as well. Given the invariant $t$ and its corresponding semi-adjacent parameter $a(t)$, we write a mechanism $M$ satisfies $(\mathcal{D}_{t},A_{a(t)},\rho)$-zCDP when for all $X,X' \in \mathcal{D}_{t}$ such that $A_{a(t)}=1$, $D_{\alpha}(M(X) \Vert M(X')) \leq \rho \alpha$ for all $\alpha \in (1,\infty)$.

To facilitate the comparison of privacy guarantees in different frameworks, we utilize the following two lemmas. Lemma \ref{thm: group privacy in zcdp} establishes group privacy in zCDP while Lemma \ref{thm: zcdp conversion} provides the conversion from $\rho$-zCDP guarantantee to $(\epsilon,\delta)$-DP guarantee. Together, these lemmas offer the necessary tools for our analysis in the following section.

\begin{lemma}\label{thm: group privacy in zcdp}(Group privacy in zCDP, \citep{bun2016concentrated}) Let $M:\mathcal{D} \rightarrow \mathcal{Y}$ satisfy $\rho$-zCDP. Then $M$ guarantees $(k^{2}\rho)$-zCDP for groups of size $k$, that is, for every $X,X' \in \mathcal{D}$ differing in up to $k$ entries and every $\alpha \in (1,\infty)$, we have 
\begin{equation*}
    D_{\alpha}(M(X) \Vert M(X')) \leq (k^{2}\rho) \alpha.
\end{equation*}
\end{lemma}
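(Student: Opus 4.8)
The plan is to reduce the group-privacy statement to a telescoping chain of single-entry changes and then propagate the per-step Rényi bound along that chain. Because the adjacency in Definition \ref{def: zcdp} is induced by an adjacency metric, any $X,X'$ differing in at most $k$ entries can be joined by a path $X = X_0, X_1, \dots, X_m = X'$ with $m \le k$ and $A(X_{i-1},X_i)=1$ for each $i$. By the hypothesis that $M$ is $\rho$-zCDP, each consecutive pair satisfies $D_\alpha(M(X_{i-1}) \Vert M(X_i)) \le \rho\alpha$ for every $\alpha \in (1,\infty)$. It then suffices to prove, by induction on $m$, that $D_\alpha(M(X_0)\Vert M(X_m)) \le m^2\rho\alpha$ for all $\alpha>1$; since $m\le k$ and $m^2\le k^2$, the claimed $(k^2\rho)$-bound follows.

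The technical engine is a weak triangle inequality for Rényi divergence, which I would establish by a single application of Hölder's inequality to $\int P^\alpha R^{1-\alpha}$, inserting $Q^{\beta}Q^{-\beta}$ and splitting the integrand as $(P^\alpha Q^{-\beta})(Q^\beta R^{1-\alpha})$. With conjugate exponents $s$ and $s/(s-1)$ this yields, for any $s>1$,
\begin{equation*}
D_\alpha(P\Vert R) \le \frac{\alpha - 1/s}{\alpha - 1}\, D_{\alpha s}(P\Vert Q) + D_{\delta}(Q\Vert R), \qquad \delta = 1 + (\alpha-1)\frac{s}{s-1}.
\end{equation*}
The crucial feature is that zCDP controls $D_\beta$ at \emph{every} order $\beta>1$, so I may legitimately invoke it at the shifted orders $\alpha s$ and $\delta$ that this inequality produces.

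For the inductive step I take $P=M(X_0)$, $Q=M(X_{m-1})$, $R=M(X_m)$, apply the inductive hypothesis $D_{\alpha s}(M(X_0)\Vert M(X_{m-1})) \le (m-1)^2\rho(\alpha s)$ to the first term and the single-step bound $D_\delta(M(X_{m-1})\Vert M(X_m)) \le \rho\delta$ to the second. This reduces the whole problem to minimizing the scalar expression $(m-1)^2\alpha\frac{\alpha s-1}{\alpha-1} + 1 + (\alpha-1)\frac{s}{s-1}$ over $s>1$. Differentiating shows the optimum lies at $s = 1 + \frac{\alpha-1}{(m-1)\alpha}$, at which $\frac{s}{s-1} = \frac{m\alpha-1}{\alpha-1}$ and $\alpha s - 1 = (\alpha-1)\frac{m}{m-1}$; substituting produces the clean cancellation $(m-1)m\alpha + 1 + (m\alpha - 1) = m^2\alpha$, which closes the induction (the base case $m=1$ being exactly the zCDP hypothesis).

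The main obstacle is precisely this final optimization. Rényi divergence admits no exact triangle inequality, so the only available tool inflates the order at each peeling step, and a careless choice of the Hölder exponent overshoots the target: the natural guess $s=2$ already gives $\rho(2\alpha-1)^2/(\alpha-1) > 4\rho\alpha$ at $m=2$. Recovering the exact quadratic constant $m^2$ therefore hinges on the $\alpha$- and $m$-dependent choice of $s$ above, on checking that the induced orders $\alpha s$ and $\delta$ remain in $(1,\infty)$ so that both the inductive hypothesis and the per-step zCDP bound apply, and on verifying the algebraic identity that collapses the minimized value to exactly $m^2\alpha$.
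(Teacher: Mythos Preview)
Your argument is correct and is precisely the standard induction-plus-H\"older proof from \cite{bun2016concentrated}; the triangle-type inequality you derive, the optimized choice $s = 1 + \frac{\alpha-1}{(m-1)\alpha}$ (which gives $\alpha s > 1$ and $\delta = m\alpha > 1$, so both shifted orders stay in range), and the algebraic collapse to $m^2\alpha$ are all exactly as in that reference. The present paper does not supply its own proof of this lemma---it simply quotes the result from \cite{bun2016concentrated}---so there is nothing further to compare.
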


\begin{lemma}\label{thm: zcdp conversion}(\citep{bun2016concentrated})
If a randomized algorithm $M$ satisfies $\rho-$zCDP, it satisfies ($\epsilon = \rho + 2\sqrt{\rho \log(1/\delta)},\delta)-$DP for all $\delta >0.$ 
\end{lemma}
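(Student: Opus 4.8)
The plan is to reduce the statement to an upper-tail bound on the \emph{privacy loss random variable}. Fix an adjacent pair $X, X' \in \mathcal{D}$ and write $P = M(X)$, $Q = M(X')$. Define the privacy loss $Z = \log\frac{P(Y)}{Q(Y)}$ with $Y \sim P$. The first step is to recognize that the R\'enyi divergence in Definition \ref{def: zcdp} is exactly a rescaled log-moment generating function of $Z$: for $\alpha > 1$,
\[
D_{\alpha}(P \Vert Q) = \frac{1}{\alpha - 1}\log \mathbb{E}_{Y \sim P}\left[e^{(\alpha - 1)Z}\right],
\]
which follows from the change of measure $\mathbb{E}_{Y\sim P}[(P/Q)^{\alpha-1}] = \sum_{E} P(E)^{\alpha}Q(E)^{1-\alpha}$. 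Hence the $\rho$-zCDP hypothesis $D_{\alpha}(P\Vert Q) \leq \rho\alpha$ is equivalent to the MGF bound $\mathbb{E}[e^{(\alpha-1)Z}] \leq \exp(\rho\alpha(\alpha-1))$, valid for every $\alpha > 1$.

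Second, I would apply a Chernoff argument to control the upper tail of $Z$. Substituting $t = \alpha - 1 > 0$, the MGF bound reads $\mathbb{E}[e^{tZ}] \leq \exp(\rho t^2 + \rho t)$, and Markov's inequality gives, for any threshold $\epsilon$,
\[
\mathbb{P}[Z > \epsilon] \leq e^{-t\epsilon}\,\mathbb{E}[e^{tZ}] \leq \exp\!\left(\rho t^2 - (\epsilon - \rho)t\right).
\]
Optimizing the quadratic exponent over $t > 0$, whose minimizer is $t^\ast = (\epsilon - \rho)/(2\rho)$, yields the subgaussian-type tail bound $\mathbb{P}[Z > \epsilon] \leq \exp\!\left(-\frac{(\epsilon - \rho)^2}{4\rho}\right)$.

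Third, I would translate this tail bound into $(\epsilon, \delta)$-DP. The $(\epsilon,\delta)$-DP condition for the ordered pair $(P, Q)$ is equivalent to the hockey-stick bound $\mathbb{E}_{Y\sim P}\left[(1 - e^{\epsilon - Z})_{+}\right] \leq \delta$. Since the integrand vanishes whenever $Z \leq \epsilon$ and is bounded above by $1$ otherwise, this expectation is at most $\mathbb{P}[Z > \epsilon]$. Combining with the tail bound and setting the right-hand side equal to $\delta$, i.e. solving $\exp\!\left(-\frac{(\epsilon-\rho)^2}{4\rho}\right) = \delta$ for $\epsilon$, gives $\epsilon - \rho = 2\sqrt{\rho\log(1/\delta)}$, which is exactly the claimed $\epsilon = \rho + 2\sqrt{\rho\log(1/\delta)}$. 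Because the $\rho$-zCDP bound holds for every ordered adjacent pair, the same argument applied to $(Q, P)$ supplies the symmetric inequality, completing $(\epsilon,\delta)$-DP.

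The main obstacle, and essentially the only place requiring genuine care, is the admissibility of the optimizing order: the Chernoff minimizer $t^\ast$ is positive only when $\epsilon > \rho$, so one must confirm that the regime of interest always satisfies this. This is automatic, since $\epsilon = \rho + 2\sqrt{\rho\log(1/\delta)} > \rho$ for every $\delta \in (0,1)$, guaranteeing $t^\ast > 0$ and hence $\alpha^\ast = 1 + t^\ast > 1$ throughout; for $\delta \geq 1$ the guarantee is vacuous. A secondary, routine point is justifying the R\'enyi-to-MGF identity for mechanisms with continuous output, where sums are replaced by the corresponding Radon--Nikodym integrals, but this introduces no new difficulty.
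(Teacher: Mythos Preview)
Your proof is correct and follows the standard Chernoff-bound argument from \cite{bun2016concentrated}. Note, however, that the paper does not supply its own proof of this lemma: it is stated as a cited result from \cite{bun2016concentrated} and used as a black box in the Census analysis of Section~\ref{sec: census}, so there is no in-paper proof to compare against. Your argument is essentially the original one from that reference.
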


According to \cite{abowd20222020}, for the production run of the 2020 Census Redistricting Data (P.L. 94-171) Summary File, a total privacy-loss budget of $\rho = 2.63$ was used with $\rho = 2.56$ used for person tables and $\rho = 0.07$ for housing-units tables. The exact allocation of \(\rho\) for a specific query at each geographical level can be calculated using the total allocation of \(\rho\) and the proportions provided in \cite{abowd20222020}.

\subsection{Semi-DP Analysis of the US Census Product}
In this section, we analyze the privacy guarantees of the US Census product using the Semi-DP framework. Specifically, we focus on the total population counts for each state as an invariant and demonstrate how the Census's mechanism provides privacy guarantees based on semi-adjacency accounting for the presence of the invariant. Note that the other two counts---of housing units and group quarters---are not about individual citizens, which is the unit of privacy protection. We also compare the privacy guarantees in both the zCDP and \((\epsilon, \delta)\)-DP frameworks.

We can analyze the privacy guarantee of the US Census product using the semi-adjacent parameter \( a(t) \). Since the total population count for each state is an example of a one-way margin involving a single feature, we have \( a(t) = 2 \), as established by Theorem \ref{thm: count stat semi adj para}. 

With \( a(t) = 2 \), the Census mechanism satisfies \((\mathcal{D}_{t}, A_{2}, 10.24)\)-zCDP, rather than satisfying \((\mathcal{D}, A_{1}, 2.56)\)-zCDP as the Census announced. The parameter 10.24 arises from the group privacy property of \(\rho\)-zCDP, where the privacy parameter is inflated by a factor of \(2^{2} \rho\) for a group of size 2, as described in Lemma \ref{thm: group privacy in zcdp}. Furthermore, satisfying group privacy of size 2 under \(\mathcal{D}\) and \( A_{2} \) implies Semi-DP under \(\mathcal{D}_{t}\) and \( A_{2} \).

Additionally, when converting the \(\rho\)-zCDP guarantee to an \((\epsilon, \delta)\)-DP guarantee with \(\delta=10^{-10}\), the guarantee advertized by Census is \(\epsilon = 17.91528\), while our approach yields \(\epsilon = 40.95057\), both derived using Lemma \ref{thm: zcdp conversion}. 

According to our Semi-DP framework, the Census is  paying a higher privacy cost than what they have explained as the advertised \((\mathcal{D}, A_{1}, 2.56)\)-zCDP guarantee does not account for the impact of the invariant.

\section{Conclusion and Discussion}\label{sec: conclusion}
This work introduces a general Semi-DP framework, building on the concept introduced in \cite{awan2023canonical}, as an extension of traditional DP to address scenarios where true statistics, or invariant, are released alongside DP outputs. Semi-DP redefines adjacency relations to focus on invariant-conforming datasets, providing privacy protection in complex situations that existing DP approaches do not fully address. However, Semi-DP has limitations, particularly in maintaining individual-level privacy, which is a key strength of traditional DP.

The main limitation of Semi-DP lies in the restricted notion of adjacency. In traditional DP, using the adjacency function \( A_{1} \), any individual \( x_{i} \) in the confidential dataset can be replaced with any other individual \( y \), thereby guaranteeing individual-level protection. In contrast, under Semi-DP, the choices for \( y \) are limited to the set \( \mathcal{D}_{t}^{i} \), where the candidates are constrained to individuals who belong to datasets that conform to the same invariant. This restriction reduces the flexibility in substituting \( x \), potentially weakening privacy protection.

Specifically, if an adversary possesses side information indicating that certain candidates \( y \) are not actually present in the dataset, a privacy breach could occur when \( x \) attempts to ``pretend" to be one of these excluded candidates. In such cases, the adversary would immediately recognize the substitution as a false representation, thus compromising the privacy of \( x \). This scenario illustrates how Semi-DP may fail to provide the same level of privacy protection as traditional DP, where individual-level guarantees remain robust even if an adversary knows \( n-1 \) entries in the dataset.

This discussion extends to the implications of side information that an adversary may possess. In standard DP, side information generally does not compromise privacy guarantees because the protection is inherently robust against disclosure of individual entries. However, in Semi-DP, the presence of side information, especially when combined with an invariant, can pose a significant threat to privacy. In the extreme, there can be some side information that, where merged with the invariant, narrows down the databases to a single database, thus completely undermining privacy. 

Overall, while Semi-DP provides a novel approach to handling complex privacy scenarios involving invariants, it requires careful consideration of its limitations and the potential risks associated with side information. Future work may develop strategies to quantify and mitigate these vulnerabilities, such as refining the definition of adjacency or gaining a deeper understanding of the structure of a given invariant to better characterize the trade-offs between privacy guarantees and the release of sensitive information.

\acks{This research was supported in part by the NSF grant SES-2150615. We would like to express our gratitude to Sewhan Kim and Jinwon Sohn for their contributions during the early stages of this project. }


\appendix

\section{Proofs}\label{sec: appendix proof}
\subsection{Proof of Theorem \ref{thm: count stat semi adj para}}

\begin{proof}
Let \( X^* \) be the confidential database consisting of \( n \) individuals, and let the one-way margins \( t = \mathbf{T}(X^*) \) be the invariant. The semi-adjacent parameter \( a(t) \) measures the worst-case number of changes required to change an individual \( x \) into another individual \( y \), while ensuring that the invariant \( t \) are preserved.

Consider any individual \( x \in X^* \). To address the worst-case scenario, choose \( y \) such that all \( p \) features of \( y \) differ from those of \( x \), meaning \( y_{i} \neq x_{i} \) for each feature \( i = 1, \dots, p \). Changing \( x \) to \( y \) involves changing each feature of \( x \), denoted \( x_{i} \), to the corresponding \( y_{i} \).

Note on the other hand that since \( y \) is chosen from the set of possible data values across datasets, there exist individuals \( x^{(1)}, \dots, x^{(p)} \in X^* \) such that, for each feature \( i \), \( x_{i}^{(i)} = y_{i} \). This ensures that there is at least one individual in the dataset who shares feature \( i \) with \( y \).

Therefore, we can proceed the following to preserve the one-way margins: For each feature \( i \), choose \( x^{(i)} \in X^* \) and change this individual to someone who has \( x_{i} \) for feature \( i \), while keeping all other features the same as those of \( x^{(i)} \). This modification ensures that the value \( y_{i} \) is replaced with the original \( x_{i} \), while leaving the counts for other features unchanged, thereby maintaining the one-way margins. This adjustment requires \( p \) additional changes, one for each feature.

Therefore, the total number of changes is \( p + 1 \), leading to the upper bound:
\[
a(t) \leq p + 1.
\]
\end{proof}

\subsection{Proof of Theorem \ref{thm: semi dp composition}}

The proof of Theorem \ref{thm: semi dp composition} relies primarily on Lemma \ref{lemma: tradeoff function}, from \cite{awan2022log}.

\begin{lemma}\label{lemma: tradeoff function}(Proposition A.7 in \cite{awan2022log}.)
    Let $f$ and $g$ be tradeoff functions. Then 
    \begin{equation*}
        (f \otimes g)^{\circ k} \leq f^{\circ k} \otimes g^{\circ k}.
    \end{equation*}
\end{lemma}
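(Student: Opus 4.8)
\section*{Proof Proposal}

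The final statement to prove is Lemma \ref{lemma: tradeoff function}: for tradeoff functions $f$ and $g$, we have $(f \otimes g)^{\circ k} \leq f^{\circ k} \otimes g^{\circ k}$. The plan is to reduce the claim to the single-step inequality $k=2$ and then bootstrap to general $k$ by induction, exploiting monotonicity of the tensor product and of functional composition. The key structural fact I would rely on is that the tensor product $\otimes$ is monotone in each argument with respect to the pointwise order on tradeoff functions: if $f_1 \leq f_2$ and $g_1 \leq g_2$ then $f_1 \otimes g_1 \leq f_2 \otimes g_2$. This follows because a larger tradeoff function corresponds to a harder testing problem, and the tensor product preserves the ordering of the underlying hypotheses (via the data-processing/composition interpretation in \cite{dong2022gaussian}).

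First I would establish the base inequality at the level of one composition, namely $(f \otimes g) \circ (f \otimes g) \leq (f \circ f) \otimes (g \circ g)$, or more precisely the statement needed to launch induction. The natural route is the operational reading of composition and tensoring: $f^{\circ k}$ is the tradeoff function governing a $k$-fold group-privacy chain, and $f \otimes g$ governs the joint release of two independent mechanisms. The inequality $(f \otimes g)^{\circ k} \leq f^{\circ k} \otimes g^{\circ k}$ then expresses that composing the joint mechanism $k$ times is \emph{at most} as hard to attack as independently composing each coordinate $k$ times and tensoring. I would make this rigorous using the known tensor-composition identity (Theorem on composition of $f$-DP in \cite{dong2022gaussian}): since $f \otimes g = T(P_1 \times P_2, Q_1 \times Q_2)$, iterating the composition distributes across the product measures, giving $(f \otimes g)^{\circ k}$ as the tradeoff function of a product of $k$-fold products, which can be regrouped as the tensor of the two $k$-fold compositions up to the composition inequality.

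For the inductive step I would assume $(f \otimes g)^{\circ (k-1)} \leq f^{\circ (k-1)} \otimes g^{\circ(k-1)}$ and write
\[
(f \otimes g)^{\circ k} = (f \otimes g)^{\circ(k-1)} \circ (f \otimes g).
\]
Applying the induction hypothesis to the outer factor and then invoking the sub-distributivity of composition over the tensor product, $\big(f^{\circ(k-1)} \otimes g^{\circ(k-1)}\big) \circ (f \otimes g) \leq (f^{\circ(k-1)} \circ f) \otimes (g^{\circ(k-1)} \circ g) = f^{\circ k} \otimes g^{\circ k}$, would close the induction, provided monotonicity of $\circ$ in its arguments is available (composing with a convex nonincreasing-slope tradeoff function preserves the pointwise order). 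I would assemble these monotonicity facts as preliminary observations before the induction.

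The main obstacle I anticipate is the single sub-distributivity inequality $(a \otimes b) \circ (f \otimes g) \leq (a \circ f) \otimes (b \circ g)$ for tradeoff functions $a,b,f,g$, which is the genuine analytic content; everything else is bookkeeping. Since this is precisely the $k=2$-flavored statement at the heart of the lemma, the cleanest strategy is to prove it directly from the primal definition of the tradeoff function as an infimum over tests, or equivalently from the dual (convex-conjugate) characterization used in \cite{awan2022log}, where composition becomes an infimal convolution and the tensor product becomes addition in the dual variables; in that representation the inequality should follow from a convexity/rearrangement argument, and I would verify carefully that the conjugate of $(f\otimes g)^{\circ k}$ dominates that of $f^{\circ k}\otimes g^{\circ k}$. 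If a fully self-contained argument proves cumbersome, the fallback is to cite the operational composition theorem of \cite{dong2022gaussian} directly, since $(f\otimes g)^{\circ k}$ and $f^{\circ k}\otimes g^{\circ k}$ are both realized by explicit product mechanisms whose comparison reduces to the post-processing inequality.
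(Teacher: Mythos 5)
The paper offers no proof of this lemma to compare against: it is imported verbatim as Proposition A.7 of \cite{awan2022log}, with the citation serving as the proof. So the question is whether your blind argument stands on its own. Your outer skeleton is fine: both $\otimes$ and $\circ$ are monotone in each argument (for $\otimes$ this is among the tensor-product properties established in \cite{dong2022gaussian}; for $\circ$ it is elementary because tradeoff functions in this paper's convention are nondecreasing), so induction on $k$ legitimately reduces the lemma to the single inequality $(a\otimes b)\circ(f\otimes g)\leq(a\circ f)\otimes(b\circ g)$, and that inequality is in fact true.

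The gap is that this inequality \emph{is} the lemma, and neither route you offer for it survives scrutiny. The dual picture you invoke --- ``composition becomes an infimal convolution and the tensor product becomes addition in the dual variables'' --- is not a result of \cite{awan2022log} or \cite{dong2022gaussian} and is not correct: under the standard transforms, tensoring corresponds to convolving privacy-loss distributions, while functional composition $f^{\circ k}$ (group privacy) is a genuinely different operation; compare $G_\mu^{\otimes k}=G_{\sqrt{k}\,\mu}$ with $G_\mu^{\circ k}=G_{k\mu}$. Your fallback conflates the two operations outright: $(f\otimes g)^{\circ k}$ is \emph{not} ``the tradeoff function of a product of $k$-fold products''; no product mechanism realizes a $\circ$-power, and post-processing is not the relevant inequality. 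What actually closes the argument is (a) the chaining form of group privacy: if $T(P_{i-1},P_i)\geq h$ for $i=1,\dots,k$ then $T(P_0,P_k)\geq h^{\circ k}$; together with (b) \emph{exact realizability} of $f^{\circ k}$ and $g^{\circ k}$ as endpoint tradeoffs of chains whose steps dominate $f$ and $g$. For (b), take $P_i$ on $[0,1]$ with cdf $x\mapsto 1-f^{\circ i}(1-x)$ (and $Q_i$ analogously for $g$, up to routine handling of atoms at the endpoints): these families have monotone likelihood ratio, so threshold tests are simultaneously optimal for every pair, giving $T(P_{i-1},P_i)=f$ and $T(P_0,P_k)=f^{\circ k}$ exactly. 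Then the product chain $R_i=P_i\times Q_i$ satisfies $T(R_{i-1},R_i)=f\otimes g$ and $T(R_0,R_k)=f^{\circ k}\otimes g^{\circ k}$ by the well-definedness of $\otimes$, and (a) applied to this chain yields $f^{\circ k}\otimes g^{\circ k}\geq(f\otimes g)^{\circ k}$. This realizability step --- the tightness of group privacy --- is the idea your proposal never supplies, and without it the argument does not close; with it, the same two-step construction also proves your sub-distributivity inequality, but at that point the induction becomes unnecessary.
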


By induction, Lemma \ref{lemma: tradeoff function} can be extended to any finite number of tradeoff functions.\\

\begin{proof} 
First of all, it is clear that both ways of analysis share the same indistinguishable pairs:
\begin{center}
$IND(\mathcal{D}_{t},A_a) = \{(X,X') \in \mathcal{D}_{t} \times \mathcal{D}_{t}: A_{a}(X,X')=1\}$,    
\end{center}
 as $IND$ depends only on the invariant $t$ and semi-adjacent parameter $a$. 

To compare the tradeoff functions, consider the composition first strategy. By the compsition property of DP, the tradeoff function of $(M_{1}(X),\cdots,M_{k}(X))$ for any databases $X,X'$ such that $A_{1}(X,X')= 1$, we have
\begin{equation*}
    T\left((M_{1}(X),\cdots,M_{k}(X)),(M_{1}(X'),\cdots,M_{k}(X'))\right) \geq f_{1}\otimes \cdots \otimes f_{k}.
\end{equation*}

Moreover, as $d(X,X') =a,$  by the group privacy property on $(M_{1}(\cdot),\cdots,M_{k}(\cdot))$, we have
\begin{equation*}
    T\left((M_{1}(X),\cdots,M_{k}(X)),(M_{1}(X'),\cdots,M_{k}(X'))\right) \geq (f_{1}\otimes \cdots \otimes f_{k})^{a}.
\end{equation*}

On the other hand, for the Semi-DP first strategy, note that for each $M_{i}(X)$, we have 
\begin{equation*}
    T(M_{i}(X),M_{i}(X')) \geq f^{\circ a},
\end{equation*}
for any pair $(X,X') \in IND(\mathcal{D}_{t},a)$ by Proposition \ref{thm: full characterize}. Then, by the composition property, the tradeoff function of the joint release $(M_{1}(X),\cdots,M_{k}(X))$ for $(X,X') \in IND(\mathcal{D}_{t},a)$ satisfies
\begin{equation*}
    T\left((M_{1}(X),\cdots,M_{k}(X)),(M_{1}(X'),\cdots,M_{k}(X'))\right) \geq f_{1}^{\circ a} \otimes \cdots \otimes f_{k}^{\circ a}.
\end{equation*}

Finally, the comparison between the two tradeoff functions is immediate by Lemma \ref{lemma: tradeoff function}.
\end{proof}

\subsection{Proof of Proposition \ref{thm: naive SS large}}
\begin{proof}
It is sufficient to show the set inclusions $S_{Semi} \subseteq S_{Group}$ and $\hull(S_{Semi}) \subseteq a\hull(S_{DP})$ since the inequalities in sensitivity immediately follows from the set inclusions.

To begin with, $S_{Semi} \subseteq S_{Group}$ is clear since $\mathcal{D}_{t} \subseteq \mathcal{D}$ and both the sensitivity space relies on the same adjacent function $A_{a}$.

To see $\hull(S_{Semi}) \subseteq a\hull(S_{DP})$, take any $u \in \hull(S_{Semi})$. Then there exist $s_{1},\cdots,s_{n}$ such that 

    \begin{equation*}
        u = \sum_{i=1}^{n}\lambda_{i}s_{i},
    \end{equation*}
    where the real numbers $\lambda_{i}$ satisfy $\lambda_{i} \geq 0$ and $\lambda_{1} + \cdots + \lambda_{n} = 1$.
    
    Note then for each $s_{i}$, $s_{i} = \phi(X)-\phi(X')$ for some $X,X'$ such that $d(X,X') \leq a$. Take $X_{0},X_{1},\cdots,X_{a}$ such that $X_{0} = X, X_{a} = X'$ and for each $i$ such that $0 \leq i \leq a-1$, $X_{i+1}$ be obtained from $X_{i}$ by changing one row. Then 
    \begin{equation*}
        \begin{split}
            s_{i} &= \phi(X_{0}) - \phi(X_{a}) \\
            &= \sum_{j=1}^{a}\left[\phi(X_{j-1}) - \phi(X_{j})\right]\\
            &=q_{1}^{i}+\cdots + q_{a}^{i},
        \end{split}
    \end{equation*}
    where $q_{j}^{i}:= \left[\phi(X_{j-1}) - \phi(X_{j})\right] \in S_{DP}$ for all $j \in \{1,2,\cdots,a\}$.
    
    By taking $ \beta_{ij} = \lambda_{i} $, we can write $u$ by
    \begin{equation*}
        \begin{split}
            u &= \sum_{i=1}^{n}\sum_{j=1}^{a}\lambda_{i}q_{j}^{i}
            = \sum_{i=1}^{n}\sum_{j=1}^{a}\beta_{ij}q_{j}^{i},
        \end{split}
    \end{equation*}
    where $\sum_{j=1}^{a}\sum_{i=1}^{n}\beta_{ij} = \sum_{j=1}^{a}\sum_{i=1}^{n}\beta_{ij}\lambda_{i} = \sum_{j=1}^{a} 1 =a$.
    Therefore, $\frac{u}{a} \in \hull(S_{DP})$ as desired
\end{proof}

\subsection{Proof of Proposition \ref{prop: linear invariant rank def}}
\begin{proof}
    Let \( t = L\phi(X) \), where \( L \) is a matrix representing a non-trivial linear transformation. The sensitivity space \( S_{Semi} \) is defined as:
    \[
    S_{Semi} = \{\phi(X) - \phi(X') : L\phi(X) = L\phi(X') \text{ and } A_{a(t)}(X, X') = 1\}.
    \]
    This means that for any \( u = \phi(X) - \phi(X') \in S_{Semi} \), the condition \( L\phi(X) = L\phi(X') \) implies \( L(\phi(X) - \phi(X')) = L(u) = 0 \). Hence, every element of \( S_{Semi} \) lies in the kernel of \( L \), so \( L(u) = 0 \) for all \( u \in S_{Semi} \).

    Since \( L \) is a non-trivial linear transformation, the kernel of \( L \), \( \text{ker}(L) \), has dimension strictly less than the full space \( \mathcal{D} \), implying that the space \( S \subseteq \text{ker}(L) \) is rank-deficient.
\end{proof}

\subsection{Proof for Proposition \ref{thm: projection framework}}

Before we delve into the main part of the proof of Proposition \ref{thm: projection framework}, the following Lemma \ref{thm: orthogonal complement indep of data} is the key tool for mechanism design in that it shows that applying $\Proj_{\mathcal{S}}^{\perp}\phi(\cdot)$ as post-processing does not affect the privacy guarantee, where $\Proj_{\mathcal{S}}^{\perp}\phi(\cdot)$ is the orthogonal projection operator onto the orthogonal complement of $\mathcal{S}$.

\begin{lemma}\label{thm: orthogonal complement indep of data}
    A mechanism $\text{Proj}_{\mathcal{S}}^{\perp}\phi(X)$ is constant for all input databases $X$.
\end{lemma}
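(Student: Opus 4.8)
The plan is to show that the map $X \mapsto \Proj_{\mathcal{S}}^{\perp}\phi(X)$ takes the same value on every database, by first establishing invariance across a single adjacency step and then propagating it along a connecting path. The two ingredients I would use are the linearity of the orthogonal projection together with the fact that $\Proj_{\mathcal{S}}^{\perp}$ annihilates every vector in $\mathcal{S} = \Span(S_{\phi})$, and the connectivity of the dataspace under the adjacency relation guaranteed by the adjacency metric.

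First I would handle the single-step case. Fix any adjacent pair $X, X'$ with $A(X,X') = 1$. By the definition of the sensitivity space (Definition \ref{def: sensitivity space}), the difference $\phi(X) - \phi(X')$ belongs to $S_{\phi}$, and hence to $\mathcal{S} = \Span(S_{\phi})$. Since $\Proj_{\mathcal{S}}^{\perp}$ is the orthogonal projection onto the orthogonal complement of $\mathcal{S}$, it vanishes on $\mathcal{S}$, so $\Proj_{\mathcal{S}}^{\perp}(\phi(X) - \phi(X')) = 0$. Using linearity of the projection, this yields $\Proj_{\mathcal{S}}^{\perp}\phi(X) = \Proj_{\mathcal{S}}^{\perp}\phi(X')$; that is, the projected value is unchanged when passing to an adjacent database.

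Next I would extend this to an arbitrary pair $X, X' \in \mathcal{D}$. Because the adjacency function $A$ is induced by an adjacency metric $d$, any two databases with $d(X,X') = a$ are joined by a finite chain $X = X_0, X_1, \dots, X_a = X'$ in which each consecutive pair satisfies $A(X_{i-1}, X_i) = 1$. Applying the single-step result to each link gives $\Proj_{\mathcal{S}}^{\perp}\phi(X_{i-1}) = \Proj_{\mathcal{S}}^{\perp}\phi(X_i)$ for every $i \in [a]$, and chaining these equalities gives $\Proj_{\mathcal{S}}^{\perp}\phi(X) = \Proj_{\mathcal{S}}^{\perp}\phi(X')$. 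Since $X, X'$ were arbitrary, the mechanism $\Proj_{\mathcal{S}}^{\perp}\phi(X)$ is constant in $X$.

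The only real obstacle is the connectivity requirement in the chaining step: the argument needs every database in $\mathcal{D}$ to be reachable from every other through finitely many adjacency moves. This is precisely what the adjacency metric guarantees, since $d(X,X') = a < \infty$ is by definition equivalent to the existence of such a connecting sequence; for the standard dataspaces considered here (e.g.\ $\mathcal{X}^{n}$ under Hamming distance, or $\bigcup_{i} \mathcal{X}^{i}$ under symmetric difference) this holds for all pairs. I would therefore make explicit that the claim is understood relative to a dataspace that is connected under $A$, which is the standing assumption throughout.
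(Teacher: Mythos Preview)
Your proof is correct and follows essentially the same approach as the paper: use membership of $\phi(X)-\phi(X')$ in $S_\phi\subseteq\mathcal{S}$ for adjacent pairs to conclude $\Proj_{\mathcal{S}}^{\perp}\phi(X)=\Proj_{\mathcal{S}}^{\perp}\phi(X')$. In fact you are more careful than the paper, which stops after the single-step case and asserts constancy directly; your explicit chaining via the adjacency metric fills in that implicit step.
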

\begin{proof}
For any $X,X'$ such that $A(X,X')=1$, we have 
\begin{equation*}
    \phi(X) - \phi(X') \in S,
\end{equation*}
so we have $\text{Proj}_{\mathcal{S}}^{\perp}(\phi(X)-\phi(X'))=0$, or $\text{Proj}_{\mathcal{S}}^{\perp}\phi(X) = \text{Proj}_{\mathcal{S}}^{\perp}\phi(X')$.

    Therefore, $\text{Proj}_{\mathcal{S}}^{\perp}\phi(X)$ is constant for all $X$
\end{proof}

\begin{proof}{\bf of Proposition \ref{thm: projection framework}} We apply post-processing property twice. 
    
First, apply a transformation $\Proj_{\mathcal{S}}$ to $M(X) = \phi(X) +e$. Then $\Proj_{\mathcal{S}}M(X) = \Proj_{\mathcal{S}}\phi(X) + \Proj_{\mathcal{S}}e $ still satisfies $(\mathcal{D},A,f)$-DP by the post-processing property. 

    Second, we apply an additional transformation on $\Proj_{\mathcal{S}}M(X)$ by adding  $\Proj_{\mathcal{S}}^{\perp}\phi(X)$. Recall that Lemma \ref{thm: orthogonal complement indep of data} shows $\Proj_{\mathcal{S}}^{\perp}\phi(X)$ is a data-independent constant. Therefore, by the post-processing property again,
    \begin{center}
    $\phi(X) + \Proj_{\mathcal{S}}e = \Proj_{\mathcal{S}}M(X) + \Proj_{\mathcal{S}}^{\perp}\phi(X)= \Proj_{\mathcal{S}}\phi(X) + \Proj_{\mathcal{S}}^{\perp}\phi(X)+\Proj_{\mathcal{S}}e $    
    \end{center}
     satisfies $(\mathcal{D},A,f)$-DP, as desired.
\end{proof}

\subsection{Proof for Proposition \ref{thm: gaussian mech guarantee}}

Note that we use the following lemma in this proof:
\begin{lemma}\label{thm: gaussian tradeoff function lemma}(Lemma A.2. in \citep{kim2022differentially}). Let $\mu_{1},\mu_{2} \in \mathbb{R}^{d}$ be arbitrary and $\Sigma$ be a $d \times d$ symmetric positive-definite matrix. Then,
\begin{equation*}
    T\left(N_{d}(\mu_{1},\Sigma),N_{d}(\mu_{2},\Sigma)\right) = G_{\Vert \Sigma^{-\frac{1}{2}}(\mu_{2}-\mu_{1}) \Vert_{2}}.
\end{equation*}
\end{lemma}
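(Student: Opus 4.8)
The plan is to prove the identity by reducing the $d$-dimensional testing problem $N_d(\mu_1,\Sigma)$ versus $N_d(\mu_2,\Sigma)$ to the one-dimensional problem $N(0,1)$ versus $N(\mu,1)$ with $\mu = \Vert \Sigma^{-1/2}(\mu_2-\mu_1)\Vert_2$, which equals $G_\mu$ by the definition of Gaussian DP. The central tool is the data-processing inequality for tradeoff functions, i.e.\ the distributional form of the post-processing property recalled in Section~\ref{sec: preliminary}: for any measurable map $\proc$ one has $T(\proc(P),\proc(Q)) \geq T(P,Q)$, with equality whenever $\proc$ is a bijection (apply $\proc$, then apply $\proc^{-1}$ to the pushforwards and combine the two inequalities). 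Hence any invertible affine reparametrization of the sample space leaves the tradeoff function unchanged.

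First I would whiten the covariance. Since $\Sigma$ is symmetric positive-definite, $\Sigma^{-1/2}$ exists and is a linear bijection, and it pushes $N_d(\mu_i,\Sigma)$ forward to $N_d(\Sigma^{-1/2}\mu_i, I_d)$; by invariance under bijections $T(N_d(\mu_1,\Sigma),N_d(\mu_2,\Sigma)) = T(N_d(\Sigma^{-1/2}\mu_1,I_d),N_d(\Sigma^{-1/2}\mu_2,I_d))$. Composing with the translation $x \mapsto x - \Sigma^{-1/2}\mu_1$ (again a bijection) reduces this to $T(N_d(0,I_d),N_d(v,I_d))$, where $v := \Sigma^{-1/2}(\mu_2-\mu_1)$. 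If $v=0$ then $\mu_1=\mu_2$, both laws coincide, and $T(N_d(0,I_d),N_d(0,I_d)) = \mathrm{Id} = G_0$, matching the claim; so assume $v \neq 0$.

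Next I would align $v$ with a coordinate axis and discard the uninformative directions. Choose an orthogonal matrix $R$ with $Rv = \Vert v \Vert_2 e_1$; as $R$ is a bijection fixing $N_d(0,I_d)$ and sending $N_d(v,I_d)$ to $N_d(\Vert v \Vert_2 e_1, I_d)$, invariance gives $T(N_d(0,I_d),N_d(v,I_d)) = T(N_d(0,I_d),N_d(\Vert v \Vert_2 e_1,I_d))$. Both laws are products in which coordinates $2,\dots,d$ are standard normal under each hypothesis while only the first coordinate differs. Writing each law as a product measure and applying the tensor-product (composition) rule of Section~\ref{sec: preliminary} with the common factor on coordinates $2,\dots,d$ contributing $T(P_0,P_0)=\mathrm{Id}$, together with the fact that tensoring with $\mathrm{Id}$ is the identity operation, the tradeoff function equals that of the first coordinate alone, namely $T(N(0,1),N(\Vert v \Vert_2,1)) = G_{\Vert v \Vert_2}$. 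Chaining the equalities yields the claim, with $\Vert v \Vert_2 = \Vert \Sigma^{-1/2}(\mu_2-\mu_1)\Vert_2$.

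The main obstacle is upgrading each reduction from the data-processing inequality ($\geq$) to an equality. For the affine steps this is immediate from invertibility, but the final marginalization requires an honest argument that coordinates $2,\dots,d$ carry no information. I would secure this either through the identity $f \otimes \mathrm{Id} = f$ invoked above, or, equivalently, by observing that the likelihood ratio between $N_d(0,I_d)$ and $N_d(\Vert v \Vert_2 e_1,I_d)$ depends on $x$ only through $x_1$, so by the Neyman--Pearson lemma the first coordinate is a sufficient statistic and marginalizing onto it is lossless; both formalizations express that appending an identically-distributed ancillary component does not change the optimal testing tradeoff.
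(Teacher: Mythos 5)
Your proposal is correct, but note that the paper itself never proves this statement: it is imported verbatim as Lemma A.2 of \citep{kim2022differentially} and used as a black box in the proof of Proposition \ref{thm: gaussian mech guarantee}, so there is no in-paper argument to compare against. What you have written is a complete, self-contained derivation, and each step is sound: the data-processing inequality $T(\proc(P),\proc(Q)) \geq T(P,Q)$ upgrades to equality for the invertible affine maps you use (whitening by $\Sigma^{-1/2}$, translation, and rotation, all of which have measurable inverses), and the final marginalization is the only step where a na\"ive appeal to post-processing would lose the equality. You handle it correctly in two interchangeable ways: via the identity $f \otimes \mathrm{Id} = f$ applied to the product decomposition $N_d(\Vert v \Vert_2 e_1, I_d) = N(\Vert v \Vert_2,1)\times N_{d-1}(0,I_{d-1})$ together with the paper's tensor-product rule, or via sufficiency, since the likelihood ratio depends on $x$ only through $x_1$, so Neyman--Pearson optimal tests are functions of $x_1$ alone. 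The degenerate case $v=0$ is also covered ($G_0 = \mathrm{Id}$). This is essentially the standard argument one would expect to find in the cited source; including it would make the paper's appendix self-contained at the cost of about a paragraph.
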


\begin{proof}
    We first show that the intermediary Gaussian mechanism 
\begin{center}
$M_{I}(X) = \phi(X) + e,$    
\end{center}
where $e \sim N_{d}\left(0,\left(\frac{\Delta_{2}(S_{Semi})}{\mu}\right)^{2}I_{d}\right)$ satisfies $(\mathcal{D}_{t},A_{a(t)},G_\mu)$-DP. 
     
To this end, take any pair of databases $X,X' \in \mathcal{D}_{t}$ such that $A_{a(t)}(X,X')=1$. Then $M_{I}(X) \sim N_{d}\left(\phi(X),\left(\frac{\Delta_{2}(S_{Semi})}{\mu}\right)^{2}I_{d}\right)$ and $M_{I}(X') \sim N_{d}\left(\phi(X'),\left(\frac{\Delta_{2}(S_{Semi})}{\mu}\right)^{2}I_{d}\right)$. By Lemma \ref{thm: gaussian tradeoff function lemma},
    \begin{equation*}
        T(M_{I}(X),M_{I}(X')) = G_{\Vert \phi(X) - \phi(X') \Vert_{2}\frac{\mu}{\Delta_{2}(S_{Semi})}}.
    \end{equation*}
    Since $0 \leq a \leq b$ if and only if $G_{a} \geq G_{b}$,
    \begin{equation*}
        \inf_{X,X': A(X,X')=1}T(M_{I}(X),M_{I}(X')) = \inf_{u \in S}G_{\Vert u \Vert_{2}\frac{\mu}{\Delta_{2}(S_{Semi})}} = G_{\sup_{u \in S}\Vert u \Vert_{2}\frac{\mu}{\Delta_{2}(S_{Semi})}} = G_{\mu}.
    \end{equation*}
    Therefore, $M_{I}(X)$ is $(\mathcal{D}_{t},A_{a(t)},G_\mu)$-DP.

    Now let $P$ be a projection matrix that projects onto $\Span(S)$. Then we can decompose $\phi(X)$ as $\phi(X) = P\phi(X) + (I-P)\phi(X)$. 
    
    Note that $PM_{I}(X) = P\phi(X) + Pe$ satisfies $(\mathcal{D}_{t},A_{a(t)},G_\mu)$-DP as well, by post-processing, where $Pe \sim N_{d}\left(0, (\Delta_{2}(S_{Semi}/\mu)^{2}P\right)$ so that $Pe \stackrel{d}{=} N$ in Algorithm \ref{alg: gaussian mechanism}. 

    Moreover, an additional transformation on $PM_{I}(X)$ by adding $(I-P)\phi(X)$ also satisfies $(\mathcal{D}_{t},A_{a(t)},G_\mu)$-DP by Lemma \ref{thm: orthogonal complement indep of data} that shows that $(I-P)\phi(X)$ is a data-independent constant. Altogether, $PM_{I}(X) + (I-P)\phi(X) = P\phi(X) + (I-P)\phi(X) + N = \phi(X) + N$ satisfies $(\mathcal{D}_{t},A_{a(t)},G_\mu)$-DP, as desired.
 
\end{proof}

\subsection{Proof for Lemma \ref{thm: convex hull norm ball}}

\begin{proof}
    Note that a set $K$ is a norm ball in $\mathcal{S}$ if $K \in \mathcal{S}$ is 1) convex, 2) bounded 3) symmetric about zero: if $u \in K,$ then $-u \in K$, and 4) absorbing: $\forall u \in \mathcal{S}, \exists c >0$ such that $u \in cK$.

    Note that $K = \hull(S)$ is convex by the definition of convex hull and the boundedness condition is straightforward as $S$ is assumed to be bounded. In addition, symmetry is clear from the symmetry of adjacency relation in DP.

To see the absorbing property, take any $v \in \mathcal{S}$. Since $S$ spans $\mathcal{S}$, we can write $v = \sum_{i=1}^{m}a_{i}s_{i}$, where $m = \vert \mathcal{S} \vert < \infty$.

Now, let $n$ be the number of nonnegative $a_{i}'s$ and rearrange the summation by the sum of nonnegative coefficients and negative coefficients:
\begin{equation*}
\begin{split}
    v &= \sum_{i=1}^{n}a_{i}s_{i} + \sum_{j=n+1}^{m}a_{j}s_{j} \\
    &= \sum_{i=1}^{n}a_{i}s_{i} + \sum_{j=n+1}^{m}\vert a_{j}\vert (-s_{j}) \\
    &= \left(\sum_{i=1}^{n}a_{i}\right)\sum_{i=1}^{n}\frac{a_{i}}{\sum_{i=1}^{n}a_{i}}s_{i} + \left(\sum_{j=n+1}^{m}\vert a_{j}\right) \sum_{j=n+1}^{m}\frac{\vert a_{j} \vert}{\sum_{j=n+1}^{m}\vert a_{j} \vert}(-s_{j}).
\end{split}
\end{equation*}
We see that $h_{1} := \sum_{i=1}^{n}\frac{a_{i}}{\sum_{i=1}^{n}a_{i}}s_{i}$ and $h_{2} := \sum_{j=n+1}{m}\frac{\vert a_{j} \vert}{\sum_{j=n+1}^{m}\vert a_{j} \vert}(-s_{j}) $ are both convex combinations of $S$. Now letting $k = \frac{1}{\sum_{i=1}^{n}a_{i}+\sum_{j=n+1}^{m}\vert a_{j} \vert}$, then 
\begin{equation*}
    kv = \frac{\sum_{i=1}^{n}a_{i}}{\sum_{i=1}^{n}a_{i}+\sum_{j=n+1}^{m}\vert a_{j} \vert}h_{1} + \frac{\sum_{j=n+1}^{m}\vert a_{j} \vert}{\sum_{i=1}^{n}a_{i}+\sum_{j=n+1}^{m}\vert a_{j} \vert}h_{2},
\end{equation*}
so $kv$ is a convex combination of $h_{1}$ and $h_{2}$. Thus, $kv \in \hull(S)$.

Therefore, $v \in c\hull(S)$, where $c = 1/k$ and we have that the norm ball is absorbing.
\end{proof}

\subsection{Proof for Theorem \ref{thm: semi dp K norm mech}}

\begin{proof}
We first prove the privacy guarantee and then prove the optimality. To begin with, consider the following decomposition of $\phi(X)$:

\begin{equation*}
    \phi(X) = \Proj_{\mathcal{S}}\phi(X) + \Proj_{\mathcal{S}}^{\perp}\phi(X).
\end{equation*}

Since $\mathcal{S}$ is a subspace of $\mathbb{R}^{d}$ of dimension $s$, it is isomorphic to $\mathbb{R}^{s}$. For an isomorphism $\theta: \mathcal{S} \rightarrow \mathbb{R}^{s}$, consider $\theta(\Proj_{\mathcal{S}}\phi(X)) \in \mathbb{R}^{s}$. It is clear that a mechanism $\theta(\Proj_{\mathcal{S}}\phi(X)) +V$ satisfies $(\mathcal{D},A,f_{\epsilon})$-DP. Then the privacy guarantee for $\phi(X) + \theta^{-1}(V)$ is by applying the post-processing twice: First, the post-processing $\Proj_{\mathcal{S}}\phi(X) + \theta^{-1}(V) = \theta^{-1}(\theta(\Proj_{\mathcal{S}}\phi(X)) +V)$ satisfies $(\mathcal{D},A,f_{\epsilon})$-DP as well. Moreover, $\Proj_{\mathcal{S}}^{\perp}\phi(X)$ is a data-independent constant for all $X$ by Lemma \ref{thm: orthogonal complement indep of data} so post-processing ensures again that $\phi(X) + \theta^{-1}(V)= \Proj_{\mathcal{S}}\phi(X) + \Proj_{\mathcal{S}^{\perp}}\phi(X) +\theta^{-1}(V)$ is $(\mathcal{D},A,f_{\epsilon})$-DP.

To demonstrate the optimality of our \(K\)-norm mechanism, we begin by establishing the minimality of the norm ball in the subspace \(\mathcal{S} = \text{span}(S)\) and then extend this result to the space \(\mathbb{R}^s\) using the isomorphism \(\theta: \mathcal{S} \rightarrow \mathbb{R}^s\). In \(\mathcal{S}\), the sensitivity space \(S\) is bounded and rank-deficient, with rank \(s < d\). The convex hull \(K = \hull(S)\) forms the minimal convex set containing \(S\). Since \(K\) is a valid norm ball in \(\mathcal{S}\) by Lemma \ref{thm: convex hull norm ball}, it is the smallest convex set that captures the variations of the sensitivity space. Consequently, any other norm ball in \(\mathcal{S}\) that contains \(S\) must also contain \(K\), making \(\Delta_{K} \cdot K\) the minimal norm ball with respect to the containment order among all norm balls in \(\mathcal{S}\). This minimality ensures that the K-norm mechanism defined by \(\Vert \cdot \Vert_{K}\) is optimal in \(\mathcal{S}\), as it minimizes the noise required to maintain the privacy guarantees by aligning the noise precisely with the sensitivity space.

The isomorphism \(\theta\) between \(\mathcal{S}\) and \(\mathbb{R}^s\) is a linear bijection that preserves convexity, containment, and other geometric properties. By applying \(\theta\) to the convex hull \(K\), we obtain the set \(K_s = \theta(K)\) in \(\mathbb{R}^s\). Since \(\theta\) preserves the containment order, the minimality of \(K\) in \(\mathcal{S}\) directly implies the minimality of \(K_s\) in \(\mathbb{R}^s\). Therefore, the norm ball \(\Delta_{s} \cdot K_s\) is the smallest convex set containing \(\theta(S)\) in \(\mathbb{R}^s\). For any other norm \(\Vert \cdot \Vert_{H}\) in \(\mathbb{R}^s\) with norm ball \(\Delta_H \cdot H\), we have \(\Delta_{s} \cdot K_{s} \subseteq \Delta_H \cdot H\), indicating that \(\Delta_{s} \cdot K_{s}\) optimally contains \(\theta(S)\).
\end{proof}

\subsection{Proof of Theorem \ref{thm: table sens space}}

\begin{proof}
Without loss of generality, consider an individual \( x \) in the dataset with features \( A_1 = i \) and \( A_2 = j \), who attempts to pretend to be another individual \( y \), where \( A_1 = l \) and \( A_2 = k \). For the adversary to not detect the change, the feature combination \( (l, k) \) must correspond to a non-zero count in the contingency table. Specifically, this means \( \#\{i: X_1^{(i)} = l\} \neq 0 \) and \( \#\{i: X_2^{(i)} = k\} \neq 0 \). If these conditions are not satisfied, the adversary can immediately detect that \( x \) is lying, as such an individual \( y \) cannot exist based on the true counts provided by the invariant.

In the case where \( x_{lk} > 0 \), replacing \( x \) with \( y \) and simultaneously switching one of the $x_{lk}$ individuals in the dataset having features \( A_1 = l \) and \( A_2 = k \) with another individual having the feature \( A_1 = i \) and \( A_2 = j \) results in no net change in the contingency table. Specifically, such a switch preserves both the row and column sums, ensuring that the one-way margins remain unchanged. In this case, the number of changes is 2 and the corresponding element for the sensitivity space is \( \mathbf{0} \).

The worst-case scenario, which aligns with the upper bound on \( a(t) = 3 \), occurs when \( x_{lk} = 0 \). In this situation, to maintain the one-way margins, one individual from the category \( A_1 = l \) and \( A_2 = j \) must be replaced with an individual with \( A_1 = i \) and \( A_2 = j \). Similarly, one individual from the category \( A_1 = i \) and \( A_2 = k \) must be replaced with another individual having \( A_1 = i \) and \( A_2 = j \). It is important to note that any other choice of switching results in more than 3 changes, so this switching process is the minimal one required to maintain the one-way margins. The resulting changes in the contingency table are as follows: \( x_{ij} \) increases by 1, \( x_{ik} \) decreases by 1, \( x_{lj} \) decreases by 1, and \( x_{lk} \) increases by 1.

Alternatively, if we consider an individual \( x \) with features \( A_1 = i \) and \( A_2 = k \), attempting to pretend to be \( y \) with features \( A_1 = l \) and \( A_2 = j \), a similar process occurs. By switching individuals as outlined above, the resulting changes in the contingency table are \( x_{ij} - 1 \), \( x_{ik} + 1 \), \( x_{lj} + 1 \), and \( x_{lk} - 1 \).

Consequently, we conclude that the sensitivity space is 
\[ S_{Semi} = \left\{\mathbf{v}_{ijlk} \in \mathbb{R}^{r \times c} : i, l \in \{1, \ldots, r\}, \; j, k \in \{1, \ldots, c\}, \; i \neq l, \; j \neq k \right\} \cup \{\mathbf{0}\}. \]
\end{proof}

\subsection{Proof of Theorem \ref{thm: semi dp umpu}}
To prove Theorem \ref{thm: semi dp umpu}, we recall the neyman structure and its connection to unbiased tests.

\begin{definition}\label{def: neyman structure} (Definition 4.120 of \cite{schervish2012theory}). For a parameter space $\Omega$, let $G \subset \Omega$. If $T$ is a sufficient statistic for $G$, then a test $\phi$ has Neyman structure relative to $G$ and $T$ if $\mathbb{E}_{w}[\phi(X) \vert T=t]$ is constant in $t$ for all $w \in G$. 
\end{definition}

\begin{lemma}\label{thm: neyman umpu schervish} (Theorem 4.123 of \cite{schervish2012theory}). Let $\Omega = \Omega_{0} \cup \Omega_{1}$ be a partition. Let $G = \Bar{\Omega}_{0} \cap \Bar{\Omega}_{1}$, where $\Bar{\Omega}_{0}$ is the closure of $\Omega$. Let $T$ be a boundedly complete sufficient statistic for $G$. Assume that the power function is continuous. If there is a UMPU level $\alpha$ test $\phi$ among those which have Neyman structure relative to $G$ and $T$, then $\phi$ is UMPU level $\alpha$.
\end{lemma}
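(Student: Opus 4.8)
The plan is to prove this classical characterization via the standard three-step chain: every unbiased level $\alpha$ test is similar on the boundary $G$, every similar test has Neyman structure by bounded completeness, and consequently a test that is optimal within the Neyman-structure class must dominate every unbiased competitor. \textbf{Step 1 (unbiasedness implies similarity on $G$).} I would first show that any unbiased level $\alpha$ test $\psi$ satisfies $\mathbb{E}_{w}[\psi] = \alpha$ for every $w \in G = \overline{\Omega}_{0} \cap \overline{\Omega}_{1}$. The level and unbiasedness conditions give $\mathbb{E}_{w}[\psi] \leq \alpha$ on $\Omega_{0}$ and $\mathbb{E}_{w}[\psi] \geq \alpha$ on $\Omega_{1}$. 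A boundary point $w \in G$ is simultaneously a limit of points in $\Omega_{0}$ and of points in $\Omega_{1}$, so the continuity assumption on the power function forces both one-sided bounds to hold at $w$, yielding $\mathbb{E}_{w}[\psi] = \alpha$. Thus every unbiased level $\alpha$ test is similar of size $\alpha$ on $G$.

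\textbf{Step 2 (similarity is equivalent to Neyman structure).} Next I would use bounded completeness of $T$ to identify similar tests with Neyman-structure tests. Given a similar test $\psi$, set $h(t) = \mathbb{E}_{w}[\psi \mid T = t]$, which is well-defined and independent of $w \in G$ precisely because $T$ is sufficient for $G$. Then $\mathbb{E}_{w}[h(T) - \alpha] = \mathbb{E}_{w}[\psi] - \alpha = 0$ for all $w \in G$, and since $h(T) - \alpha$ is bounded and measurable, bounded completeness yields $h(T) = \alpha$ almost surely; that is, $\psi$ has Neyman structure relative to $G$ and $T$. The converse is immediate, since a Neyman-structure test satisfies $\mathbb{E}_{w}[\psi] = \mathbb{E}_{w}[h(T)] = \alpha$ on $G$ and so is similar. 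Combining with Step 1, every unbiased level $\alpha$ test has Neyman structure.

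\textbf{Step 3 (transfer of optimality).} Finally, let $\phi$ be the UMPU level $\alpha$ test among those with Neyman structure, as assumed. I first confirm that $\phi$ is genuinely unbiased: the constant test $\psi_{0} \equiv \alpha$ has Neyman structure, is itself unbiased with power identically $\alpha$, and therefore lies in the comparison class, so optimality of $\phi$ gives $\mathbb{E}_{w}[\phi] \geq \alpha$ for all $w \in \Omega_{1}$; together with the level constraint $\mathbb{E}_{w}[\phi] \leq \alpha$ on $\Omega_{0}$, this shows $\phi$ is unbiased. Now take any unbiased level $\alpha$ test $\psi$; by Steps 1 and 2 it belongs to the Neyman-structure class, so by hypothesis $\mathbb{E}_{w}[\phi] \geq \mathbb{E}_{w}[\psi]$ for all $w \in \Omega_{1}$. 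Since $\psi$ was an arbitrary unbiased level $\alpha$ competitor, $\phi$ is UMPU level $\alpha$.

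I expect the main obstacle to be the rigorous execution of Step 2: one must verify that the conditional expectation $h(T)$ does not depend on the parameter $w$, which is exactly the content of sufficiency of $T$ for the family indexed by $G$, and then apply bounded completeness to the bounded statistic $h(T) - \alpha$ to conclude it vanishes almost surely. Step 1 is delicate only in that it relies essentially on the continuity hypothesis to promote the inequalities on $\Omega_{0}$ and $\Omega_{1}$ to an equality on the shared boundary $G$; once continuity is invoked, the approximation argument from both sides is routine. Step 3 then follows formally from the set inclusion established in Steps 1--2, with the only care needed being the verification that $\phi$ is itself unbiased before declaring it UMPU.
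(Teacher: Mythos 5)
Your proof is correct. The paper does not prove this lemma at all---it is invoked purely as a citation to Theorem 4.123 of \cite{schervish2012theory}---and your three-step argument (unbiasedness plus continuity of the power function forces similarity on the boundary $G$; sufficiency and bounded completeness of $T$ upgrade similarity to Neyman structure; optimality then transfers, with the constant test $\psi_0 \equiv \alpha$ certifying that $\phi$ is itself unbiased) is precisely the standard classical proof underlying that citation, with no gaps.
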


\begin{lemma}\label{thm: neyman umpu size alpha} Consider the $2 \times 2$ table and the hypothesis $H_{0}$: $w \leq 1$ vs. $H_{1}:$ $w >1$. Let $\Phi$ be a set of tests. If there exists a UMP $\phi \in \Phi$ among those with 
\begin{equation*}
\mathbb{E}_{H \sim \text{Hyper}(t_{2 \cdot},t_{1 \cdot},t_{\cdot 1})}\phi(t_{\cdot 1}-H, t_{1\cdot}-t_{\cdot 1}+H,H,t_{2\cdot}-H) = \alpha,
\end{equation*}
for all $\alpha$, then $\phi$ is UMPU size $\alpha$ among $\Phi$.
\end{lemma}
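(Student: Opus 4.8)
The plan is to recognize the testing problem as a one-sided test about the interest parameter $\log w$ in a two-parameter exponential family and then to invoke the Neyman-structure reduction of Lemma \ref{thm: neyman umpu schervish}. First I would write the conditional product-binomial model $x_{11}\sim\text{Binom}(t_{1\cdot},\theta_1)$ and $x_{21}\sim\text{Binom}(t_{2\cdot},\theta_2)$, with the row totals fixed, as an exponential family with natural statistics $(x_{11},x_{21})$ and natural parameters $(\eta_1,\eta_2)=(\log\frac{\theta_1}{1-\theta_1},\log\frac{\theta_2}{1-\theta_2})$. Reparametrizing via $\psi=\eta_2-\eta_1=\log w$ and nuisance $\lambda=\eta_1$ gives $\eta_1 x_{11}+\eta_2 x_{21}=\lambda\,t_{\cdot 1}+\psi\,x_{21}$, where $t_{\cdot 1}=x_{11}+x_{21}$ is the column total. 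Thus $\psi=\log w$ is the interest parameter, the hypotheses $H_0: w\le 1$ vs $H_1: w>1$ become $H_0:\psi\le 0$ vs $H_1:\psi>0$, the boundary is $G=\{\psi=0\}=\{w=1\}$, and $T=t_{\cdot 1}$ (equivalently, the full margin vector given the fixed row totals) is the sufficient statistic for $\lambda$ on $G$.

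Next I would verify the two hypotheses required by Lemma \ref{thm: neyman umpu schervish}. For bounded completeness, note that on $G$ the common value $\theta=\theta_1=\theta_2$ yields $T=x_{11}+x_{21}\sim\text{Binom}(t_{1\cdot}+t_{2\cdot},\theta)$, and the binomial family on finite support over an open interval of $\theta$ is a full-rank one-parameter exponential family, hence complete and \emph{a fortiori} boundedly complete. Continuity (indeed analyticity) of the power function $\eta\mapsto\mathbb{E}_{\eta}\phi$ is the standard smoothness property of integrals against exponential-family densities. With these in hand, Lemma \ref{thm: neyman umpu schervish} reduces the search for a UMPU test to the class of tests with Neyman structure relative to $G$ and $T$ (Definition \ref{def: neyman structure}), i.e. those with $\mathbb{E}_{w=1}[\phi\mid T]=\alpha$.

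I would then translate the Neyman-structure condition into the explicit form displayed in the lemma. Under $w=1$ the conditional law of the table given the margins is the \emph{central} hypergeometric: writing $H=x_{21}$, one has $H\mid T\sim\text{Hyper}(t_{2\cdot},t_{1\cdot},t_{\cdot 1})$ with the remaining cells determined as $(t_{\cdot 1}-H,\;t_{1\cdot}-t_{\cdot 1}+H,\;H,\;t_{2\cdot}-H)$, so $\mathbb{E}_{w=1}[\phi\mid T]=\alpha$ is exactly the stated condition. Restricting every class above to the given family $\Phi$, the implication ``unbiased and level $\alpha$ $\Rightarrow$ similar on $G$ $\Rightarrow$ Neyman structure'' holds test-by-test, so the unbiased level-$\alpha$ tests in $\Phi$ are contained in the Neyman-structure tests in $\Phi$; a $\phi$ that is UMP among the latter therefore dominates every unbiased level-$\alpha$ competitor in $\Phi$.

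The main obstacle is the bookkeeping that turns ``UMP among Neyman-structure tests in $\Phi$'' into the UMPU-among-Neyman-structure premise of Lemma \ref{thm: neyman umpu schervish} and then into UMPU over all of $\Phi$. Since every Neyman-structure test has boundary size exactly $\alpha$, most-powerfulness among them already pins down the boundary behaviour; what remains is to confirm that $\phi$ is itself unbiased and of level $\alpha$, so it qualifies as an admissible competitor. This follows from the monotone likelihood ratio of the (non)central hypergeometric family in $H$ with respect to $w$: comparison with the constant test $\phi\equiv\alpha$, which has Neyman structure, forces the power of $\phi$ to be $\ge\alpha$ on $\{w>1\}$ and, by the same monotonicity, $\le\alpha$ on $\{w\le 1\}$. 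Combined with the domination of all unbiased level-$\alpha$ competitors in $\Phi$ established above, this yields that $\phi$ is UMPU size $\alpha$ among $\Phi$, as claimed.
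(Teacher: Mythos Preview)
Your proposal is correct and follows essentially the same approach the paper intends: the paper states Lemma \ref{thm: neyman umpu size alpha} without an explicit proof, treating it as a direct specialization of Lemma \ref{thm: neyman umpu schervish} (Schervish's Theorem 4.123) to the $2\times 2$ product-binomial model, and your argument supplies exactly the details left implicit there---the exponential-family reparametrization isolating $\psi=\log w$, bounded completeness of $T$ on the boundary $\{w=1\}$, and the translation of the Neyman-structure condition into the displayed central-hypergeometric expectation. Your extra paragraph handling the passage from ``UMP among Neyman-structure tests in $\Phi$'' to ``UMPU among $\Phi$'' (via comparison with the constant test and the MLR of the noncentral hypergeometric) is a genuine detail the paper glosses over; note that it tacitly uses that the constant test $\phi\equiv\alpha$ belongs to $\Phi$, which holds in the paper's application since any constant test trivially satisfies the Semi-DP constraint \eqref{equ: f-semi dp condition}.
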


The following lemma is an adaptation of the Neyman-Pearson lemma for \( f \)-DP testing, as introduced in \cite{awan2023canonical}, applied to our context of Semi-DP by clarifying dataspace and adjacency function.

\begin{lemma}\label{thm: dp np lemma} (Theorem 4.8 in \cite{awan2023canonical}).
Let \( f \) be a symmetric nontrivial tradeoff function, and let \( F \) be a CND for \( f \). Consider \( \mathcal{X} = \{0,1\} \) and \( \mathcal{D} = \mathcal{X}^{n} \). Let \( P \) and \( Q \) be two exchangeable distributions on \( \mathcal{X}^{n} \) with pmfs \( p \) and \( q \), such that \( \frac{q}{p} \) is an increasing function of \( x = \sum_{i=1}^{n} x_{i} \). Let \( \alpha \in (0,1) \). Given that \( |x - x'| \leq 1 \), where \( x \) and \( x' \) are from datasets \( X \) and \( X' \in \mathcal{D} \), under the adjacency function \( A \), the most powerful \( (\mathcal{X}^{n}, A, f) \)-DP test \( \phi \) with significance level \( \alpha \) for testing \( H_{0}: X \sim P \) vs. \( H_{1}: X \sim Q \) can be expressed in any of the following forms:
\begin{itemize}
    \item[1. ] There exists $y \in \{0,1,2,\cdots,n\}$ and $c \in (0,1)$ such that for all $x \in \{0,1,2,\cdots,n\},$
    \begin{equation*}
        \phi(x) = (1-f(1-\phi(x-1)))\mathbbm{1}(x>y) + c \mathbbm{1}(x=y),
    \end{equation*}
    where if $y>0$ then $c$ satisfies $c \leq 1-f(1)$, and $c$ and $y$ are chosen such that $\mathbb{E}_{P}\phi(x) = \alpha$. If $f(1)=1$, then $y=0$.
    \item[2. ] $\phi(x) = F(x-m),$ where $m \in \mathbb{R}$ is chosen such that $\mathbb{E}_{P}\phi(x)=\alpha$.
    \item[3. ] Let $N \sim F$. The variable $T = X+N$ satisfies $f$-DP. Then $p = \mathbb{E}_{X \sim P}F(X-T)$ is a $p$-value and $\mathbbm{1}(p \leq \alpha) \vert X = \mathbbm{1}(T \geq m) \vert X \sim \text{Bern}(\phi(X))$, where $\phi(x)$ agrees with $1$ and $2$ above.
\end{itemize}
\end{lemma}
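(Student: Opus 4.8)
The plan is to establish optimality by reducing the problem to a one-dimensional constrained optimization over functions of the sufficient statistic $x = \sum_i x_i$, then combining a monotone-likelihood-ratio Neyman--Pearson argument with the extremal properties of the CND $F$. Throughout I will use Lemma \ref{thm: f-dp test} to turn the $f$-DP requirement on the randomized test $\mathrm{Bern}(\phi(X))$ into the pointwise constraint $\phi(X) \le 1 - f(1 - \phi(X'))$ over adjacent pairs, and the CND properties of Definition \ref{def: cnd}.

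First I would reduce to tests depending only on $x$. Since $P$ and $Q$ are exchangeable, $x$ is sufficient, so for any candidate $\phi$ I pass to its symmetrization $g(x) = \mathbb{E}[\phi(X) \mid \sum_i X_i = x]$, equivalently the average of $\phi$ over all databases at level $x$. Exchangeability guarantees $\mathbb{E}_P g = \mathbb{E}_P \phi$ and $\mathbb{E}_Q g = \mathbb{E}_Q \phi$, so level and power are preserved. The delicate point is that $g$ must still satisfy the $f$-DP constraint. Because adjacent databases have sums differing by exactly one, the level-$x$ and level-$(x+1)$ databases form a regular bipartite adjacency graph; averaging each pointwise inequality $\phi(X) \le 1 - f(1 - \phi(X'))$ over a database's neighbors and then over the level, and using that $\alpha \mapsto 1 - f(1 - \alpha)$ is concave (as $f$ is convex), yields $g(x) \le 1 - f(1 - g(x \pm 1))$ by two applications of Jensen's inequality. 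This reduces the search to functions $g: \{0, \ldots, n\} \to [0,1]$ with the consecutive-value constraints $g(x) \le 1 - f(1 - g(x \pm 1))$.

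Next I would solve the finite program: maximize $\sum_x q_n(x) g(x)$ subject to $\sum_x p_n(x) g(x) = \alpha$ and the DP staircase constraints, where $p_n, q_n$ denote the pmfs of $x$ under $P, Q$. Since $q/p$ is increasing, the likelihood ratio $q_n/p_n$ is increasing in $x$, so a standard exchange argument shows that shifting mass of $g$ toward larger $x$ never hurts; hence an optimal $g$ is nondecreasing. For nondecreasing feasible $g$, raising any value only helps the objective until the binding upward constraint $g(x) = 1 - f(1 - g(x-1))$ is met, forcing the optimum to \emph{saturate} that constraint above a threshold $y$, take an interpolating value $c$ at $y$, and be pinned by symmetry below $y$, with $c, y$ tuned so that $\mathbb{E}_P g = \alpha$. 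This is exactly form 1.

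Finally I would identify this saturating test with the CND forms. Combining CND properties 2 and 3 gives $f(\alpha) = F(F^{-1}(\alpha) - 1)$; together with the symmetry $F(-t) = 1 - F(t)$ this yields the identity $F(x - m) = 1 - f(1 - F(x - 1 - m))$ for every $x$, so $g(x) = F(x - m)$ meets the upward DP constraint with equality everywhere and, being a CDF, is nondecreasing. Hence it coincides with the extremal test of form 1 once $m$ is tuned to level $\alpha$, establishing form 2. Form 3 then follows by viewing $T = X + N$ with $N \sim F$ as an $f$-DP mechanism (CND property 1, using that $x$ has sensitivity one), and checking through the threshold structure of $F$ that $p = \mathbb{E}_{X \sim P} F(X - T)$ is a valid $p$-value whose rejection event $\{p \le \alpha\}$ reproduces $\phi$ as a post-processing. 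The main obstacle I anticipate is the sufficiency reduction: verifying that symmetrization preserves the $f$-DP constraint requires carefully matching the conditional distributions of adjacent databases across the regular bipartite structure and exploiting the concavity of $1 - f(1 - \cdot)$, rather than the routine preservation of level and power.
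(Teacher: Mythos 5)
First, a point of reference: this paper contains no proof of this lemma at all --- it is quoted verbatim as Theorem 4.8 of \cite{awan2023canonical} and used purely as an external tool in the proof of Theorem \ref{thm: semi dp umpu}, so your attempt has to be judged against the proof in that reference (which in turn rests on the DP Neyman--Pearson lemma of \cite{awan2018differentially}). Your overall architecture matches that proof: reduce by exchangeability to tests of $x=\sum_i x_i$, solve the one-dimensional constrained program to obtain the recursive form 1, then use the CND properties to identify $F(x-m)$ as the saturating solution (form 2) and $T=X+N$ as its $p$-value representation (form 3). Moreover, two of your steps are correct and nicely executed: the symmetrization argument is valid (the biregular count $\binom{n}{x}(n-x)=(x+1)\binom{n}{x+1}$ together with concavity of $\alpha\mapsto 1-f(1-\alpha)$ really does give $g(x)\le 1-f(1-g(x\pm 1))$, so the constraint from Lemma \ref{thm: f-dp test} survives averaging), and the identity $F(x-m)=1-f\bigl(1-F(x-1-m)\bigr)$ follows from properties 2--4 of Definition \ref{def: cnd} exactly as you compute.

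The genuine gap is your step 2, which you dispose of as ``a standard exchange argument''; it is precisely the nonstandard part, and both moves you sketch fail as stated. First, ``shifting mass of $g$ toward larger $x$'' to conclude monotonicity can destroy feasibility: the DP constraints are not pointwise but couple \emph{adjacent} values of $g$, so after rearranging, each value must satisfy constraints against its new neighbors, which does not follow from the old constraints. Second, ``raising any value until the upward constraint binds'' violates the level constraint $\mathbb{E}_P g=\alpha$, so it cannot force saturation. The missing idea (the core of \cite{awan2018differentially}, carried over in \cite{awan2023canonical}) runs in the opposite direction: one \emph{defines} the candidate $\phi^*$ of form 1 with $(y,c)$ tuned by an intermediate-value argument so that $\mathbb{E}_P\phi^*=\alpha$, and then proves a single-crossing property against an arbitrary feasible competitor $\phi$ of level at most $\alpha$. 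Because $\phi^*$ saturates $\phi^*(x)=1-f\bigl(1-\phi^*(x-1)\bigr)$ for $x>y$, the constraint $\phi(x)\le 1-f\bigl(1-\phi(x-1)\bigr)$, read backwards, shows that if $\phi(x_0)\ge\phi^*(x_0)$ at some $x_0\ge y$ then $\phi(x)\ge\phi^*(x)$ for all $y\le x\le x_0$, and trivially $\phi\ge 0=\phi^*$ below $y$; hence $\phi-\phi^*$ changes sign at most once, from nonnegative to nonpositive, and the monotone likelihood ratio then gives $\mathbb{E}_Q\phi^*\ge\mathbb{E}_Q\phi$. Without this crossing argument your derivation of form 1 does not go through. (Minor additional error: below $y$ the optimal test is pinned at $0$, not ``by symmetry.'')
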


\begin{proof}{\bf of Theorem \ref{thm: semi dp umpu}}.
We rely on Lemma \ref{thm: neyman umpu schervish} to establish the existence of a uniformly most powerful unbiased (UMPU) test among $\Phi_{f}^{\text{semi}}$. By Lemma \ref{thm: neyman umpu schervish}, we know that if a test has Neyman structure and is UMPU at level \( \alpha \) among a set of unbiased tests, then it is the UMPU level \( \alpha \) test. In our case, the parameter of interest \( w \) is the odds ratio, and we partition the parameter space as \( \Omega = \Omega_0 \cup \Omega_1 \), where \( \Omega_0 \) represents the null hypothesis region \( w \leq 1 \) and \( \Omega_1 \) represents the alternative hypothesis \( w > 1 \). Therefore, by Lemma \ref{thm: neyman umpu schervish}, it is sufficient to consider tests which have Neyman structure relative to $w=1$ and $t = (t_{1\cdot},t_{2\cdot},t_{\cdot 1},t_{\cdot 2}) = (x_{11}+x_{12},x_{21}+x_{22},x_{11}+x_{21},x_{12}+x_{22})$. That is, we only consider the tests among $\Phi_{f}^{\text{semi}}$. 

To apply Lemma \ref{thm: dp np lemma}, note that $x_{11}$ is the count and $x_{21}$ can  be differ by $1$ for adjacent databases $X,X'$ satisfying that $A_{a(t)}(X,X')=1$, guaranteed by Equation \ref{equ: f-semi dp condition}. For the tradeoff function $f$, by Lemma \ref{thm: dp np lemma}, there exists a most powerful $\Psi_{f}$ test for $H_{0}: w=1$ vs. $H_{1}: w=w_{1}(>1)$, which is of the form 
\[
\psi_{z}^{\ast}(x)=F_{f}(x-m(t)),
\]
$\psi_{z}^{\ast}(x)=F_{f}(x-m(t))$, where $m(t)$ is chosen such that $\mathbb{E}_{H \sim \text{Hyper}(t_{2\cdot},t_{1\cdot},t_{\cdot 1},1)}F_{f}(H-m) = \alpha$. Since this test does not depend on the specific alternative, it is UMP for $H_{0}: w \leq 1$ vs. $H_{1}: w >1$. The $p$-value also followes from Lemma \ref{thm: dp np lemma}.
\end{proof}

\section{Details on Simulation}\label{app: details on simulation}
In this section, we provide implement details on $\ell_{1},\ell_{2},\ell_{\infty}$ and $K$-norm mechanisms in Section \ref{sec: contingency numerical}. 

\begin{algorithm}[htb!]
\caption{\texttt{Sampling from $\ell_{1}$-mechanism} }\label{alg: l1 mech}
\begin{algorithmic}[1]
    \STATE \textbf{Input:} Query $\phi$ to be privatized, $\Delta_{1}(S),$ and $\epsilon$
    \STATE Set $d = \text{dim}(\phi(X))$
    \STATE Draw $V_{j} \stackrel{iid}{\sim} \text{Laplace}\left(\frac{\Delta_{1}(S)}{\epsilon}\right)$, for $j=1,\cdots,d$
    \STATE Set $V = (V_{1},\cdots, V_{d})^\top$
    \STATE \textbf{Output: } $\phi(X) + V$
\end{algorithmic}    
\end{algorithm}

\begin{algorithm}[htb!]
\caption{\texttt{Sampling from $\ell_{2}$-mechanism} \citep{wang2014entropy}}\label{alg: l2 mech}
\begin{algorithmic}[1]
    \STATE \textbf{Input:} Query $\phi$ to be privatized, $\Delta_{2}(S),$ and $\epsilon$
    \STATE Set $d = \text{dim}(\phi(X))$
    \STATE Draw $Z \sim N(0, I_{d})$
    \STATE Draw $r \sim \text{Gamma}(\alpha = d, \beta = \epsilon/\Delta_{2}(S))$
    \STATE Set $V = \frac{rZ}{\Vert Z \Vert_{2}}$
    \STATE \textbf{Output: } $\phi(X) + V$
\end{algorithmic}    
\end{algorithm}

\begin{algorithm}[htb!]
\caption{\texttt{Sampling from $\ell_{\infty}$-mechanism }\citep{steinke2016between}}\label{alg: linfty mech}
\begin{algorithmic}[1]
    \STATE \textbf{Input:} Query $\phi$ to be privatized, $\Delta_{\infty}(S),$ and $\epsilon$
    \STATE Set $d = \text{dim}(\phi(X))$
    \STATE Set $U_{j} \stackrel{iid}{\sim} U(-1,1)$ for $j=1,\cdots,d$
    \STATE Draw $r \sim \text{Gamma}(\alpha = d+1, \beta = \epsilon/\Delta_{\infty}(S))$
    \STATE Set $V = r \cdot (U_{1},\cdots, U_{d})^\top$
    \STATE \textbf{Output: } $\phi(X) + V$
\end{algorithmic}    
\end{algorithm}

\vskip 0.2in
\bibliography{JMLRtemplate/ref}
\end{document}